\documentclass[a4paper,12pt,american]{article}
\usepackage{amsmath,amssymb,amsfonts,mathrsfs,accents, kpfonts,bm} 
\usepackage{arydshln}
\usepackage{enumerate} 
\usepackage[T1]{fontenc} 
\usepackage{palatino} 
\usepackage[utf8]{inputenc} 
\usepackage[kerning]{microtype} 
\usepackage[babel]{csquotes} 
\usepackage{booktabs} 
\usepackage[table]{xcolor}
\usepackage{lscape} 
\usepackage{multirow}
\usepackage[labelfont=bf]{caption} 
\usepackage[margin=1in]{geometry}
\usepackage{babel} 
\usepackage{amsthm} 

\newtheorem{theorem}{Theorem}
\newtheorem{lemma}{Lemma}
\newtheorem{corollary}{Corollary}
\newtheorem{prop}{Proposition}

\usepackage[colorlinks,citecolor=blue,urlcolor=burgundy,linkcolor=burgundy]{hyperref} 
\usepackage{booktabs, subcaption} 
\usepackage[bottom]{footmisc}
\usepackage[round,sort&compress,sectionbib]{natbib}
\setcitestyle{citesep={;}}
\usepackage{mathtools}
\bibliographystyle{dcu}
\usepackage{courier}
\definecolor{burgundy}{rgb}{0.5, 0.0, 0.13}
\usepackage{arydshln}
\usepackage{graphicx}
\usepackage{comment}
\clubpenalty=10000
\widowpenalty=10000
\usepackage{setspace}
\onehalfspacing
\usepackage{tikz}
\usetikzlibrary{calc}
\usepackage{pgfplots}
\usepgfplotslibrary{fillbetween}
\usetikzlibrary{patterns}
\pgfplotsset{compat=1.11}
\usepackage{float}
\usepackage{soul}
\usepgfplotslibrary{fillbetween}
\setlength{\parskip}{5pt plus 1pt minus 1pt}
\definecolor{myblue}{rgb}{0.06, 0.2, 0.65}
\definecolor{burgundy}{rgb}{0.5, 0.0, 0.13}
\definecolor{myred}{rgb}{0.6, 0, 0.13}

\usepackage[]{appendix}
\usepackage{scrwfile}
\TOCclone[List of Appendices]{toc}{atoc}
\addtocontents{atoc}{\protect\value{tocdepth}=-1}
\newcommand\listofappendices{\listofatoc}
\newcommand*\savedtocdepth{}
\AtBeginDocument{%
  \edef\savedtocdepth{\the\value{tocdepth}}%
}
\let\originalappendix\appendix
\renewcommand\appendix{%
  \originalappendix
  \cleardoublepage
  \addtocontents{toc}{\protect\value{tocdepth}=-1}%
  \addtocontents{atoc}{\protect\value{tocdepth}=\savedtocdepth}%
}

\title{\textbf{\textcolor{myblue}{The Micro--Aggregated Profit Share}}\thanks{We thank Fil Babalievsky, V.V. Chari, Jose-Elias Gallegos, Eugenia Gonzalez--Aguado, Fatih Guvenen, Kyle Herkenhoff, Larry E. Jones, Loukas Karabarbounis, Rishabh Kirpalani, Hannes Malmberg, Simon Mongey, Sergio Ocampo, Chris Phelan, and James Traina for helpful comments and suggestions. We also wish to thank participants of the Labor--Firms--Macro reading group. Errors are our own. Comments are welcome. Emails: \href{mailto:hasen019@umn.edu}{hasen019@umn.edu}, \href{mailto:perez766@umn.edu}{perez766@umn.edu}.}}
\author{
    \textbf{Thomas Hasenzagl}\\ 
    \\
    \small{\textcolor{gray}{University of Minnesota}} \\
  \and
    \textbf{Luis Pérez} \\
    \textsc{Job Market Paper} \\
    \small{\textcolor{gray}{University of Minnesota}} \\
}
\date{\href{https://luiscanyamel.github.io/website_materials/JMP_LuisPerez.pdf}{Click here for the most recent version.} \\ \vspace{10pt} This version: \today. \\ \textcolor{gray}{First version: May 12, 2023.} \\ \vspace{10pt} }

\begin{document}
\maketitle

\vspace{-0.5cm}

\begin{abstract} 
\noindent How much has market power increased in the United States in the last fifty years? And how did the rise in market power affect aggregate profits? Using micro-level data from U.S. Compustat, we find that several indicators of market power have steadily increased since 1970. In particular, the aggregate markup has gone up from 10\% of price over marginal cost in 1970 to 23\% in 2020, and aggregate returns to scale have risen from 1.00 to 1.13. We connect these market-power indicators to profitability by showing that the aggregate profit share can be expressed in terms of the aggregate markup, aggregate returns to scale, and a sufficient statistic for production networks that captures double marginalization in the economy. We find that despite the rise in market power, the profit share has been constant at 18\% of GDP because the increase in monopoly rents has been completely offset by rising fixed costs and changes in technology. Our empirical results have subtle implications for policymakers: overly aggressive enforcement of antitrust law could decrease firm dynamism and paradoxically lead to lower competition and higher market power.\par 
\bigskip 

\noindent \textbf{JEL Codes}: E25, D22, L16, L40. \\
\textbf{Keywords}: Market Power, Profit Share, Aggregation, Markups, Networks. 
\end{abstract}


\newpage 
\section{Introduction}
How much has market power increased in the United States over the last fifty years? We use micro-level data from U.S. Compustat to address this question and find that several indicators of market power have steadily increased since 1970. We document that the aggregate markup, measured as the ratio of price to marginal cost, has gone up from 10\% in 1970 to 23\% in 2020 and that aggregate returns to scale have risen from 1.00 to 1.13. We map these market-power indicators to the aggregate profit share, which is the fraction of aggregate value added that is not used to compensate factors of production or cover fixed operating costs. More specifically, we show that the aggregate profit share can be expressed in terms of the aggregate markup, aggregate returns to scale, and a sufficient statistic for production networks that captures double marginalization in the economy (i.e., how profits propagate from downstream sellers to upstream suppliers.) \par 

By connecting indicators of aggregate market power to the profit share, we can not only quantify the profit share exploiting rich micro-level heterogeneity, but also understand its determinants. Studying the profit share is important for several reasons. First and foremost, profits, not markups, are the policy-relevant object for antitrust. Positive profits generally entail a loss of consumer surplus, while a markup above one, even if associated with a deadweight loss with respect to perfect competition, may have a positive welfare effect. Thus, the markup is less informative for antitrust policy than the profit share is.\footnote{To see this, notice that in the presence of fixed costs, a firm who cannot charge a markup at least as large as price over average cost would exit the market, destroying all consumer surplus. Hence, a price cap that limits the ability of firms to price above marginal cost can have detrimental welfare effects. On the other hand, a regulation that leads firms to make zero profits (i.e., to price at average cost) may be welfare enhancing, as it is easy to show that for a given technological environment, under general conditions, consumer surplus is highest when the profit share is lowest.} Leaving aside normative considerations, understanding the evolution of the profit share may also shed light on the decline of the labor share, the fall in business dynamism, the TFP slowdown, and other macroeconomic trends.\par

Our main goal is to understand if the rise in market power that we document for the US economy translated into higher profits. When firms gain market power, monopoly rents increase, but if fixed costs of production rise simultaneously, profits may remain unchanged. We argue that this narrative describes the experience of the United States between 1970 and 2020. Over that period, monopoly rents increased from 18\% to 41\% of GDP while the profit share remained constant at 18\% of GDP. We reconcile the increase in market power with a constant profit share by showing that the rise in monopoly rents associated with the increase in markups was counteracted by a proportional increase in returns to scale. Rising returns to scale reflect changes in technology and increases in fixed costs, such as R\&D expenses, advertising, regulatory compliance costs, and costs associated with information and communication technologies. \par 

Our findings are consistent with a view of the economy in which rising fixed costs lower competition by encouraging firm exit and discouraging firm entry.\footnote{This narrative is consistent with models of monopolistic competition \textit{\`{a} la} \cite{dixit1977monopolistic} and models of oligopolistic competition \textit{\`{a} la} \cite{atkeson2008pricing}.}$^,$\footnote{Business Dynamics Statistics from the US Census Bureau indicate that entry and exit rates of establishments and firms have both declined since 1980. See also \cite{decker2014role, decker2016declining}.} Rising fixed costs increase entry barriers.\footnote{\cite{de2019market}, \cite{aghion2019theory}, \cite{gutierrez2021entry}, \cite{de2022quantifying}, and others have pointed out the important role of rising fixed- and entry costs.} Monopoly rents are only sustainable in the presence of entry barriers since such rents will otherwise attract entrants and will be competed away. At the same time, firms that do not have enough market power to earn the monopoly rents required to cover rising fixed costs will exit. This lowers the degree of competition in the economy and allows surviving firms to increase their monopoly rents. However, firms use this increase in monopoly rents to cover the increasing fixed costs, and profits remain unchanged. \par 

Our empirical findings have subtle implications for policymakers. The increasing monopoly rents reflect increases in market power that could be counteracted by more aggressive enforcement of antitrust law. Such aggressive enforcement could, however, lower monopoly rents, making it unsustainable for some firms to operate given the high fixed costs. Therefore, overly aggressive enforcement of antitrust law could decrease firm dynamism and paradoxically lead to lower competition and higher market power. \par  

Our empirical analysis is built on three novel theoretical results. First, we show that the aggregate profit share can be constructed by weighing individual profit rates (i.e., profits over sales) using Domar weights (i.e., producer sales divided by GDP.)\footnote{It is worth emphasizing that Domar aggregation extends to the construction of other factor shares. When expenditures on factors of production are expressed relative to sales, labor- and capital shares can also be constructed from micro data using Domar weights.} Second, we use economic theory to derive a general expression of the profit rate in terms of a producer's markup, markdowns, and returns to scale. Third, we show that the profit share can be expressed in terms of several measures of aggregate market power---the aggregate markup, aggregate returns to scale, an aggregate monopsony term---and a sufficient statistic for production networks that captures double marginalization. \par 

Our first theoretical result establishes that the aggregate profit share can be constructed from micro-level data by weighting producers' profit rates---defined as profits over sales---using Domar weights (i.e., producer sales over GDP.) Domar weights are sufficient statistics for production networks that capture the influence of each producer on aggregate value added. If there are no production networks, Domar weights reduce to sales weights, sum up to one, and reflect the influence of producers on aggregate value added through the sale of final goods. With production networks, Domar weights differ from sales shares, their sum exceeds unity, and they reflect both the direct- and the indirect influence of producers on aggregate value added. Producers have a direct influence on aggregate value added through the sale of final consumption goods, and an indirect influence through the sale of intermediate goods to other producers. \par  

Our second theoretical result provides a general expression for a producer's \textit{economic} profit rate in terms of its markup, markdowns, and returns to scale.\footnote{A similar profit rate expression is provided by \cite{basu1997returns, basu2002aggregate}, \cite{basu2019price} and \cite{syverson2019macroeconomics}. Our expression is more general than theirs since we allow for explicit fixed costs and market power in factor markets.} We put emphasis on economic profits to distinguish them from accounting profits. Economic profits, unlike accounting profits, are unobserved and exclude the opportunity cost of using factors of production. We are able to recover economic profits, cast in the form of an economic profit rate, under very mild assumptions on producer behavior and technology. All that is needed to establish our second result is cost minimization and the existence of a production technology that satisfies standard regularity conditions (i.e., differentiability, quasiconcavity, and Inada conditions.) \par 

Why do economic profit rates depend on markups, markdowns, and returns to scale? A producer exerts market power if it charges a price that is above marginal cost---that is, if its markup is greater than unity---or if it compensates factors of production (say, workers) at a rate below their marginal revenue product---that is if its markdowns are lower than unity. Despite exerting market power, a producer only profits from market power when its rents from monopoly and monopsony power exceed its fixed costs of production, which are captured by the returns to scale. \par 

Our third theoretical result is an aggregation theorem that expresses the aggregate profit share in terms of several indicators of aggregate market power---the aggregate markup, aggregate returns to scale, an aggregate monopsony term---and a sufficient statistic for production networks that captures the degree of vertical integration in the economy. Production networks are important because of double marginalization: sales of upstream producers do not only capture the market-power rents of these producers but also those of downstream suppliers. \par 

This aggregation theorem is our main theoretical contribution, and we use it to guide our empirical analysis. Our formula presents several advantages over existing ones for computing the aggregate profit share. First and foremost, our theorem links aggregate indicators of market power to the aggregate profit share, allowing us to understand the determinants of the profit share, as well as the origins of aggregate profits. Does an increase in market power translate into an increase in profits? If not, why? Do aggregate profits originate from monopoly, monopsony, or both? And to what extent do each of these forces drive the level of aggregate profits? \par 

Apart from allowing us to answer these questions, our theorem implies aggregate measures of markups, markdowns, and returns to scale. Aggregation of firm-level markups is a topic that has generated substantial debate among academic economists. We contribute to this debate by suggesting that the aggregate markup is the harmonic sales-weighted markup, that aggregate markdowns are sales-weighted markdowns, and that aggregate returns to scale are sales-weighted returns to scale. \par 

Our theorem can also be used to assess the external validity of micro-level estimates of markups, markdowns, and returns to scale. Do firm-level estimates of these objects imply reasonable profit shares? In Appendix \ref{app:basu_deu}, we use our theorem to elucidate the back-and-forth discussion between \cite{de2020rise} and \cite{basu2019price} on whether \citeauthor{de2020rise}'s micro-level estimates have unreasonable macroeconomic implications. Under additional assumptions on technology, our theorem can also be used to easily calibrate economic models with monopolistic or monopsonistic wedges (e.g., obtaining a markup shock series for a New Keynesian model.) \par 

\subsection*{Related Literature}
The study of market power has been a subject of extensive research, with a rich and expansive literature dating back to the works of  \cite{pigou1920economics}, \cite{chamberlin1933theory}, and \cite{robinson1933}. \par 

Our paper mainly relates to a large literature on market power and macroeconomics. An influential paper in this literature is \cite*{de2020rise}. They estimate markups in Compustat data and find that the aggregate markup has increased from 21\% of price over marginal cost in 1980 to 61\% in 2016. Instead, we find that the aggregate markup has increased from 12\% to 23\% over the same period. There are three reasons why our series differ. The main reason is that we calculate the aggregate markup as the harmonic sales-weighted average of firm-level markups while \citeauthor*{de2020rise} use the sales-weighted average markup. Our aggregation scheme, unlike theirs, has a solid theoretical foundation.\footnote{Other studies in which the harmonic sales-weighted markup emerges as a natural measure of aggregate markup include \cite{baqaee2020productivity}, \cite{edmond2023costly}, and \cite{smith2022evolution}.} Had we used the sales-weighted average markup instead of the harmonic sales-weighted one, we would have found an increase in price over marginal cost of twenty two (rather than thirteen) percentage points over the last fifty years. The second reason is that we categorize selling, general, and administrative expenses (SG\&A,) as well as the costs of goods sold (COGS,) as variable costs, following \cite{traina2018aggregate}.\footnote{Markup estimates are sensitive to the choice of variable costs, but estimates of the profit share are not because profit rates are the same regardless of whether inputs are categorized as variable or fixed as long as all costs are accounted for.} The different categorization of variable costs explains less than half of our discrepancies. A third reason why our series of aggregate markup differs---one that does not matter much in practice---is that we use a measure of the capital stock that includes both physical- and intangible capital.\footnote{A recent literature relates the rise in market power to the increasing importance of intangible capital, such as software and intellectual property products \citep[see, for example,][]{crouzet2022, de2019market}. On the one hand, investment in intangibles increases fixed costs. On the other hand, the presence of intangibles in the production process lowers marginal costs, allowing firms to charge higher markups. Our empirical analysis captures these two mechanisms. We do so by including intangibles in the stock of capital in our production function estimation.} \par

While we are interested in the evolution of the aggregate markup and other indicators of aggregate market power, our main objective is to study the connection between market power and the profit share. We establish and study this connection through our three theoretical results. In doing so, our work relates and contributes to the literatures on production networks and the functional distribution of income. Here, we briefly review some of the most pertinent papers in those literatures and discuss our contributions in relation to them. \par 
\bigskip 

\noindent \textbf{Production Networks}. Research on production networks, which builds upon the pioneering work of \cite{quesnay1758tableau} and \cite{leontief1951structure, leontief1966input}, has significantly increased in recent decades.\footnote{A list of non-exhaustive papers in this literature is: \cite{long1983real}, \cite{acemoglu2012network, acemoglu2015systemic, acemoglu2016networks}, \cite{grassi2017io}, \cite{bigio2020distortions}, \cite{ghassibe2021endogenous}, \cite{baqaee2019macroeconomic, baqaee2019networks, baqaee2020productivity}.} Within this literature, our work relates most closely to \cite{baqaee2020productivity}. Our aggregation theorem, which expresses the aggregate profit share in terms of a sufficient statistic for production networks and aggregate measures of market power, generalizes the production side of that framework. It is possible to show that the profit share in their framework can be expressed as the input-output multiplier (i.e., the ratio of sales to GDP) times one minus the inverse of the harmonic sales-weighted markup. Thus, our theorem nests the production side of \cite{baqaee2020productivity}'s economy when there is no monopsony, no fixed costs, and returns to scale are constant for each producer. \par 
\medskip 

\noindent \textbf{The Functional Distribution of Income}. Income shares are important summary statistics for understanding many macroeconomic phenomena, including the global decline in the labor share, the stability of the Kaldor facts, and economic inequality. One such statistic that has historically caught the eye of both economists and policymakers is the profit share. \par 

Studying the profit share is important for several reasons. First, as argued above, the profit share in itself is more informative for antitrust policy than the markup is. Second, understanding the evolution of the profit share may shed light on the global decline in the labor share documented by \cite{karabarbounis2014global}, the fall in business dynamism reported by \cite{decker2014role}, and the TFP slowdown noted by \cite{gordon2012us}.\footnote{Income shares---that is, labor-, capital- and profit shares---add up to unity. Thus, if the labor share has declined, that necessarily means that either the capital or the profit share must have increased. }$^,$\footnote{In a world where technological change results from innovative activity, economic profits provide incentives for inventors and entrepreneurs to pursue new ideas and business projects.}$^,$\footnote{Economic profits result from monopolistic- and monopsonistic distortions, both of which may induce misallocation of production factors and thus affect aggregate productivity.} More generally, income shares (i.e., labor-, capital- and profit shares) are important for assessing the extent of wealth and income inequality \citep{piketty2003income, atkinson2011top}, and for calibrating economic models, whether it is in the areas of growth, development, or monetary policy \citep[e.g., see][]{boppart2023macroeconomics,gali2005monetary}. Finally, the profit share is informative for economic policies on corporate taxation, redistribution, and antitrust. \par 

Our theoretical results unify micro and macro approaches for computing income shares. Consistent aggregation of micro data permits obtaining income shares that will be identical to those obtained using macro data. That is, it does not matter whether one computes the aggregate profit share either as a residual of one minus the sum of labor and capital shares or by aggregating individual profit rates (defined as profits over sales) using Domar weights as long as one uses representative micro-level data. If the aggregate user cost of capital used to calculate the capital share via the macroeconomic approach is consistent with the underlying heterogeneity in producers' user costs of capital, then income shares would be identical regardless of the procedure---micro or macro---employed to compute these shares. \par 

Thus, an advantage of our theoretical results is that they can be used to make a direct connection between labor, capital, and profit shares using micro-level data. Two examples where our results can help include studying the role of the profit share in the decline of the labor share \citep{elsby2013decline, karabarbounis2014global, kehrig2021micro} and the stability of the Kaldor facts \citep{eggertsson2021kaldor}. In Appendix \ref{app:income_shares}, we calculate the labor share from US National Accounts and use our estimates of the micro-aggregated profit share to back out the capital share. The seven percentage-point decline in the labor share that occurred between 1970 and 2020, together with the constancy of the profit share that we document, imply that the capital share has absorbed most of the decline in the labor share. \cite{rognlie2016deciphering} argues that the capital share increased due to the housing sector. Overall, our results imply that income shares are much more in line with the Kaldor facts than what others have suggested.\par 
\bigskip 

\noindent \textbf{Layout of the paper}. The rest of the paper is organized as follows. Section \ref{sec:theory} presents our main theoretical results and connects them to the existing literature. Section \ref{sec:data} discusses data sources and the methodology. Section \ref{sec:empirics} offers our main empirical results. Section \ref{sec:conclusion} concludes. Appendices \ref{app:proofs}--\ref{app:results} provide omitted proofs, summary statistics for our empirical application, additional results, and robustness exercises. \par

\section{Aggregation Results on Market Power}\label{sec:theory}
In this section, we provide two aggregation results on market power. First, we show that the aggregate profit share can be constructed by weighting individual profit rates, defined as profits over sales, using Domar weights (i.e., sales divided by GDP.) Then, we show that under additional assumptions, a producer's profit rate can be written as a function of its markup, markdowns, and returns to scale. Second, we express the aggregate profit share in terms of the aggregate markup, an aggregate monopsony term, aggregate returns to scale, and a sufficient statistic for production networks.\par 

\subsection{Aggregate Profit Shares: Two Approaches}
In principle, one can compute aggregate profit shares using two different approaches: the macroeconomic- and the microeconomic approach. We use \textit{macro approach} to refer to the procedure employed to compute the profit share using aggregate data from National Accounts. And we use \textit{micro approach} to refer to the procedure that constructs the profit share from micro-level data. \par 
\bigskip 

\noindent \textbf{The Macro Approach}. The macroeconomic approach starts by recognizing that the aggregate value added of an economy is equal to the sum of compensation to factors of production and a residual term, here labeled ``wedge income,'' that captures all other sources of income. Under the usual assumption that there are only two factors of production---capital and labor---we can write 
    \begin{equation}
        \text{GDP} = WL + RK + \text{wedge income},
    \end{equation}
where $\text{GDP}$ is aggregate value added, $WL$ is labor compensation, and $RK$ are imputed (gross-of-depreciation) capital rents. In theory, the term ``wedge income'' does not only capture profit income but also any income that accrues to other wedges such as taxes on production and sales, tariffs, etc.\footnote{\cite{karabarbounis2019accounting} refer to the term we call wedge income as ``factorless income.''}$^,$\footnote{In practice, the term wedge income may reflect some factor income because of measurement error caused by mismeasuring capital or its user cost, as well as by incorrectly imputing payments to labor.} In the absence of wedges other than monopolistic or monopsonistic wedges, we can write
    \begin{align}
        \text{GDP} = WL + RK + \text{Profits},
    \end{align}
and then compute the aggregate profit share as
    \begin{align}\label{eq:Pimacro}
        \Lambda_\Pi^{\text{Macro}} = 1 - \Lambda_L - \Lambda_K,
    \end{align}
where $\Lambda_\Pi$ is the profit share, $\Lambda_L$ is the labor share, and $\Lambda_K$ is the capital share. \par 

Existing measures of the aggregate profit share have been constructed using this procedure. That is, one first obtains value added, labor compensation, and the stock of capital from National Accounts. Next, one imputes or estimates the aggregate user cost of capital to compute the capital share. Finally, equipped with labor- and capital shares, one infers the profit share using equation (\ref{eq:Pimacro}). \par 
\bigskip 

\noindent \textbf{The Micro Approach}. In this paper, we develop the microeconomic approach, which consists of aggregating individual producers' profit rates according to 
    \begin{align}\label{eq:profshare_microapproach}
        \Lambda_\Pi^{\text{Micro}} = \sum_{i\in\mathcal I}\omega_i s_{\pi_i}, 
    \end{align}
where $\omega_i$ is producer $i$'s weight, and $s_{\pi_i}$ is its profit rate. \par

Formulation (\ref{eq:profshare_microapproach}) permits different aggregation schemes, depending on the definition of the profit rate. Next, we show that when the profit rate is defined as profits over sales, one can construct the profit share from micro-level data using Domar weights---that is, producer sales over aggregate value added. Defining profit rates as profits divided by sales rather than value added is convenient because value added is not observed in most micro-level datasets, but sales are. \par 

\begin{lemma}[\textbf{The Micro--Aggregated Profit Share}]\label{lemma:microprofitshare}
If profit rates, defined as profits over sales, are aggregated using Domar weights (i.e., sales over GDP,) then micro- and macro approaches both yield the aggregate profit share. 
\end{lemma}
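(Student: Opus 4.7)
The plan is to show that Domar weighting algebraically collapses the micro aggregator to the ratio $\sum_i \pi_i/\text{GDP}$, and then to verify via value-added accounting that $\sum_i \pi_i$ equals the residual ``Profits'' in the macro decomposition. First, I would substitute the Domar weights $\omega_i = p_i y_i/\text{GDP}$ and the profit rates $s_{\pi_i} = \pi_i/(p_i y_i)$ directly into the micro aggregator, so that the sales in the numerator and denominator cancel:
\begin{equation*}
\Lambda_\Pi^{\text{Micro}} = \sum_{i\in\mathcal I}\frac{p_i y_i}{\text{GDP}}\cdot \frac{\pi_i}{p_i y_i} = \frac{\sum_{i\in\mathcal I}\pi_i}{\text{GDP}}.
\end{equation*}
The lemma then reduces to the identity $\sum_{i\in\mathcal I} \pi_i = \text{Profits}$, where Profits is the residual in the macro decomposition $\text{GDP} = WL + RK + \text{Profits}$.

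To establish this identity, I would use the producer-level accounting equation
\begin{equation*}
\pi_i = p_i y_i - W_i L_i - R_i K_i - \sum_{j\in\mathcal I} p_j x_{ji},
\end{equation*}
where $x_{ji}$ denotes the quantity of good $j$ used by producer $i$ as an intermediate input. Summing across producers and invoking the goods-market clearing condition $p_i y_i = p_i c_i + \sum_k p_i x_{ik}$ (where $c_i$ is final consumption of good $i$), the aggregate intermediate-sales term $\sum_i\sum_k p_i x_{ik}$ and the aggregate intermediate-cost term $\sum_i\sum_j p_j x_{ji}$ coincide after a relabeling of dummy indices and cancel. What remains is $\sum_i \pi_i = \sum_i p_i c_i - WL - RK = \text{GDP} - WL - RK = \text{Profits}$, which closes the argument.

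The only step that would demand real care---rather than presenting a genuine obstacle---is the bookkeeping in the second step: one must track each intermediate transaction as simultaneously a sale for one producer and a cost for another, so that the double counting washes out when aggregating value added. This is precisely why sales-weighted averaging fails in the vertical-economy example of Figure \ref{fig:vertical_economy}---sales weights overcount the double-marginalized intermediate output---while Domar weights, being normalized by GDP rather than by total sales, restore the correct denominator and recover the macro profit share.
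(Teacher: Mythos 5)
Your proof is correct and follows essentially the same route as the paper's: substitute the Domar weights and profit rates into the micro aggregator so the sales cancel, leaving $\sum_i \pi_i/\text{GDP}$. The only difference is that you go one step further and explicitly verify, via producer-level accounting and the cancellation of intermediate transactions, that $\sum_i \pi_i$ equals the macro residual $\text{GDP} - WL - RK$ --- a step the paper treats as definitional ($\Lambda_\Pi^{\text{Macro}} = \Pi/\text{GDP}$ with $\Pi = \sum_i \pi_i$), so your version is, if anything, slightly more complete.
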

 \begin{proof}
 Let $\Pi$ denote aggregate profits. For each producer $i\in\mathcal I$, $\pi_i$ denotes profits, $p_iy_i$ are sales, and $s_{\pi_i}:= \pi_i/(p_iy_i)$ is $i$'s profit rate. Then, we have
    \begin{align*}
        \Lambda_\Pi^{\text{Macro}} &= \frac{\Pi}{\text{GDP}} = \frac{\sum_{i\in\mathcal I} \pi_i}{\text{GDP}} = \frac{\sum_i s_{\pi_i}p_iy_i}{\text{GDP}} = \sum_{i\in\mathcal I}\frac{p_iy_i}{\text{GDP}} \times s_{\pi_i} \equiv \Lambda_\Pi^{\text{Micro, DW}}.
    \end{align*}
\end{proof}

\noindent The intuition for using Domar weights to arrive at the aggregate profit share is that Domar weights, defined as producer sales over GDP, are ``summary statistics'' for input-output linkages that capture the influence of each producer on aggregate value added. If there are no production networks, Domar weights reduce to sales shares, sum up to one, and reflect the influence of producers on aggregate value added through the sale of final goods. With production networks, Domar weights differ from sales shares, their sum exceeds unity, and reflect both the direct- and the indirect influence of producers on aggregate value. Producers have a direct influence on aggregate value added through the sale of final consumption goods, and an indirect influence through the sale of intermediate inputs to other producers. \par 

Why do production networks matter in the construction of the aggregate profit share? In the presence of intermediate goods, because of double marginalization, the income generated by downstream producers (i.e., those directly selling to consumers) can be split into three terms: own profits, profits to others, and pure costs. Profits to others are accounted for in the costs of downstream producers that buy materials from other producers which may also charge positive markups and thus earn profits. In order to arrive at the aggregate profit share from micro-level data, one must take these input-output linkages into account, which is precisely what Domar weights do. \par 


Three important remarks are in order. First, establishing Lemma \ref{lemma:microprofitshare} requires no modeling assumptions, only fundamental accounting principles. If economic profit rates were observable at the individual level, the profit share could be easily constructed from micro data. Second, alternative aggregation schemes are possible. If individual profit rates are alternatively defined as profits over value added rather than sales, then the aggregate profit share can be constructed from micro-level data by weighting profit rates using value-added weights (See Lemma \ref{lemma:vaprofitshare} in Appendix \ref{app:proofs}.) Finally, although we have defined the profit share economy-wide, Lemma \ref{lemma:microprofitshare} applies at any desired level of aggregation as long as we make the necessary adjustments in terms of value added.

\subsection{Profit Rates in Terms of Monopoly and Monopsony}
While Lemma \ref{lemma:microprofitshare} is a useful aggregation result, it may not be so practical when confronted with the data. This is because profit rates are typically not observed, and, even when they are, they do not measure economic- but accounting profits.\footnote{Economic profits reflect the opportunity costs of using factors of production, whereas accounting profits do not. Businesses may have incentives to overstate capital depreciation when reporting accounting profits in balance-sheet data. The classic reference shedding light on the distinction between economic- and accounting profits is \cite{fisher1983misuse}.} To overcome this, we use economic theory to show that the \textit{economic} profit rate of any given producer can be written in terms of its markup, markdowns, and returns to scale. \par 

The economic environment we consider is very general. We think of firms as minimizing current-period costs to deliver a given amount of output, although this output could potentially result from complex strategic interactions among firms in dynamic environments, as in \cite{abreu1986extremal}. Firms operate production technologies which employ two types of inputs, intermediate goods and factor inputs, and face two types of costs, variable and fixed. Production technologies may exhibit arbitrary scale elasticities, the only (technical) requirements are that they be differentiable, quasiconcave, and satisfy Inada conditions. Intermediates include utilities and materials, and factor inputs are different types of capital and labor. Variable costs include purchases of intermediate inputs, as well as the compensation of some factors of production. Fixed costs can be either implicit or explicit. Explicit fixed costs do not affect production capacity and include investments towards future productive capacity, such as investment in R\&D. Implicit fixed costs affect production capacity, such as when a minimum amount of capital is necessary to start production, and may be captured by the scale elasticity if they cannot be separated from variable costs. \par 

We allow for firms to have pricing power, both in output and factor markets. Market power in output market is captured by the markup of price over marginal cost, and market power in input markets is captured by the markdown of the marginal revenue product of a factor over its rental rate. For convenience, we treat both monopolistic and monopsonistic wedges as exogenous, although it is not complicated to endogenize these. Under the general economic environment described above, we can establish the following proposition. \par 
\medskip 

\begin{prop}[Profit Rates, Monopoly, and Monopsony]\label{prop:indprofitrate}
Under the assumptions of cost-minimizing behavior, the existence of a continuously differentiable and quasiconcave production function, and monopsony power in factor markets, a producer's profit rate, defined as profits over sales, can be written as
    \begin{align}\label{eq:indprofitrate}
        s_{\pi} = 1 - \frac{\text{RS}}{\mu} = 1 - \frac{\text{SE}^{\text{adj}}}{\mu} + \frac{\mathcal M}{\mu},
    \end{align}
where $\text{RS} = \text{SE}^{\text{adj}} - \mathcal M$ are the returns to scale, $\text{SE}^{\text{adj}}$ is the scale elasticity of the production function adjusted for fixed costs, $\mathcal M$ is a monopsony term capturing market power in factor markets, and $\mu$ is the markup of price over marginal cost. \par 

The scale elasticity adjusted for fixed costs is defined as
    \begin{align}\label{eq:RSadj}
        \text{SE}^{\text{adj}} := \text{SE}\times\left(\frac{\text{TC}}{\text{TC} - \text{FC}}\right),
    \end{align}
where $\text{SE}$ is the scale elasticity, $\text{TC}$ are total costs, and $\text{FC}$ are explicit fixed costs. \par 

The monopsony term is given by
    \begin{align}\label{eq:monopsony_term}
        \mathcal M := \left(\frac{\text{TC}}{\text{TC} - \text{FC}}\right)\sum_{j\in \mathcal F}\theta_{j}(1-\nu_{j}),
    \end{align}
where $\theta_{j}\equiv \partial F/\partial x_{j} \times x_{j} / y$ is the elasticity of output with respect to input $j$, and $\nu_{j}$ is the markdown on factor $j\in\mathcal F$, where the markdown is defined as the ratio of input $j$'s rental rate to its marginal revenue product; that is, $\nu_{j}:= w_j(x_{j})/\text{MRP}_{j}$. 
\end{prop}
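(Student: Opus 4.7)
The plan is to start from the accounting identity $s_{\pi_i} = 1 - \text{TC}_i/(p_iy_i)$ and reduce the problem to expressing the variable cost share $\text{VC}_i/(p_iy_i)$ in terms of economic primitives. The fixed-cost adjustment $\text{TC}_i/(\text{TC}_i-\text{FC}_i)$ that decorates both $\text{RS}_i^{\text{adj}}$ and $\mathcal{M}_i$ will then appear naturally from the elementary identity $\text{TC}_i/(p_iy_i) = (\text{TC}_i/\text{VC}_i)\cdot \text{VC}_i/(p_iy_i)$, so I never need to handle fixed costs directly inside the input-optimization step.

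To handle the variable cost share, I would set up the cost-minimization problem $\min_{\{x_{ij}\}_{j\in N}} \sum_{j} w_j(x_{ij})\, x_{ij}$ subject to $F_i(\{x_{ij}\}) \geq y_i$. The first-order conditions equate the marginal outlay on input $j$ with $\lambda_i \cdot \partial F_i/\partial x_{ij}$, where the Lagrange multiplier $\lambda_i$ is the marginal cost. Identifying $\text{MRP}_{ij}$ with $\lambda_i\,\partial F_i/\partial x_{ij}$, the markdown definition $\nu_{ij} = w_j/\text{MRP}_{ij}$ rewrites the FOC as $w_j = \nu_{ij}\,\lambda_i\,\partial F_i/\partial x_{ij}$. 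Multiplying by $x_{ij}$ and using $\theta_{ij} = (\partial F_i/\partial x_{ij})\cdot x_{ij}/y_i$ yields $w_j x_{ij} = \nu_{ij}\,\lambda_i\, y_i\,\theta_{ij}$.

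Summing over $j$ gives $\text{VC}_i = \lambda_i y_i \sum_{j\in N} \nu_{ij}\theta_{ij}$. Writing $\nu_{ij} = 1-(1-\nu_{ij})$ inside the sum and invoking Euler's theorem on the homogeneous production function (so that $\sum_{j\in N}\theta_{ij} = \text{RS}_i$) delivers $\text{VC}_i = \lambda_i y_i\bigl[\text{RS}_i - \sum_{j\in N}\theta_{ij}(1-\nu_{ij})\bigr]$. Dividing by $p_iy_i$ and substituting the markup $\mu_i = p_i/\lambda_i$ gives $\text{VC}_i/(p_iy_i) = (1/\mu_i)\bigl[\text{RS}_i - \sum_{j\in N}\theta_{ij}(1-\nu_{ij})\bigr]$. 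Multiplying by $\text{TC}_i/\text{VC}_i$ converts this into $\text{TC}_i/(p_iy_i) = (1/\mu_i)[\text{RS}_i^{\text{adj}} - \mathcal{M}_i]$, and inserting this into $s_{\pi_i} = 1 - \text{TC}_i/(p_iy_i)$ yields the claimed decomposition.

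The main subtle step is tying the cost-minimization FOC to the markdown, since MRP is traditionally introduced through profit maximization. The reconciliation is that at a profit-maximizing output marginal revenue equals marginal cost, so the Lagrange multiplier $\lambda_i$ can be identified with marginal revenue, justifying $\text{MRP}_{ij} = \lambda_i\,\partial F_i/\partial x_{ij}$ and hence the interpretation of $\nu_{ij}$ as the wedge between the input price and the shadow marginal value of the input coming out of cost minimization. Once this identification is accepted, everything else is a direct application of Euler's theorem to the homogeneous technology combined with the definitions of markup, returns to scale, and fixed-cost adjustment; quasiconcavity ensures the FOCs are sufficient for a minimum.
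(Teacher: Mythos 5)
Your proposal is correct and follows essentially the same route as the paper's proof: cost-minimization first-order conditions with endogenous input prices, identification of the markdown with the wedge between the input price and $\lambda_i\,\partial F_i/\partial x_{ij}$, Euler's theorem on the homogeneous technology to convert $\sum_j\theta_{ij}$ into $\text{RS}_i$, and the ratio $\text{TC}_i/(\text{TC}_i-\text{FC}_i)$ to pass from variable to total costs. Your explicit reconciliation of the cost-minimization multiplier with the marginal revenue product (via MR $=$ MC at the optimum) is the same step the paper dispatches with ``standard duality arguments.''
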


\begin{proof}
    See Appendix \ref{app:prop1}.
\end{proof}

\noindent \textbf{Remarks}. Under minimal assumptions on producer behavior and technology---namely, cost minimization and a continuously differentiable and quasiconcave production technology---equation (\ref{eq:indprofitrate}) states that a producer's profit rate can be written in terms of its markup $\mu$ and returns to scale $\text{RS}$.\footnote{Implicit in our formulation is the assumption that the firm's optimality condition for capital holds in every period, which also allows us to back out the user-specific cost of capital.} Alternatively, the profit rate can be written in terms of the markup, $\mu$, the scale elasticity of the production technology adjusted for fixed costs, $\text{SE}^{\text{adj}}$, and a monopsony term $\mathcal M$ that depends on markdowns $\nu_j$ and captures market power in factor markets.  \par 

The economic intuition for equation (\ref{eq:indprofitrate}) is the following. Profits originate from two sources: monopoly and monopsony. The term $\text{SE}^{\text{adj}}/\mu$ captures how monopoly power translates into profits. A producer that exerts monopoly power makes profits when it charges a markup above the scale elasticity adjusted for fixed costs; that is, if $\mu > \text{SE}^{\text{adj}}$. If the markup were to equal the scale elasticity adjusted for fixed costs and there were no markdowns, a producer would use his market power to cover fixed costs and compensate for scale economies. The term $\mathcal M/\mu$, captures how monopsony power in factor markets translates into profits. When a producer has market power in a factor market, the rental rate of that factor is below its marginal revenue product so that the producer earns monopsony rents from using that factor. Monopsony power translates into profits if monopsonistic rents exceed fixed costs.\footnote{Notice that the term $\mathcal M$ features a fixed-cost adjustment, $TC/(TC - FC)$, and input elasticities for each factor $j$. For simplicity, consider the case in which there are no fixed costs and focus on a particular factor $j$. If there is no monopoly power and $\nu_j$ is lower than unity, monopsony power translates into profits if $\theta_{j}(1-\nu_{j})\neq1$. With fixed costs, market power in input $j$ translates into profits if $\frac{TC}{TC-FC}\theta_{j}(1-\nu_{j})\neq1$.} \par 

Equation (\ref{eq:indprofitrate}) is a generalization of other expressions of profit rates in the literature. \cite{basu1997returns, basu2002aggregate}, \cite{basu2019price} and \cite{syverson2019macroeconomics} show that the profit rate (defined as profits over sales) can be written as
    \begin{align}\label{eq:profitrate_basu}
        s_{\pi} &= 1 - \frac{\text{RS}}{\mu}.
    \end{align}
Although equations (\ref{eq:profitrate_basu}) and (\ref{eq:indprofitrate}) look identical when expressed in terms of returns to scale, our expression is more general. This is because of two reasons. First, we explicitly allow for fixed costs. Second, we allow for monopsony power in factor markets. Both of these generalizations affect the profit rate through the returns to scale. When there are no explicit fixed costs (i.e., $\text{FC}=0$) and no monopsony power (i.e., $\nu_{j}=1$, $\forall j\in\mathcal F$,) the returns to scale equal the scale elasticity of the production technology and our expression and that of \citeauthor{basu1997returns}, \citeauthor{basu2019price} and \citeauthor{syverson2019macroeconomics} are identical.\footnote{Imposing constant returns to scale, equations (\ref{eq:indprofitrate}) and (\ref{eq:profitrate_basu}) both yield the familiar profit rate $s_{\pi} = 1 - \mu^{-1}$.} \par 

When there are explicit fixed costs (i.e., $\text{FC}>0$,) returns to scale are no longer equal to the scale elasticity and must be adjusted to incorporate these fixed costs. That explains why in the absence of monopsony power, our formula equates returns to scale to the scale elasticity of the production function adjusted for fixed costs. Further allowing for monopsony power in factor markets (i.e., $\nu_j\in(0,1)$ for some $j\in\mathcal F$) requires making one additional adjustment to the returns to scale. With both explicit fixed costs and monopsony power, returns to scale depend on the scale elasticity, explicit fixed costs, and a monopsony term. Returns to scale, defined as the ratio of average- to marginal cost, depend on the monopsony term because a producer can affect its marginal cost of production by exerting monopsony power in factor markets. This explains why, according to our formula, returns to scale equal the scale elasticity adjusted for fixed costs minus the monopsony term; that is, $\text{RS} =\text{SE}^{\text{adj}} -\mathcal M$. \par 

An important clarification is that the scale elasticity of the production function adjusted for fixed costs captures all fixed costs.\footnote{See Appendix \ref{app:FC_example} for a discussion on the relationship between returns to scale and fixed costs related to production factors. A related point is that changes in returns to scale can in principle capture both increasing fixed costs and technological change. Throughout the paper, we interpret rising returns to scale as increases in fixed costs for the ease of exposition.} In the data, it is not always possible to separate inputs that are directly used in the production of output from those that are not. If available data combines expenditures on production factors without separating those expenditures into variable and fixed costs, the estimated scale elasticity will capture all fixed costs. If, however, a direct measure of fixed costs is observable in the data, it can be beneficial to estimate the scale elasticity only using variable inputs. This is because the estimated scale elasticity can be adjusted to incorporate fixed costs not included in the estimation procedure, and the scale elasticity can rarely be estimated at the firm level. By using a fixed-cost adjustment factor, we can exploit firm-level heterogeneity in scale elasticities even within particular industries. We use this theoretical insight in our empirical exercise and estimate scale elasticities at the firm level. We do so by constructing a fixed-cost adjustment factor for each firm with observable R\&D expenditures, and by estimating the scale elasticity at the industry level.\par 
\bigskip 

\noindent \textbf{Remarks}. In Appendix \ref{app:proofs}, section \ref{app:Rident_proof}, we show that Lemma \ref{lemma:microprofitshare} combined with equation (\ref{eq:indprofitrate}) can be used to identify the aggregate user cost of capital. Finally, we note that expressing profit rates as profits divided by sales rather than value added is standard practice since value added is not observed in many datasets, but sales are. \par 

\subsection{Linking the Aggregate Profit Share to Market-Power Indicators}\label{sec:decomposition}
Our next result states that the aggregate profit share can be expressed in terms of a sufficient statistic for production networks that captures double marginalization in the economy and two indicators of aggregate market power---the aggregate markup and an aggregate monopsony term that captures market power in factor markets. \par 

\begin{theorem}\label{thm:profshare_decomp}
With cost-minimizing producers, continuously differentiable and quasiconcave production functions, fixed costs, and market power in factor- and output markets, the profit share can be expressed as
    \begin{align}
        \Lambda_\Pi &= \chi\left(1 - \mathbb{E}_{\omega}\left[\frac{\text{SE}^{\text{adj}}}{\mu}\right] + \mathbb{E}_{\omega}\left[\frac{\mathcal M}{\mu}\right]\right) \label{thm:agg_prof_general1} \\
            &= \chi\left(1 - \frac{\overline{SE}^{\text{adj}}}{\overline\mu_{hsw}} + \frac{\overline{\mathcal M}}{\overline\mu_{hsw}} - \text{Cov}_\omega\left[\text{SE}^{\text{adj}}, \frac{1}{\mu}\right] + \text{Cov}_\omega\left[\mathcal M, \frac{1}{\mu}\right]\right), \label{thm:agg_prof_general2}
    \end{align}
where $\chi =\sum_{k\in\mathcal I} \frac{p_ky_k}{\text{GDP}}$ is the input-output multiplier, $\mathbb{E}_{\omega}\left[\cdot\right]$ denotes a sales-weighted average where $\omega$ indexes sales weights, 
$\text{SE}^{\text{adj}}$ is the scale elasticity adjusted for fixed costs given by (\ref{eq:RSadj}), $\mu$ denotes the markup, and $\mathcal M$ is a monopsony term given by (\ref{eq:monopsony_term}).\par 

The term $\overline X$ is the sales-weighted average of $X$, $\overline\mu_{hsw}$ is the harmonic sales-weighted markup, and $\text{Cov}_\omega(X,Y)$ is the sales-weighted covariance of variables $X$ and $Y$.
\end{theorem}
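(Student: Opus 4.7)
The plan is to combine the two earlier results---Lemma \ref{lemma:microprofitshare} and Proposition \ref{prop:indprofitrate}---and then rearrange the resulting weighted sum into the two forms stated in the theorem. First I would start from Lemma \ref{lemma:microprofitshare}, which gives the Domar-weighted representation
\[
\Lambda_\Pi \;=\; \sum_{i\in\mathcal I} \frac{p_i y_i}{\text{GDP}}\, s_{\pi_i},
\]
and then substitute the expression for $s_{\pi_i}$ from Proposition \ref{prop:indprofitrate}. This substitution is purely algebraic and inherits all of the economic assumptions (cost minimization, homogeneous and quasiconcave technology, fixed costs, market power in both input and output markets) from that proposition.

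Second, I would split the resulting sum into three pieces---the ``$1$'' piece, the $\text{RS}^{\text{adj}}/\mu$ piece, and the $\mathcal M/\mu$ piece---and factor out $\chi := \sum_{k\in\mathcal I} p_k y_k /\text{GDP}$ so that the remaining weights $\omega_i := (p_i y_i)/(\chi\cdot\text{GDP})$ are bona fide sales weights summing to one. The first piece then contributes $\chi\cdot 1$, and the other two pieces become $\chi$ times the sales-weighted averages $\mathbb{E}_\omega[\text{RS}^{\text{adj}}/\mu]$ and $\mathbb{E}_\omega[\mathcal M/\mu]$, respectively. This immediately yields equation (\ref{thm:agg_prof_general1}).

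Third, to obtain equation (\ref{thm:agg_prof_general2}), I would apply the elementary identity $\mathbb{E}_\omega[XY]=\mathbb{E}_\omega[X]\mathbb{E}_\omega[Y]+\text{Cov}_\omega[X,Y]$ to each of the two ratios, using $X\in\{\text{RS}^{\text{adj}},\mathcal M\}$ and $Y=1/\mu$. The key observation---and essentially the only substantive step beyond bookkeeping---is that the harmonic sales-weighted markup is defined precisely so that $\mathbb{E}_\omega[1/\mu]=1/\overline{\mu}_{hsw}$, while $\mathbb{E}_\omega[\text{RS}^{\text{adj}}]=\overline{\text{RS}}^{\text{adj}}$ and $\mathbb{E}_\omega[\mathcal M]=\overline{\mathcal M}$ by the definition of the sales-weighted mean. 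Substituting these back and collecting the covariance terms yields the stated decomposition.

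The main obstacle here is not conceptual but notational: one has to be careful to distinguish Domar weights $(p_i y_i)/\text{GDP}$ (which sum to $\chi\ge 1$ in the presence of intermediates) from sales weights $\omega_i$ (which sum to one), so that the factor $\chi$ appears exactly once on the outside of each expression. Once that accounting is done cleanly, there is no further difficulty---both equations follow by a direct rewrite plus one application of the covariance identity.
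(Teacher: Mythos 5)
Your proposal is correct and follows essentially the same route as the paper's own proof in Appendix \ref{app:thm1}: substitute the profit-rate expression from Proposition \ref{prop:indprofitrate} into the Domar-weighted sum from Lemma \ref{lemma:microprofitshare}, factor out $\chi$ to convert Domar weights into sales weights, and then apply the covariance identity together with the fact that $\mathbb{E}_\omega[1/\mu]=1/\overline{\mu}_{hsw}$. Nothing is missing.
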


\begin{proof}
    See Appendix \ref{app:thm1}.
\end{proof}

Theorem \ref{thm:profshare_decomp} links several indicators of aggregate market power to the profit share. Since these market-power indicators are obtained from micro data, Theorem \ref{thm:profshare_decomp} can be used to assess the macroeconomic implications of micro-level estimates of markups, markdowns, and returns to scale in terms of the aggregate profit share. This theorem also permits decomposing the aggregate profit share into different sources of market power and, hence, assessing to what extent aggregate profits originate from monopoly and monopsony. An important clarification is that while we have stated our theorem in terms of the economy-wide profit share, Theorem \ref{thm:profshare_decomp} can be adapted with ease to study profit shares at any desired level of aggregation, such as industry-level profit shares. By quantifying the sources of market power---monopoly \textit{vis-à-vis} monopsony---for particular industries, empirical applications of Theorem \ref{thm:profshare_decomp} can inform policy discussions on antitrust and help draft industry-specific regulations. We agree with \cite*{berry2019increasing} in that detailed industry analysis is needed, and believe that our theorem is useful in that respect as it bridges macro and industrial-organization approaches. \par 

The second line in Theorem \ref{thm:profshare_decomp}, equation (\ref{thm:agg_prof_general2}), states that the profit share can be computed using the input-output multiplier (i.e., the ratio of total sales to aggregate value added,) the aggregate markup, an aggregate monopsony term, and covariance terms. Since the aggregate profit share can be expressed as a function of the harmonic sales-weighted markup, this theorem provides a natural argument for using the harmonic sales-weighted markup as our measure of aggregate markup. We acknowledge that we are not the first to point to this measure as a natural choice for the aggregate markup. \cite{baqaee2020productivity} and \cite*{edmond2023costly}, among others, have noted this earlier. However, we show that the harmonic sales-weighted markup continues to be an appropriate choice for the aggregate markup in more general environments. While \citeauthor{baqaee2020productivity} allow for arbitrary production networks, as we do, neither they nor \citeauthor*{edmond2023costly} allow for increasing returns to scale, fixed costs, or monopsony. In the same manner that we established a measure of aggregate markup, we note that sales-weighted average returns to scale constitute a natural measure of aggregate returns to scale. Similarly, one could refer to the sales-weighted monopsony term, $\overline{\mathcal M}$, as the aggregate monopsony term. Finally, we note that the aggregate monopsony term captures market power on several factor markets through factor markdowns. For those interested in computing aggregate markdowns, our theorem suggests that the aggregate markdown of a particular factor is the sales-weighted markdown.\par 


The following corollaries are special cases to Theorem \ref{thm:profshare_decomp} and are useful for inferring profit shares and markups, as well as for assessing the validity of a researcher's own micro estimates or those of existing studies, as we demonstrate in the sequel. \par 

\begin{corollary}[No Monopsony]\label{corr:nomonop}
Assuming price-taking behavior in input markets (i.e., $\nu_{ij}=1$ for all $i,j$), Theorem \ref{thm:profshare_decomp} reduces to
    \begin{align}\label{eq:profshare_nomonop}
        \Lambda_\Pi = \chi\left(1 - \frac{\overline{RS}}{\overline\mu_{hsw}} - \text{Cov}_\omega\left[\text{RS}, \frac{1}{\mu}\right]\right),
    \end{align}
where $\text{RS} = \text{SE}^{\text{adj}}$.
\end{corollary}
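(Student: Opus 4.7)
The plan is to derive Corollary \ref{corr:nomonop} as a direct specialization of Theorem \ref{thm:profshare_decomp}, so no new machinery is required. The single substantive observation is that the monopsony term $\mathcal M_i$ defined in equation (\ref{eq:monopsony_term}) is linear in the factors $(1-\nu_{ij})$, so price-taking behavior in every input market collapses it identically to zero.

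First, I would fix an arbitrary producer $i\in\mathcal I$ and plug the hypothesis $\nu_{ij}=1$ for all $j\in N$ into (\ref{eq:monopsony_term}):
\begin{equation*}
\mathcal M_i = \left(\frac{\text{TC}_i}{\text{TC}_i - \text{FC}_i}\right)\sum_{j\in N}\theta_{ij}(1-\nu_{ij}) = \left(\frac{\text{TC}_i}{\text{TC}_i - \text{FC}_i}\right)\sum_{j\in N}\theta_{ij}\cdot 0 = 0.
\end{equation*}
Because this holds for every $i$, the entire monopsony vector $\{\mathcal M_i\}_{i\in\mathcal I}$ is the zero vector, regardless of the values of the fixed-cost adjustment factors or the output elasticities $\theta_{ij}$.

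Second, I would substitute $\mathcal M_i\equiv 0$ into the decomposition in equation (\ref{thm:agg_prof_general2}) of Theorem \ref{thm:profshare_decomp}. The sales-weighted average $\overline{\mathcal M}$ vanishes, and since the covariance of the zero random variable with any other variable is zero, $\text{Cov}_\omega\!\left[\mathcal M, 1/\mu\right]=0$ as well. The two monopsony-related terms in (\ref{thm:agg_prof_general2}) therefore drop out, leaving exactly
\begin{equation*}
\Lambda_\Pi = \chi\left(1 - \frac{\overline{RS}^{\text{adj}}}{\overline\mu_{hsw}} - \text{Cov}_\omega\!\left[\text{RS}^{\text{adj}}, \frac{1}{\mu}\right]\right),
\end{equation*}
which is the statement of the corollary.

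There is no real obstacle here: the corollary is essentially a tautological specialization of the theorem, and all the heavy lifting---establishing the appearance of the input-output multiplier $\chi$, the emergence of the harmonic sales-weighted markup, and the covariance correction---is already contained in the proof of Theorem \ref{thm:profshare_decomp}. The only point deserving a sentence of justification is why setting $\nu_{ij}=1$ truly corresponds to ``price-taking behavior in input markets,'' which follows from the definition $\nu_{ij}:=w_j(x_{ij})/\text{MRP}_{ij}$ given in Proposition \ref{prop:indprofitrate}: price-taking producers equate factor prices to marginal revenue products, yielding $\nu_{ij}=1$. With this brief remark in place, the proof is complete.
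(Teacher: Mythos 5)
Your proof is correct and matches the paper's (implicit) treatment: the paper offers no separate proof for this corollary, presenting it as an immediate special case of Theorem \ref{thm:profshare_decomp}, and your argument---that $\nu_{ij}=1$ annihilates $\mathcal M_i$ term by term, hence both $\overline{\mathcal M}$ and $\text{Cov}_\omega[\mathcal M, 1/\mu]$ vanish in equation (\ref{thm:agg_prof_general2})---is exactly the intended specialization. No issues.
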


\begin{corollary}[No Monopsony and No Fixed Costs]\label{corr:nomonopFC}
Assuming price-taking behavior in input markets (i.e., $\nu_{ij}=1$ for all $i,j$) and no fixed costs (i.e., $\text{FC}_i=0$ for all $i$,) Theorem \ref{thm:profshare_decomp} reduces to
    \begin{align}\label{eq:profshare_nomonopFC}
        \Lambda_\Pi = \chi\left(1 - \frac{\overline{RS}}{\overline\mu_{hsw}} - \text{Cov}_\omega\left[\text{RS}, \frac{1}{\mu}\right]\right),
    \end{align}
where $\text{RS} = \text{SE}$.
\end{corollary}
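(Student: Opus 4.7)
\textbf{Proof proposal for Corollary \ref{corr:nomonopFC}.} The plan is to start from Theorem \ref{thm:profshare_decomp}, specifically equation (\ref{thm:agg_prof_general2}), and show that the two additional assumptions collapse the general formula to the stated one. Since Corollary \ref{corr:nomonop} already handles the $\nu_{ij}=1$ case, one can either specialize Theorem \ref{thm:profshare_decomp} directly under both assumptions simultaneously, or simply take Corollary \ref{corr:nomonop} and impose $\text{FC}_i=0$. I would present it as a one-shot specialization from the theorem for transparency.

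First I would note the effect of price-taking behavior in input markets. Inspecting the definition of the monopsony term in equation (\ref{eq:monopsony_term}), setting $\nu_{ij}=1$ for every producer $i$ and every input $j$ makes each summand $\theta_{ij}(1-\nu_{ij})$ vanish, so $\mathcal{M}_i=0$ for all $i$. Consequently the sales-weighted average $\overline{\mathcal{M}}=0$ and the sales-weighted covariance $\text{Cov}_\omega[\mathcal{M},1/\mu]=0$, eliminating the two monopsony-related terms in (\ref{thm:agg_prof_general2}).

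Next I would apply the no-fixed-costs assumption. When $\text{FC}_i=0$, the fixed-cost adjustment factor in the definition of adjusted returns to scale, $\text{TC}_i/(\text{TC}_i-\text{FC}_i)$, equals one, so $\text{RS}_i^{\text{adj}}=\text{RS}_i$ for every producer. This substitution turns $\overline{\text{RS}}^{\text{adj}}$ into $\overline{\text{RS}}$ and $\text{Cov}_\omega[\text{RS}^{\text{adj}},1/\mu]$ into $\text{Cov}_\omega[\text{RS},1/\mu]$. Combining both simplifications in (\ref{thm:agg_prof_general2}) directly yields
\[
\Lambda_\Pi = \chi\left(1 - \frac{\overline{\text{RS}}}{\overline{\mu}_{hsw}} - \text{Cov}_\omega\!\left[\text{RS},\frac{1}{\mu}\right]\right),
\]
which is the desired expression.

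There is essentially no obstacle here: the corollary is a pure specialization, and the argument is purely algebraic once Theorem \ref{thm:profshare_decomp} is in hand. The only thing to be careful about is stating explicitly that $\nu_{ij}=1$ kills the monopsony term pointwise (and hence also its sales-weighted average and covariance with $1/\mu$), rather than invoking some aggregate-level cancellation, so that the result holds regardless of the joint distribution of markups and markdowns across producers.
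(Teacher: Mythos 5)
Your proposal is correct and matches the paper's (implicit) treatment: the corollary is an immediate specialization of Theorem \ref{thm:profshare_decomp}, obtained by noting that $\nu_{ij}=1$ makes each $\mathcal M_i$ vanish pointwise (hence $\overline{\mathcal M}=0$ and $\text{Cov}_\omega[\mathcal M, 1/\mu]=0$) and that $\text{FC}_i=0$ reduces $\text{RS}_i^{\text{adj}}$ to $\text{RS}_i$. The paper offers no separate proof for this corollary, and your one-shot substitution into equation (\ref{thm:agg_prof_general2}) is exactly the intended argument.
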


\begin{corollary}[No Monopsony, No Fixed Costs, and CRS]\label{corr:corr:nomonopFCRS}
Assuming price-taking behavior in input markets (i.e., $\nu_{ij}=1$ for all $i,j$,) no fixed costs (i.e., $\text{FC}_i=0$ for all $i$,) and constant returns to scale (i.e., $\text{RS} = 1$ for all $i$,) Theorem \ref{thm:profshare_decomp} reduces to
    \begin{align}\label{eq:profshare_nomonopFCRS}
        \Lambda_\Pi = \chi\left(1 - \frac{1}{\overline\mu_{hsw}}\right).
    \end{align}
\end{corollary}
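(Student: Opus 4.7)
The plan is to derive the corollary as a direct specialization of Theorem \ref{thm:profshare_decomp} (equation \ref{thm:agg_prof_general2}), imposing the three assumptions sequentially and tracking how each one eliminates or collapses a term. Since the theorem already gives the full decomposition, no re-derivation from primitives is needed; the entire argument reduces to substitution and observing that a covariance with a constant vanishes.

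First, I would invoke the no-monopsony assumption $\nu_{ij} = 1$ for all $i, j$. Looking at the definition of the monopsony term in (\ref{eq:monopsony_term}), each summand carries a factor $(1 - \nu_{ij})$, so $\mathcal M_i = 0$ for every producer $i$. This immediately kills both monopsony-related pieces of (\ref{thm:agg_prof_general2}): the sales-weighted mean $\overline{\mathcal M}$ equals zero, and $\text{Cov}_\omega[\mathcal M, 1/\mu] = 0$ because one argument of the covariance is identically zero.

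Second, I would apply the no-fixed-costs assumption $\text{FC}_i = 0$. From the definition of adjusted returns to scale in (\ref{eq:RSadj}), the factor $\text{TC}_i / (\text{TC}_i - \text{FC}_i)$ collapses to one, so $\text{RS}_i^{\text{adj}} = \text{RS}_i$. Then imposing the constant-returns assumption $\text{RS}_i = 1$ for every $i$ gives $\text{RS}_i^{\text{adj}} \equiv 1$. Consequently the sales-weighted mean $\overline{\text{RS}}^{\text{adj}}$ equals one, and the remaining covariance term $\text{Cov}_\omega[\text{RS}^{\text{adj}}, 1/\mu]$ equals the covariance of a constant with $1/\mu$, which is zero.

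Substituting these simplifications back into (\ref{thm:agg_prof_general2}) leaves only $\chi\bigl(1 - 1/\overline\mu_{hsw}\bigr)$, which is the claimed expression (\ref{eq:profshare_nomonopFCRS}). There is no real obstacle here: the only thing worth stating carefully is the standard fact that the sales-weighted covariance of any random variable with a constant vanishes, which is what allows the covariance terms to drop rather than needing to be bounded or approximated. The entire proof should fit in a short paragraph once the three assumptions have been dispatched in order.
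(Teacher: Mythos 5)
Your proposal is correct and matches the paper's (implicit) argument: the corollary is obtained exactly by substituting $\nu_{ij}=1$, $\text{FC}_i=0$, and $\text{RS}_i=1$ into equation (\ref{thm:agg_prof_general2}), noting that $\mathcal M_i\equiv 0$ and $\text{RS}_i^{\text{adj}}\equiv 1$ so that both covariance terms vanish. Nothing further is needed.
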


Equation (\ref{eq:profshare_nomonopFCRS}) in Corollary \ref{corr:corr:nomonopFCRS} states that under the assumptions of cost minimization, price-taking behavior in input markets, no fixed costs, and continuously-differentiable production functions of homogeneity degree one, the aggregate profit share can be computed as the input-output multiplier---the ratio of sales to GDP---times one minus the inverse of the harmonic sales-weighted markup. This expression of the aggregate profit share obtains in the general framework of \cite{baqaee2020productivity}.\footnote{This profit share would be recorded in the first row and $f^*$-th column of the Leoentief inverse matrix augmented with a fictitious factor called ``profits'' and denoted $f^*$.} Such an expression can be used for inference provided one is willing to make strong assumptions on returns to scale. Equipped with markup estimates for all producers in the economy, one can compute the harmonic sales-weighted average markup, read the input-output multiplier from National Accounts, and then infer the aggregate profit share. Similarly, provided knowledge on the aggregate profit share, one can rearrange equation (\ref{eq:profshare_nomonopFCRS}) and back out the aggregate markup as
    \begin{align}\label{eq:markup_backout_cor3}
       \overline\mu_{hsw} = \left(1 - \frac{\Lambda_\Pi}{\chi}\right)^{-1}.
    \end{align}
Importantly, notice that taking input-output networks into account is crucial for inferring the aggregate profit share from micro data and for backing out the aggregate markup from macro data. Neglecting production networks would lead to underestimating the profit share by a factor of $\chi$, which is around two in the United States. Similarly, ignoring production networks would lead to overstating the aggregate markup by a factor of
    \begin{align*}
        \frac{(\chi - 1) \Lambda_\Pi}{(1-\Lambda_\Pi)(\chi-\Lambda_\Pi)}.
    \end{align*}
This bias can be substantial. To see this, suppose that $\chi = 2$ and $\Lambda_\Pi = 0.2$. Then, ignoring production networks would lead us to conclude that the aggregate markup is 113\% larger than it actually is. Finally, notice that if there is no vertical integration in the economy, $\chi = 1$, and neither inference of the aggregate markup nor that of the profit share are biased.\par 

Equation (\ref{eq:profshare_nomonopFC}) in Corollary \ref{corr:nomonopFC} presents a more interesting case by allowing for decreasing and increasing returns to scale in production. In that case, the aggregate profit share can be constructed as the input-output multiplier times one minus sales-weighted returns to scale divided by the harmonic sales-weighted markup and a covariance term.\par 

Finally, Corollary \ref{corr:nomonop} relaxes the assumption of no fixed costs. Under a no-monopsony assumption, the aggregate profit share can be constructed as the input-output multiplier times one minus sales-weighted returns to scale divided by the harmonic sales-weighted markup and a covariance term. Returns to scale equal the scale elasticity times a fixed-cost adjustment factor. In Appendix \ref{app:basu_deu}, we use Corollary \ref{corr:nomonop} to elucidate the back-and-forth discussion between \cite{basu2019price} and \cite{de2020rise} on how to map micro-levels estimates of market power to the aggregate profit share. \par

\section{Data and Methodology}\label{sec:data}
In this section, we review our data sources and the methodology employed for estimating markups and inferring the aggregate profit share. \par 

\subsection{Compustat}
Our main data comes from Compustat North America Fundamentals Annual, which we retrieve through the Wharton Research Data Services (WRDS.) Compustat provides balance-sheet data for US-incorporated publicly-traded firms. \par 
\medskip  

\noindent \textbf{Sample selection}. We download annual data for all available firms between 1966 and 2020. We deflate all financial variables using appropriate price deflators from US NIPA, and classify firms according to two-digit NAICS industries to estimate input elasticities. The NAICS industry classification is adapted to conform with the Bureau of Economic Analysis (BEA) when classifying industries.\footnote{This essentially means that 2-digit NAICS industries 31-33, 44-45, and 48-49 are grouped together.} We drop observations from our sample when firms have missing fiscal years, no NAICS code, or multiple NAICS codes. We also drop observations that register negative or missing values for sales, costs of goods sold, selling, general and administrative expenses, R\&D, or investment since all these variables are essential for estimating markups. Following \cite{de2020rise}, we drop observations with the ratio of sales to COGS in the 1st and 99th percentiles. We construct R\&D following \cite{peters2017intangible}, and treat it as a fixed cost contributing to the future stock of intangible capital. We measure a firm's total capital stock as the sum of physical and intangible capital.\footnote{Intangible capital, such as patents, trademarks, and software, has become an increasingly important input in firms' production processes over the course of our sample period \citep[see][]{crouzet2022}.} We use the K\_INT variable from \cite{peters2017intangible} to measure intangible capital. K\_INT is constructed by applying the perpetual inventory method to Compustat firms. We measure physical capital as Property, Plant, and Equipment (PPEGT.) We provide basic summary statistics in Appendix \ref{app:data}, Table \ref{tab:sumstats_compustat}.\par


\subsection{NIPA Tables and BEA}
We retrieve data on sectoral sales, value added, and GDP from the US National Income and Product Accounts (NIPA.) We use these data to compute Domar weights and input-output multipliers for the different industries. In essence, this procedure assumes the representativeness of Compustat data for the overall population of US firms. While this is a stark assumption, it allows us to develop a proxy for the aggregate profit share from micro-level data. \par
\newpage

\subsection{Production Function Estimation}\label{sec:PFE}
One popular approach for studying market power relies on the production function approach pioneered by \cite{hall1988relation}.\footnote{There is an alternative approach to obtain markups based on demand estimation. The classic reference in this literature is \cite{berry1995auto}. The vast majority of papers in this literature focus on market power in a narrowly defined market \cite[see, for example,][]{nevo2001measuring, berto2007vertical}. One exception in this literature is \cite{dopper2023rising}, who estimate markups for more than 100 distinct product categories in the United States from 2006 to 2019 using Kilts Nielsen Retail Scanner Data.} \citeauthor{hall1988relation}'s seminal contribution was to derive an expression for the markup by taking the first-order condition of a cost-minimizing producer with respect to a variable input, assuming the producer acts as a price taker in that input's market. Formally, he showed that the markup can be computed as
    \begin{align}\label{eq:markup_Hall}
        \mu &= \frac{\theta_j}{\alpha_j},
    \end{align}
where $\theta_j\equiv (\partial y/\partial x_j)$ is the elasticity of output with respect to variable input $j$, and $\alpha_j\equiv p_jx_j/(py)$ is the revenue share of input $j$. \par 

From equation (\ref{eq:markup_Hall}), it is clear that in order to obtain an empirical counterpart of the markup, one needs data on revenue and input spending $\{py, p_jx_j\}$, as well as on a variable input's output elasticity $\theta_j$. The main empirical challenge in markup estimation is obtaining reliable estimates of the variable input's output elasticity, $\hat\theta_j$.\footnote{More recently, the production function approach has been extended by \cite{brooks2021exploitation} and \cite{yeh2022monopsony} to allow for market power in input markets. The novel insight of these studies is to show that markdowns can be estimated using a two-step procedure where markups are obtained in the first step.}  In this respect, the literature has benefited from the work of economists who have developed specialized econometric methods tailored to this purpose.\footnote{See, for instance, \cite{olley1996}, \cite{levinsohn2003estimating}, \cite{wooldridge2009estimating}, \cite{loecker2012markups}, \cite{gandhi2014identification}, and \cite{ackerberg2015identification}.} \par 
\medskip 

\noindent \textbf{Control Function Approach}. We estimate elasticities using a control function approach. We assume that producers generate output $y$ using a Cobb-Douglas technology that takes as arguments one variable input $\ell$ and one quasi-fixed input $k$. The associated log regression is
    \begin{align}\label{eq:CD_prod_reg}
        \log y_{it} = \theta_\ell\log \ell_{it} + \theta_k\log k_{it} + \omega_{it} + \varepsilon_{it},
    \end{align}
where $(\theta_\ell, \theta_k$) are output elasticities, $\omega_{it}$ is the idiosyncratic productivity, which is observed by the producer (\textit{potentially} prior to making input choices) but not by the econometrician, and $\varepsilon_{it}$ is an unanticipated shock to output or productivity which is observed by neither the econometrician nor the producer and may simply capture measurement error. \par 

The presence of time-varying firm-level productivity, $\omega_{it}$, raises standard concerns regarding simultaneity bias. To address this source of bias, we follow \cite{ackerberg2015identification} and estimate the production function (\ref{eq:CD_prod_reg}) using a two-step procedure based on a control function for the productivity process.\footnote{\cite{ackerberg2015identification} argue that the first stage of the estimation procedures proposed by \cite{olley1996} and \cite{levinsohn2003estimating} may suffer from a functional-dependence problem that prevents identification of the variable input's elasticity. To solve this issue, they propose inverting conditional (rather than unconditional) demand functions for the proxy variable and then estimating the variable input's elasticity (along with all other production-function parameters) in the second stage.} In the first stage, we clean measurement error and the unanticipated productivity shock from output using a non-parametric regression. Then, we estimate production-function parameters in the second stage using appropriate moment conditions via GMM. \par 

More specifically, we estimate elasticities using \cite{olley1996}'s methodology coupled with \cite{ackerberg2015identification}'s correction. Our baseline estimates rely on Cobb--Douglas technologies, allow elasticities to be time-varying, and impose common elasticities across 2-digit NAICS industries, which we operationalize by running separate regressions for each industry. In each case, in the first stage of the estimation procedure, we clean out measurement errors and unanticipated shocks by regressing the outcome variable on a third-degree polynomial of the state, free- and proxy variables. In the second stage, we recover the underlying productivity measure and use an AR(1) process for productivity to estimate elasticities using investment as a proxy variable to instrument productivity. When we run our regression (\ref{eq:CD_prod_reg}), we use the capital stock from the previous period since Compustat records end-of-period capital assets. Table \ref{tab:CFA_structure} summarizes our estimation procedure.\footnote{In our baseline specification, the stock of capital is the sum of physical- and intangible capital. In Appendix \ref{app:no_itan}, we show that none of our main empirical results are driven by the inclusion of intangibles.}\par 
\begin{table}[h!]
    \centering
    \caption{Estimation Details in the Application of the Control--Function Approach.}
    \begin{tabular}{l | c}
    \toprule 
    \textbf{Technology} & Cobb--Douglas\\ 
    \textbf{Elasticities} & Time-varying, 9-year rolling windows \\
    \textbf{Method} & \multicolumn{1}{c}{\cite{olley1996}} \\
    \textbf{Productivity process} & \multicolumn{1}{c}{AR(1)} \\ 
    \textbf{Degree of polynomial} &\multicolumn{1}{c}{3rd} \\
    \textbf{\cite{ackerberg2015identification}'s correction} & \multicolumn{1}{c}{\checkmark} \\
    \textbf{Deflated variables} & \multicolumn{1}{c}{\checkmark} \\
    \cdashline{1-2}
    \textbf{Outcome}  & \multicolumn{1}{c}{SALE}\\
    \textbf{State}    & \multicolumn{1}{c}{PPEGT + K\_INT } \\
    \textbf{Free}     & \multicolumn{1}{c}{OPEX ($=$ COGS $+$ SG\&A)} \\
    \textbf{Proxy}    & \multicolumn{1}{c}{ICAPT} \\
    \bottomrule 
    \end{tabular}
    \label{tab:CFA_structure}
    \vspace{4pt}
    \begin{minipage}{1\textwidth}
        \scriptsize{\textbf{Table Notes}. Elasticities are estimated using 9-year rolling windows. That is, for year $t$, we use information from years $t-4,t-3,t-2, t-1, t, t+1, t+2, t+3, t+4$. Hence, to estimate elasticities for the period 1970--2020, we need data starting from 1966 to 2024. Data from 1966 to 2020 is available, but data from 2024 is not. Hence, for years 2016--2020, we set the elasticities to be those estimated in 2016--the last year for which a 9-year panel regression can be run.}  
    \end{minipage}
\end{table}

Our estimation procedure provides elasticity estimates for each industry. We remove outliers from our sample of estimated elasticities. We consider an elasticity to be an outlier if it is more than one standard deviation away from its group mean, where a group is a 2-digit NAICS industry. We set outliers to missing values and then replace them by linearly interpolating each group's input elasticities. Lastly, we winsorize elasticities at the 5th and 95th percentile. \par 

Equipped with elasticity estimates, we compute scale elasticities at the industry level as the sum of input elasticities. We obtain firm-specific returns to scale by multiplying the industry-level scale elasticity with a firm-specific fixed-cost adjustment factor. We compute markups for each producer as $\mu_i = \hat\theta_{j\ell}\alpha_{i\ell}^{-1}$, where $\hat\theta_{j\ell}$ is the estimated elasticity of the variable input for producer $i$ which belongs to industry $j$. In section \ref{sec:empirics}, Table \ref{tab:sectoral_estimates}, we report markup and return to scale estimates at the sectoral level for 2019. The evolution of variable input elasticities, returns to scale, and markups are provided for each industry in Appendices \ref{app:elast_het}--\ref{app:markup_het}. \par 
\bigskip 

\noindent \textbf{Revenue vs. Output Elasticities}. Because only revenue data are available, we cannot estimate output elasticities. Instead, we estimate revenue elasticities. As first noted by \cite{klette1996inconsistency}, using revenue elasticities generally results in biased markups. In Appendix \ref{app:revenue_bias}, we show that although using revenue elasticities may bias our markup estimates, our estimates of profit rates and the profit share are unaffected. In that appendix, we also discuss how recent criticisms raised by \cite{doraszelski2020inconsistency}, \cite{bond2021some} and \cite{de2022hitchhiker} affect our results. \par 
 

\section{Empirical Results}\label{sec:empirics}
In this section, we compute the micro-aggregated profit share using the theoretical results presented in Section \ref{sec:theory}. We benchmark our estimates of the micro-aggregated profit share to macro estimates and discuss their macroeconomic implications. We also provide some results at the industry level. \par 

Figure \ref{fig:agg_markup} plots the time series of the components that make up the micro-aggregated profit share as given by equation (\ref{eq:profshare_nomonop}); that is, assuming no monopsony power.\footnote{As noted earlier, future versions of the paper will include empirical results with monopsony power. The reason why we cannot offer such results with Compustat is that we have no data on materials and, hence, we cannot estimate markdowns. However, we do not expect very different estimates for the profit share since our current markup estimates are a composite of monopoly and monopsony power. Nevertheless, we recognize that accounting for monopsony can potentially affect our markup estimates.} Panel A displays three series: the aggregate markup, aggregate returns to scale, and the aggregate scale elasticity. Panel B depicts the sales-weighted covariance of firm-level returns to scale with inverse markups.  \par 
\begin{figure}[h!]
    \centering
    \caption{Aggregate Markup, Aggregate Returns to Scale, and its Covariance.}
    \includegraphics[scale=0.4]{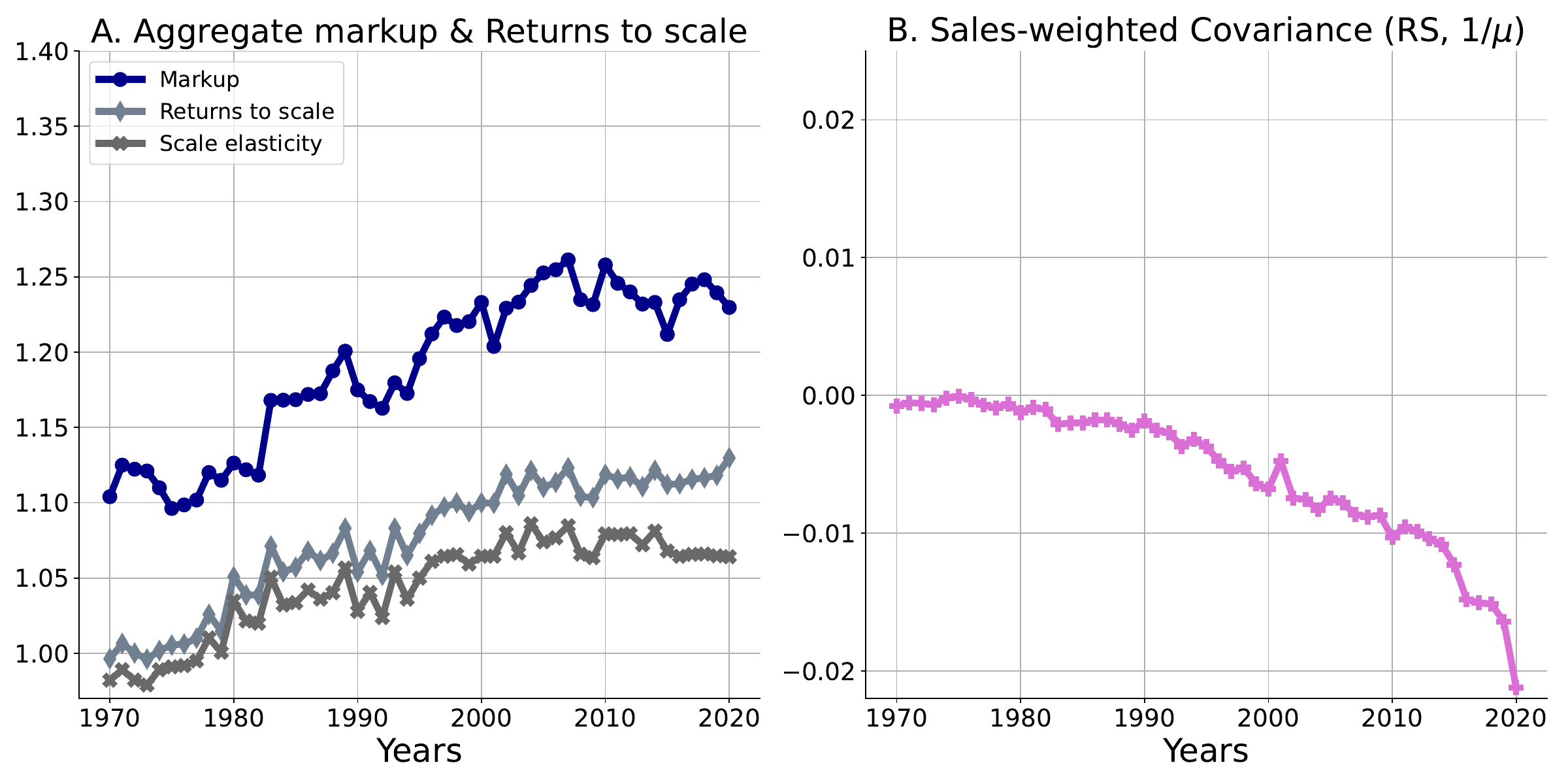}
    \label{fig:agg_markup}
    \begin{minipage}{1\textwidth}
        \scriptsize{\textbf{Figure Notes}. Firm-level data comes from US Compustat, 1970--2020. }  
    \end{minipage}
\end{figure}

The aggregate markup has increased from 1.10 in 1970 to 1.23 in 2020; that is, from 10\% of price over marginal cost to 23\%. Trends are the same as in \cite{de2020rise}, but levels are much lower. As noted earlier, this is mostly because the harmonic sales-weighted markup is much lower than the sales-weighted markup. Another reason why our markup estimates differ from theirs is that we categorize selling, general and administrative expenditures (SG\&A,) as well as the costs of goods sold (COGS,) as variable inputs, in line with \cite{traina2018aggregate}, while \citeauthor*{de2020rise} treat SG\&A as fixed costs. We also define the stock of capital to be a composite of physical- and intangible capital while they only consider physical capital.\footnote{In Appendix \ref{app:deu_rep} we show that making the same data choices as \citeauthor{de2020rise}, we replicate their results. We also compare our markup series to theirs, and explain the differences between them.} Looking at returns to scale, we can see that returns to scale have increased from 1.00 in 1970 to 1.13 in 2020. Half of this increase is explained by a rising scale elasticity, which has gone up from 0.98 in 1970 to 1.06 in 2020, and the remaining part by the fixed-cost adjustment factor, $(\text{TC}-\text{FC})/\text{FC}$, which has risen from 1.02 in 1970 to 1.06 in 2020. The divergence between the two series, returns to scale and the scale elasticity, thus reflects the increase in R\&D spending, which we treat as an explicit fixed cost. \par  

The covariance between returns to scale and inverse markups is slightly negative at the beginning of the sample and becomes more negative as time passes. We find this pattern to be reasonable. A negative covariance between inverse markups and returns to scale means that industries with higher fixed costs tend to charge higher markups.\par 

Equipped with the input-output multiplier---that is, the ratio of sales to GDP---and the measures provided in the previous figure, we compute the micro-aggregated profit share using equation (\ref{eq:profshare_nomonop}).\footnote{In Appendix \ref{app:IO_mult} we show that the input-output multiplier in the US---the ratio of sales to GDP---is mildly procyclical and roughly stable at around 1.8.} The results are depicted in Figure \ref{fig:microprofitshare}, which also shows the profit share that would obtain under zero covariance, and zero covariance and constant returns to scale. \par 
\begin{figure}[h!]
    \centering
    \caption{The Micro--Aggregated Profit Share in the United States, 1970--2020.}
    \includegraphics[scale=0.4]{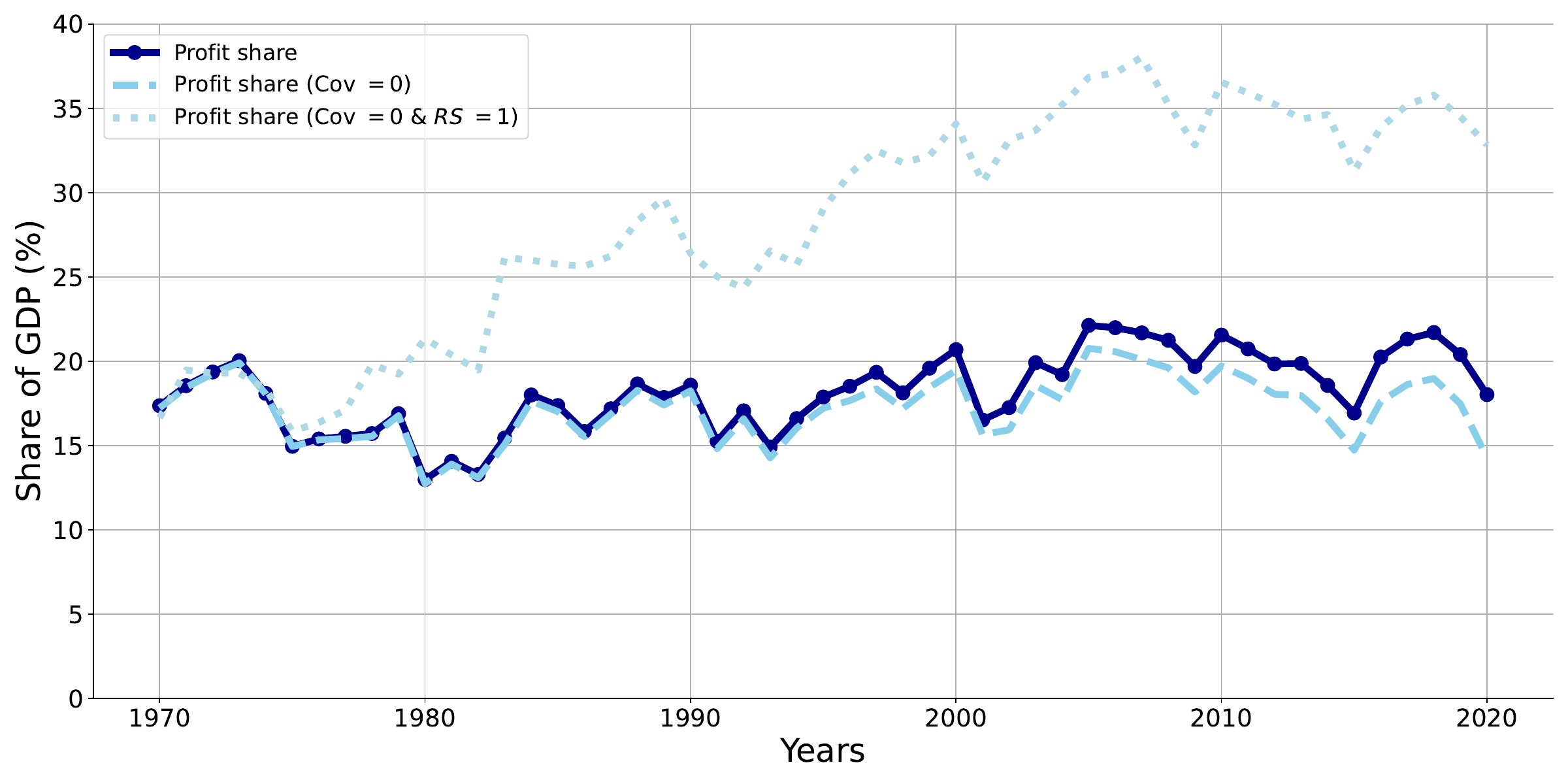}
    \label{fig:microprofitshare}
    \begin{minipage}{1\textwidth}
        \scriptsize{\textbf{Figure Notes}. Firm-level data comes from US Compustat, and data on value-added and GDP comes at the aggregate level from the BEA. The micro-aggregated profit share is expressed in terms of profits divided by value added.}  
    \end{minipage}
\end{figure}

Figure \ref{fig:microprofitshare} indicates that the profit share in the United States has been constant at around 18\% of GDP over the past fifty years.\footnote{Because of double marginalization, a profit share of 18\% implies that the average firm in the United States has a profit rate of about 10\%.} We note that our series likely provides an upper bound for the profit share for several reasons. First, Compustat data covers only publicly-listed firms, which are not representative of the US population of firms and likely have higher profit rates than smaller firms. Second, if intangibles or fixed costs are underestimated, the profit share will be overestimated. We have good reasons to believe that fixed costs, as well as organizational- and intangible capital, are underestimated as we will argue shortly on. The series in Figure \ref{fig:microprofitshare} which impose zero covariance and constant returns to scale stress the importance of accounting for both of these terms, especially non-constant returns to scale. If we had estimated the profit share imposing constant returns to scale, we would have concluded that the profit share in 2020 was twice as large as we report. \par
\bigskip

\noindent \textbf{Decomposing the Profit Share in Several Market-Power Forces}. An advantage of using the microeconomic approach (instead of the macroeconomic one) to compute the profit share is that it allows us to understand the determinants of the profit share. More specifically, the micro-aggregated profit share can be decomposed into fractions of value added pertaining to monopoly rents, fixed costs, and non-linearities. That is, 
    \begin{align}
        \Lambda_\Pi &= \chi\left(1 - \frac{\overline{\text{RS}}}{\overline\mu_{hsw}} - \text{Cov}_\omega\left[\text{RS}, \frac{1}{\mu}\right]\right) \tag{\ref{eq:profshare_nomonop}} \\ 
        &= \underbrace{\chi\left(1-\frac{1}{\overline\mu_{hsw}}\right)}_{\text{Monopoly rents}} + \underbrace{\chi\big(1-\overline{\text{RS}}\big)}_{\substack{\text{Fixed costs and} \\ \text{changing technology}}} + \underbrace{\chi\left\{\left(\frac{1}{\overline\mu_{hsw}}-1\right)\big(1-\overline{\text{RS}}\big) - \text{Cov}_\omega\left[\text{RS}, \frac{1}{\mu}\right]\right\}}_{\text{Non-linearities}}
    \end{align}
In Figure \ref{fig:profshare_decomp_mu_RS_cross}, we offer the results of this decomposition. The term ``monopoly rents'' has increased from 17\% to 33\% of aggregate value added because producers have increased their markups over time. The term ``fixed costs and changing technology'' has decreased from 0\% to $-$23\% of value added because of increasing fixed costs and a rising scale elasticity. And the term ``non-linearities'' has increased from 0\% to 8\% because the aggregate markup and aggregate returns to scale are both positive and increasing over time, and also because the covariance term is increasingly negative. As pointed out earlier, the profit share has been roughly constant at around 18\% of GDP, with movements around the trend reflecting cyclical variation. Thus, although market power increased, profitability did not because the increase in monopoly rents was entirely offset by rising fixed costs and changing technology. \par 
\begin{figure}[h!]
    \centering
    \caption{US Profit Share Decomposition into Sources of Market-Power, 1970--2020.}
    \includegraphics[scale=0.38]{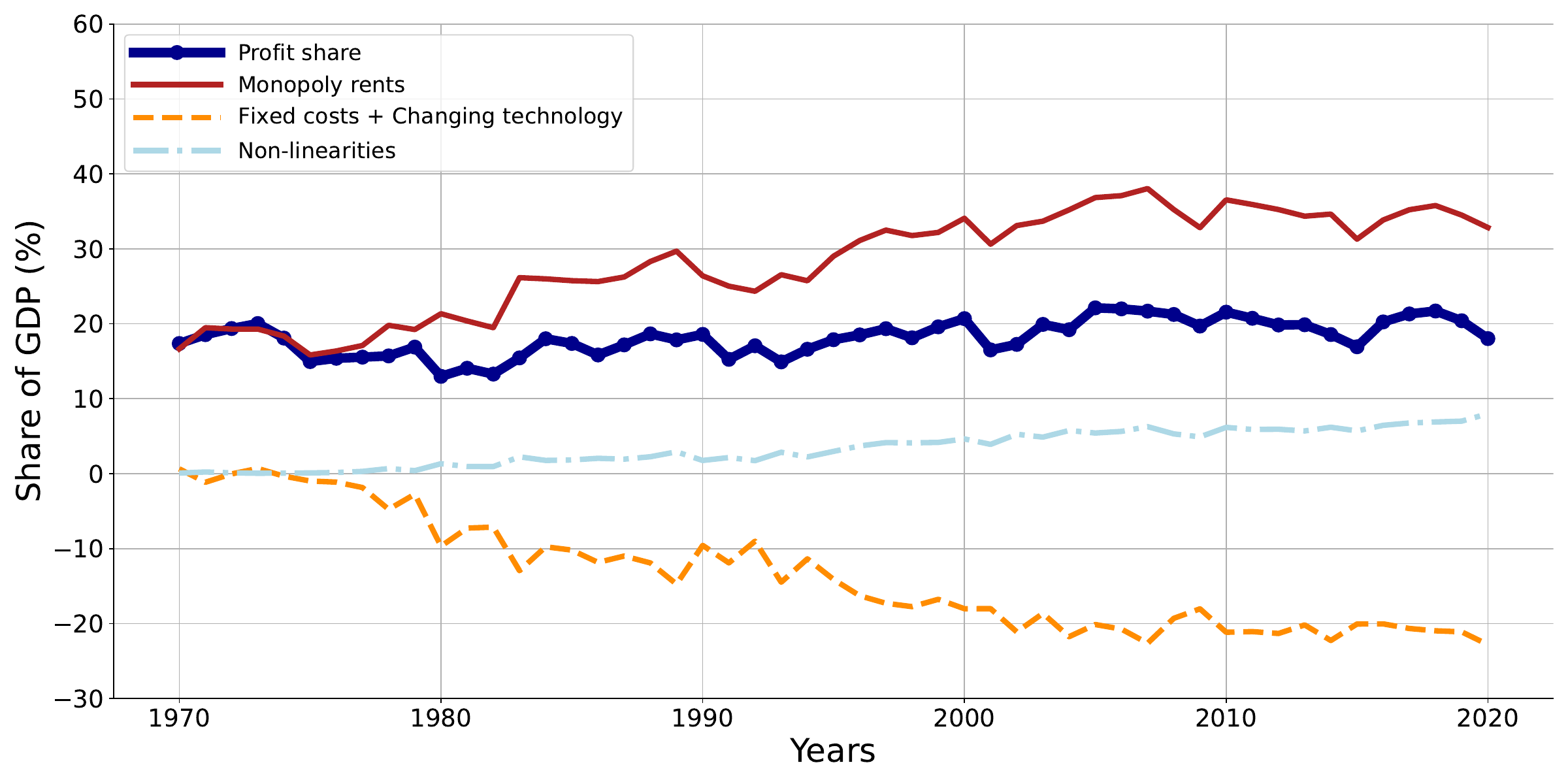}
    \label{fig:profshare_decomp_mu_RS_cross}
\end{figure}

\noindent \textbf{Benchmarking our results}. Figure \ref{fig:microprofitshare_benchmarking} benchmarks our estimates of the profit share to the profit share of \cite{karabarbounis2019accounting} using the macro approach.\footnote{We take the profit share from ``Case $\Pi$'' in their paper. To be clear, they refer to this series as the ``factorless income'' share.} \par 
\begin{figure}[h!]
    \centering
    \caption{The Profit Share in the United States: Micro vs. Macro Estimates.}
    \vspace{-10pt}
    \includegraphics[scale=0.4]{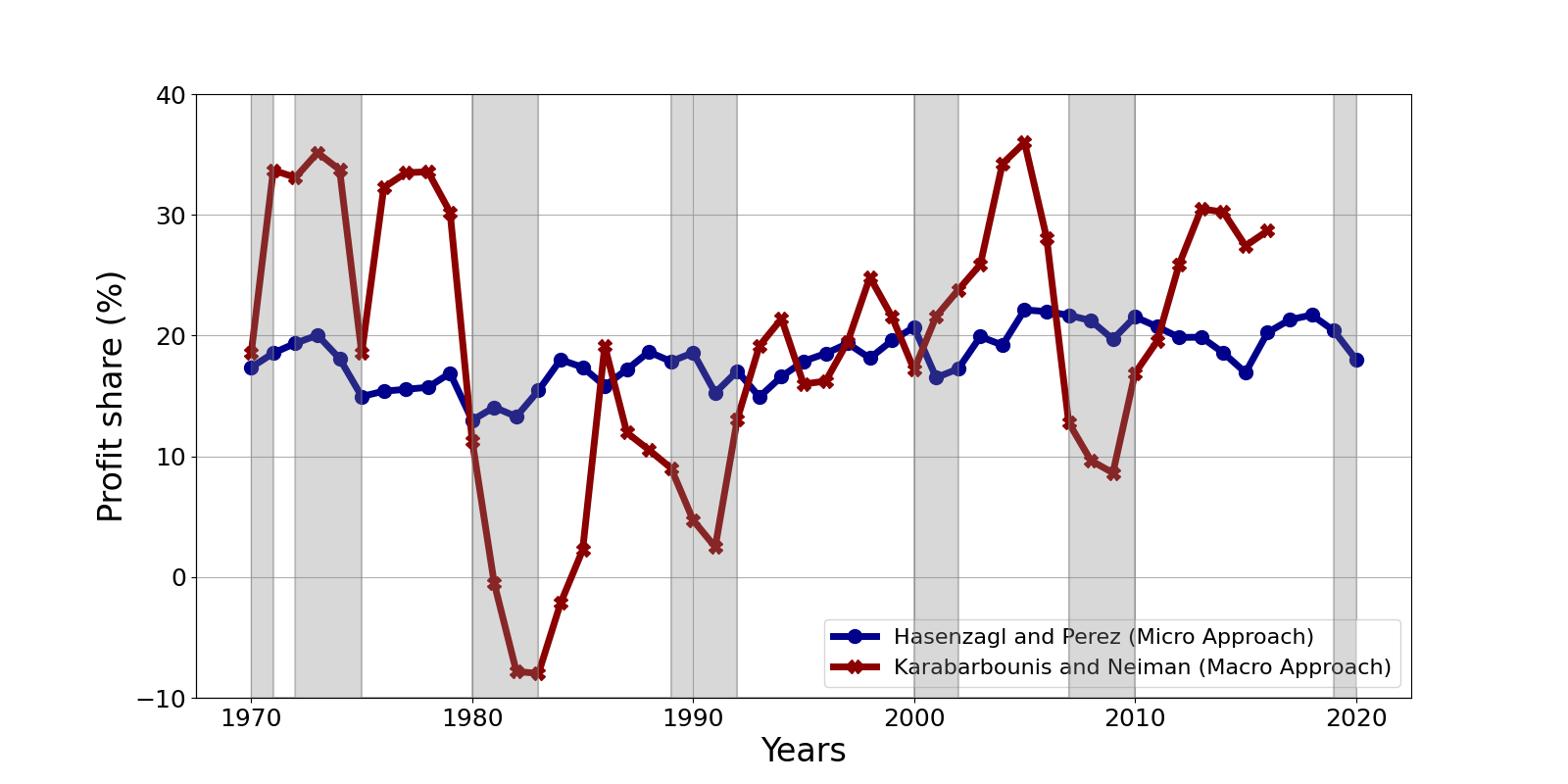}
    \label{fig:microprofitshare_benchmarking}
    \begin{minipage}{1\textwidth}
        \scriptsize{\textbf{Figure Notes}. The micro-aggregated profit share is computed using firm-level data from Compustat, assuming representativeness of these data, and using sales and GDP from the BEA. The \citeauthor{karabarbounis2019accounting}'s series is a non-smoothed version of the one reported in their paper and has been constructed using their replication files. They construct the profit share using aggregate data from National Accounts and imputing an aggregate user-cost of capital.}  
    \end{minipage}
\end{figure}

Our series of the profit share is a better measure of pure profits than the factorless-income series of \citeauthor{karabarbounis2019accounting}. First, their series is extremely volatile, and second, it implies unreasonably low capital shares. Their estimates of factorless-income in the early 1970s do, for example, suggest a capital share of 0\%, which is not realistic. Our series is more stable and implies more plausible capital shares. In line with \citeauthor{karabarbounis2019accounting}, and consistent with our countercyclical aggregate markup estimates, we find procyclical profit shares. Although our profit share series also fluctuates because of cyclical variation, we find these fluctuations to be within reasonable bounds. Finally, we follow \citeauthor{karabarbounis2019accounting} and compute the correlation between the profit share and the aggregate user cost of capital. With their series, we find a large negative correlation of 0.83; with our series, the correlation is also negative, but about a third in magnitude. As they note, these negative correlations suggest that both profit share series capture some form of unmeasured capital. \par 

We can also compare our results to those of \cite{barkai2020declining}, who also estimates capital- and profit shares for the United States using data from National Accounts. \citeauthor{barkai2020declining} estimates that the profit share has grown from $-$5\% in 1985 to 8\% in 2014. We find his profit share estimates substantially less reasonable than ours. First, \citeauthor{barkai2020declining}'s profit share is negative for roughly twenty years and then increases by fifteen percentage points in a matter of three years. Second, our profit share estimates are consistent with \cite{karabarbounis2023perspectives} estimates of the labor share while \citeauthor{barkai2020declining}'s are not. In 2014, for instance, our profit share is consistent with a labor share of 62\%, as estimated by \citeauthor{karabarbounis2023perspectives}, and a capital share of 20\%. This capital share estimate is five percentage points lower than \citeauthor{barkai2020declining}'s, which means that the implied labor share in \citeauthor{barkai2020declining} differs by six percentage points from that of \citeauthor{karabarbounis2023perspectives}. In Appendix \ref{app:income_shares}, we discuss the implications of our profit share estimates for capital income shares.\par 

\bigskip 

\noindent \textbf{Assessing \cite{de2020rise}'s Estimates}. \citeauthor{de2020rise}'s estimates of markups and profitability have generated substantial debate in academic and policy forums. Scholars have scrutinized their methodology and their estimates. Here, we discuss the most relevant issues in calculating aggregate markups and the implications of their estimates in terms of profitability. \par 

\textit{Weighting firm-level markups}. An important choice is how to aggregate markups. \cite{karabarbounis2019accounting} noted that aggregating markups using cost (rather than sales) weights dramatically changes the magnitude of the markup series. This goes back to the discussion on which notion of aggregate markup is appropriate to use. In this paper, we provide an argument for using the harmonic sales-weighted markup based on our aggregation theorems. As pointed out by \cite*{edmond2023costly}, the harmonic sales-weighted markup is equivalent to a cost-weighted markup when cost elasticities are homogeneous across producers, in which case the point raised by \citeauthor{karabarbounis2019accounting} remains valid.\par 

\textit{Measuring variable costs}. \cite{traina2018aggregate} argues that \citeauthor{de2020rise}'s estimates substantially overstate true markups because they do not correctly account for variable costs.\footnote{\cite{de2020rise} measure variable costs as the costs of goods sold (COGS). \cite{traina2018aggregate} and others argue that variable costs do not only include costs of goods sold but also selling, general, and administrative (SG\&A) expenses. The latter represent indirect inputs to production, such as sales or marketing expenditures, management expenditures, and accounting or legal expenditures.} Failing to properly account for variable costs may affect estimates of the variable input's elasticity and thus impact markup estimates, which are sensitive to these elasticities. Using a more appropriate categorization of variable and fixed costs, \citeauthor{traina2018aggregate} finds a more modest increase in markups. Following a dichotomy of fixed and variable costs similar to \citeauthor{traina2018aggregate}, we find much lower levels for aggregate markups compared to \citeauthor*{de2020rise}. Our results point to an increase in the aggregate markup from 1.13 in 1980 to 1.23 in 2016, whereas \citeauthor{de2020rise} report a hike in markups from 1.21 to 1.61. \par  

In Figure \ref{fig:cogsvsopex}, we plot aggregate markups, returns to scale, and the sales-weighted covariance of firm-level inverse markups with returns to scale using two different categorizations of variable costs, OPEX and COGS. As argued earlier, correctly categorizing variable costs is important for estimating input elasticities, markups, and returns to scale. This can be seen in panels A and B. Panel C displays the covariance term, which we use to compute micro-aggregated profit shares. \par 
\begin{figure}[h!]
    \centering
    \caption{Markups, Returns to Scale, and Variable Costs (OPEX vs. COGS).}
    \includegraphics[scale=0.38]{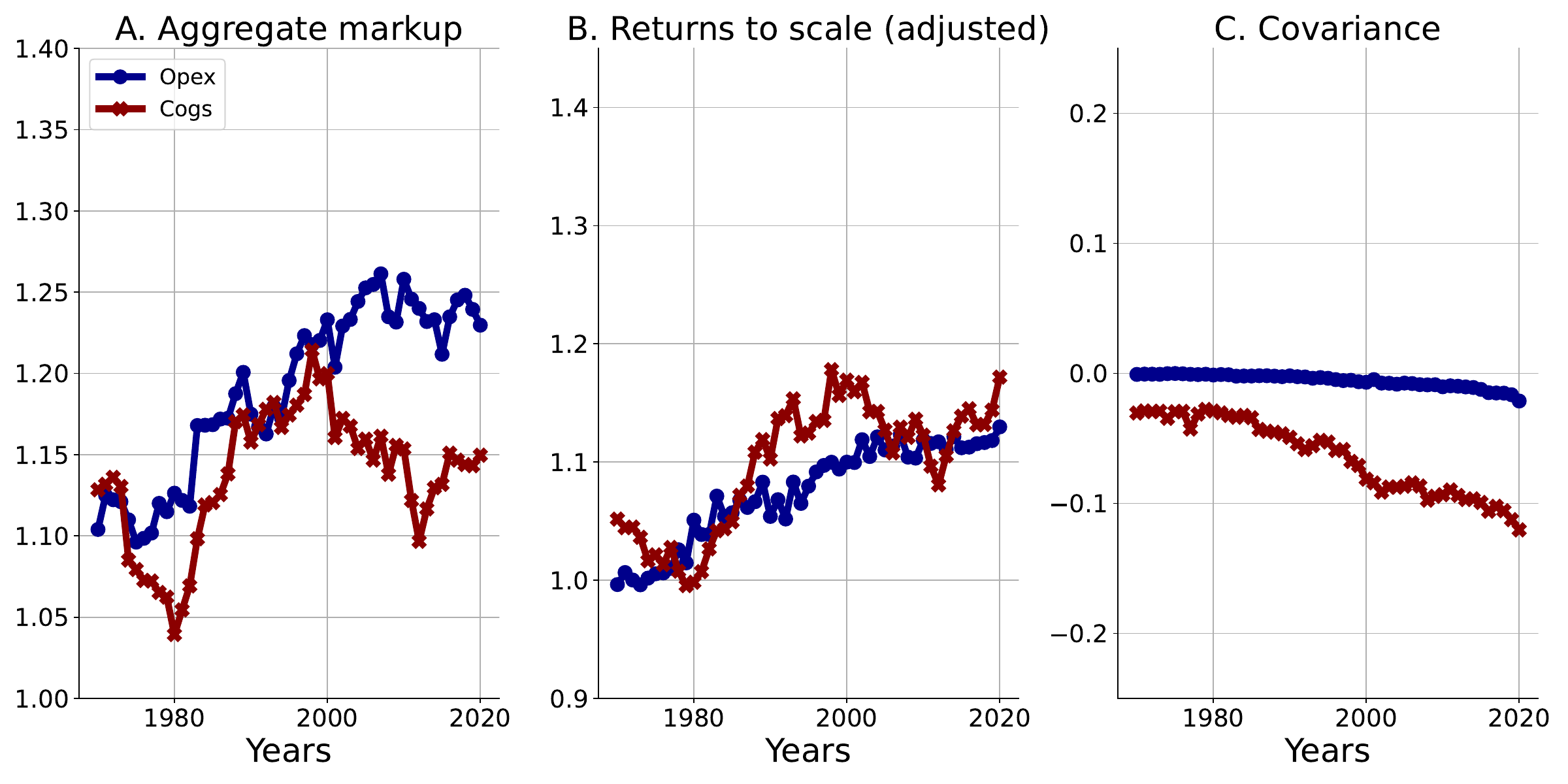}
    \label{fig:cogsvsopex}
    \begin{minipage}{1\textwidth}
        \scriptsize{\textbf{Figure Notes}. Aggregate markups are computed as harmonic, sales-weighted averages of individual firms' markups. The COGS series treats the costs of goods sold as the only source of variable costs. In contrast, the OPEX series considers costs of goods sold and selling, general and administrative expenses, as variable costs. Returns to scale are sales-weighted averages of firm-level returns to scale adjusted for fixed costs. The covariance is a sales-weighted covariance of firm-level inverse markups with returns to scale adjusted for fixed costs.}  
    \end{minipage}
\end{figure}

In this paper, we have dealt with both of these issues, categorizing variable costs as suggested by \cite{traina2018aggregate} and aggregating markups as implied by our aggregation theorem, and have shown that the aggregate markup in the US economy has increased from 1.10 in 1970 to 1.23 in 2020. Discrepancies in markups arising from different categorizations of variable inputs can be seen in Figure \ref{fig:cogsvsopex}. \par 

Finally, we touch upon the implications of \citeauthor{de2020rise}'s estimates of aggregate markups on the aggregate profit share, since this has been an area of debate in the literature. \citeauthor{basu2019price}, \citeauthor{barkai2020declining}, and others have argued that the micro estimates of \citeauthor*{de2020rise} have unreasonable implications for aggregate income shares. Specifically, \citeauthor{basu2019price} has argued that \citeauthor*{de2020rise}'s markup estimates imply an unrealistically high profit share of about 70\% (See Appendix \ref{app:basu_deu}.) Consistently with what we argued in Section \ref{sec:theory}, we have shown that firm-level estimates of markups, once properly measured and aggregated, no longer have unreasonable implications when fixed costs are accounted for. It is also easy to verify that while the categorization of variable costs does matter for markup estimates, it does not matter for estimates of the aggregate profit share, simply because profit rates are the same regardless of whether inputs are categorized as variable or fixed as long as all costs are accounted for. \par  

\subsection{The Profit Share and The User Cost of Capital}\label{sec:implied_r}
A critical input for obtaining profit rates and, thus, the profit share is the user cost of capital. This is true regardless of whether the profit share is computed using the macroeconomic or the microeconomic approach. In this section, we construct the profit share using three different user costs of capital: ours (henceforth, Hasenzagl--Perez,) \cite{de2020rise}'s, and \cite{karabarbounis2019accounting}'s. \par 

Theorem \ref{thm:profshare_decomp} implicitly assumes that we can compute the user cost of capital for each individual producer using its first-order condition for capital. The first-order condition of producer $i$ at time $t$ permits obtaining $i$'s user cost of capital as
    \begin{equation}\label{eq:r_HP}
        r_{it} = \frac{\theta^k_{jt}}{\mu_{it}}\times \frac{p_{it}y_{it}}{k_{it}}
    \end{equation}
where $\theta^k_{j}$ is the output elasticity of capital for the industry $j$ in which producer $i$ operates, $\mu$ is the markup, $py$ are sales, and $k$ is the capital stock. \par 

We believe that allowing for heterogeneous user costs of capital is an attractive feature of our method. First, if the composition of capital is different across producers and different types of capital generate heterogeneous returns, producers' user cost of capital will be different. Second, in the presence of financial frictions, rental rates of capital may be different across producers. Third, risk premia and other wedges may differ across individual producers or sectors. We capture these sources of heterogeneity by backing out the user cost of capital for each producer using its first-order condition. \par  

Equipped with heterogeneous user costs of capital given by equation (\ref{eq:r_HP}), we can compute the aggregate profit share using either Lemma \ref{lemma:microprofitshare} and equation (\ref{eq:DEU_profitrate}) or, alternatively, using equation (\ref{eq:profshare_nomonop}).\par 

\cite{de2020rise}'s user cost of capital is assumed to be equal across \textit{all} producers in the economy and computed as
    \begin{equation}\label{eq:DEU_r}
        r_t = i_t - \pi_t + \delta,    
    \end{equation}
where $i$ is the nominal interest rate, $\pi$ is the inflation rate, and $\delta$ is the depreciation rate.\par 

\cite{de2020rise} compute the user cost of capital using the federal funds effective rate (FF) as a proxy for the nominal interest rate, the growth rate of the GDP implicit price deflator (GDPDEF) as a proxy for the inflation rate, and an exogenous depreciation rate which they set to 0.12 for the entire sample period. Given this user cost, they compute producers' profit rates as
    \begin{equation}\label{eq:DEU_profitrate}
        s_{\pi_{it}} = 1 - \frac{\theta^v_{jt}}{\mu_{it}} - \frac{r_tk_{it}}{p_{it}y_{it}} - \frac{\text{FC}_{it}}{p_{it}y_{it}},
    \end{equation}
where $\theta_{j}^v$ is the elasticity of the variable input for the industry $j$ in which producer $i$ operates, $\mu$ is the markup, $py$ are sales, $k$ is capital, $\text{FC}$ are fixed costs, and $r$ is the common user-cost of capital. Once we have obtained profit rates for all producers, we can compute the aggregate profit share using Domar weights as shown in Lemma \ref{lemma:microprofitshare}.\par 

\cite{karabarbounis2019accounting} compute several user costs of capital. We focus on the user cost of capital calculated under what they label ``Case $R$.'' To calculate the user cost of capital, they use the equation
    \begin{equation}\label{eq:KN_usercost}
        P^QQ = WN + \tilde R^IK^I + \tilde R^NK^N + \Pi^Q, 
    \end{equation}
where $P^QQ$ is the business sector's value added, $WN$ are labor payments, $\tilde R^IK^I$ are capital payments to IT capital, $\tilde R^NK^N$ are capital payments to non-IT capital, and $\Pi^Q$ are business profits. They use $\tilde R^j$ for $j\in\{I, N\}$ to denote the ``revised'' rental rates of IT and non-IT capital and compute these taking $\{P^QQ, WN, K^I, K^N, \Pi^Q\}$ as given so that equation (\ref{eq:KN_usercost}) holds while being consistent with the Hall--Jorgenson formula for the measured (or non-revised) user costs of capital. They use the word ``revised'' because this way of computing the user cost of capital creates a wedge with measured rental rates. As is standard, they attribute the difference between measured- and revised rental rates to time-varying risk premia, adjustment costs, financial frictions, and the like. We borrow the time series of the revised rental rate from their paper and compute the profit share using this user cost for all producers using Lemma \ref{lemma:microprofitshare} and equation (\ref{eq:DEU_profitrate}). \par 

While we recognize that it is difficult to argue for one particular way of computing the user cost of capital, we believe that using the firm's first-order condition with respect to capital presents several advantages over other methods. First, and in contrast to both \citeauthor{de2020rise} and \citeauthor{karabarbounis2019accounting}, we allow for heterogeneous user costs of capital across producers. Second, we do not impose a homogeneous and constant depreciation rate for all producers as \citeauthor{de2020rise}.\footnote{There is substantial evidence that depreciation rates are heterogeneous across capital types (e.g., IT vs. non-IT) and that industries differ in the composition of their capital stocks. There is also evidence that rates of depreciation have increased over time as industries have become more reliant on technology.} Third, an advantage of estimating user costs of capital as we do is that these should be gross of wedges, meaning that risk premia, adjustment costs, financial frictions, and the like should be factored in at the level of each producer. \par 
\newpage 

Figure \ref{fig:usercosts} depicts the three different users costs of capital. The Hasenzagl--Perez series corresponds to the aggregate user cost of capital resulting from aggregating individual users costs.\footnote{One can think of this user cost of capital as the user cost of capital that a representative firm would face for the total capital stock in the economy.} Our series registers abrupt jumps around recessions and times of financial stress (e.g., 1973, 1980, 2000, 2007, 2012,) which is consistent with higher risk premia, adjustment costs, and heightened financial frictions during these times.\footnote{While we find evidence in favor of higher risk premia and heightened financial frictions during recessions, in line with \cite{gilchrist2012credit}, \cite{duarte2015equity}, \cite{caballero2017safe}, \cite{jorda2019rate} and others, and consistently with many theories of financial frictions, there is also evidence on the contrary \citep[see, for example][]{chari2007business, brinca2016accounting}.} The level of our series is close to that of \citeauthor{karabarbounis2019accounting}, which should also capture the aforementioned wedges. The level of the \citeauthor{de2020rise}'s series is generally higher than those of the Karabarbounis--Neiman and Hasenzagl--Perez series because they impose a constant depreciation rate of 0.12.\par 
\begin{figure}[h!]
    \centering
    \caption{Comparison of Aggregate User Costs of Capital Across Studies.}
    \includegraphics[scale=0.4]{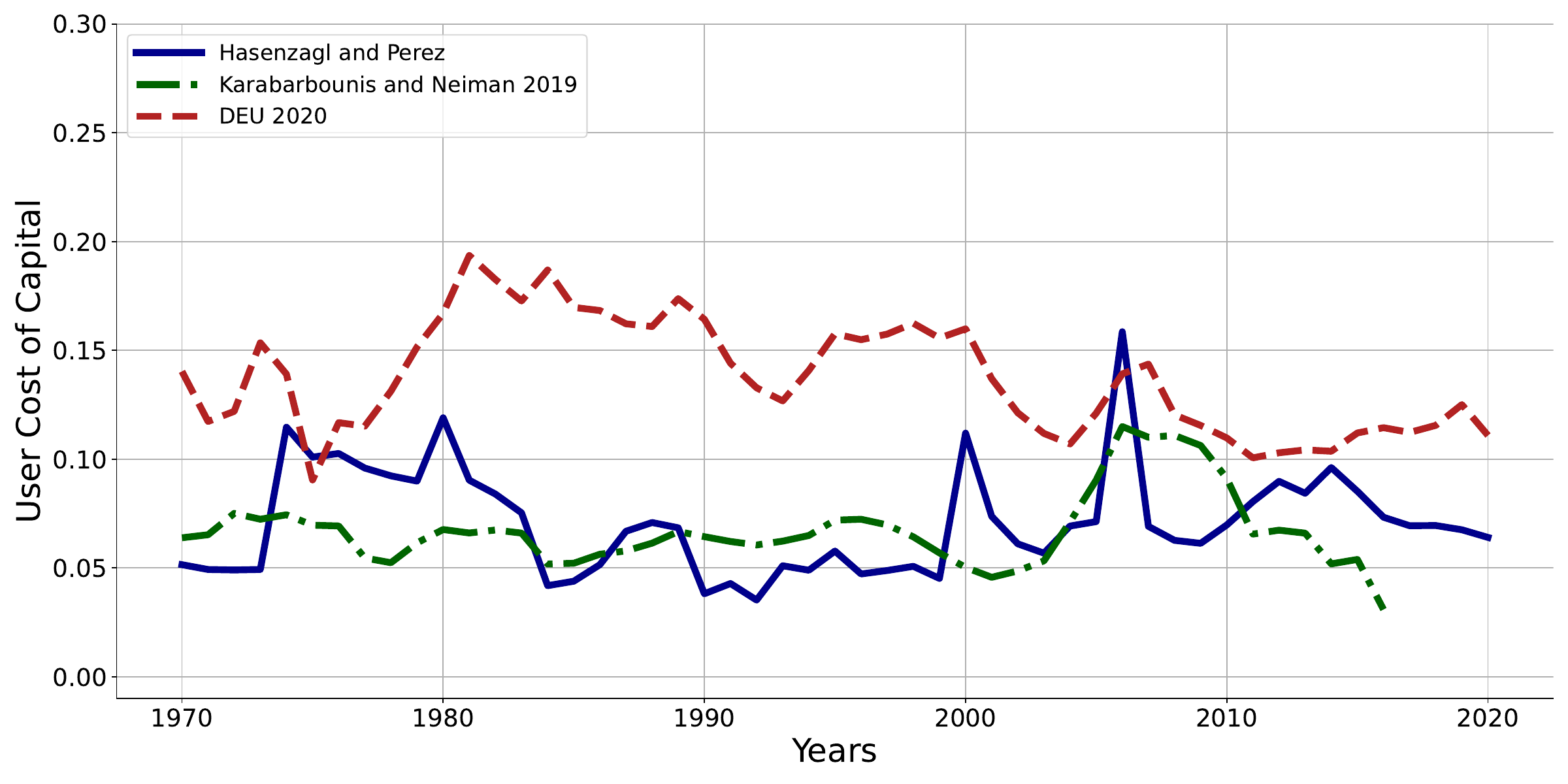}
    \label{fig:usercosts}
    \begin{minipage}{1\textwidth}
        \scriptsize{\textbf{Figure Notes}. The DEU 2020 series is calculated as the federal funds effective rate minus the GDP implicit price deflator plus a exogenous depreciation rate of twelve percent. The Karabarbounis and Neiman 2019 corresponds to the rental rate measure for ``Case $R$'' and is directly taken from their replication package. The Hasenzagl and Perez series is obtained by aggregating individual user cost of capital as implied by our theory, and the individual user cost of capital are recovered exploiting the firms' optimality conditions. All series for the user cost of capital are expressed as rates. }  
    \end{minipage}
\end{figure}

Figure \ref{fig:shares_usercosts} displays three different profit shares. We constructed these series by first computing firm-level profit rates using the different measures of the user cost of capital from figure \ref{fig:usercosts} and then aggregating these profit rates. The series of Hasenzagl--Perez and Karabarbounis--Neiman have similar trends, except for the period 2000--2010.\footnote{Even if the user cost of capital of Hasenzagl--Perez and Karabarbounis--Neiman are similar, profit shares may not be because of the underlying heterogeneity in user costs of capital in Hasenzagl--Perez.} The profit share implied by \citeauthor{de2020rise}'s user cost of capital exhibits a different trend: the profit share increases dramatically from the early 1980s to 2020. \par 
\begin{figure}[h!]
    \centering
    \caption{Profit Shares and User Costs of Capital.}
    \includegraphics[scale=0.4]{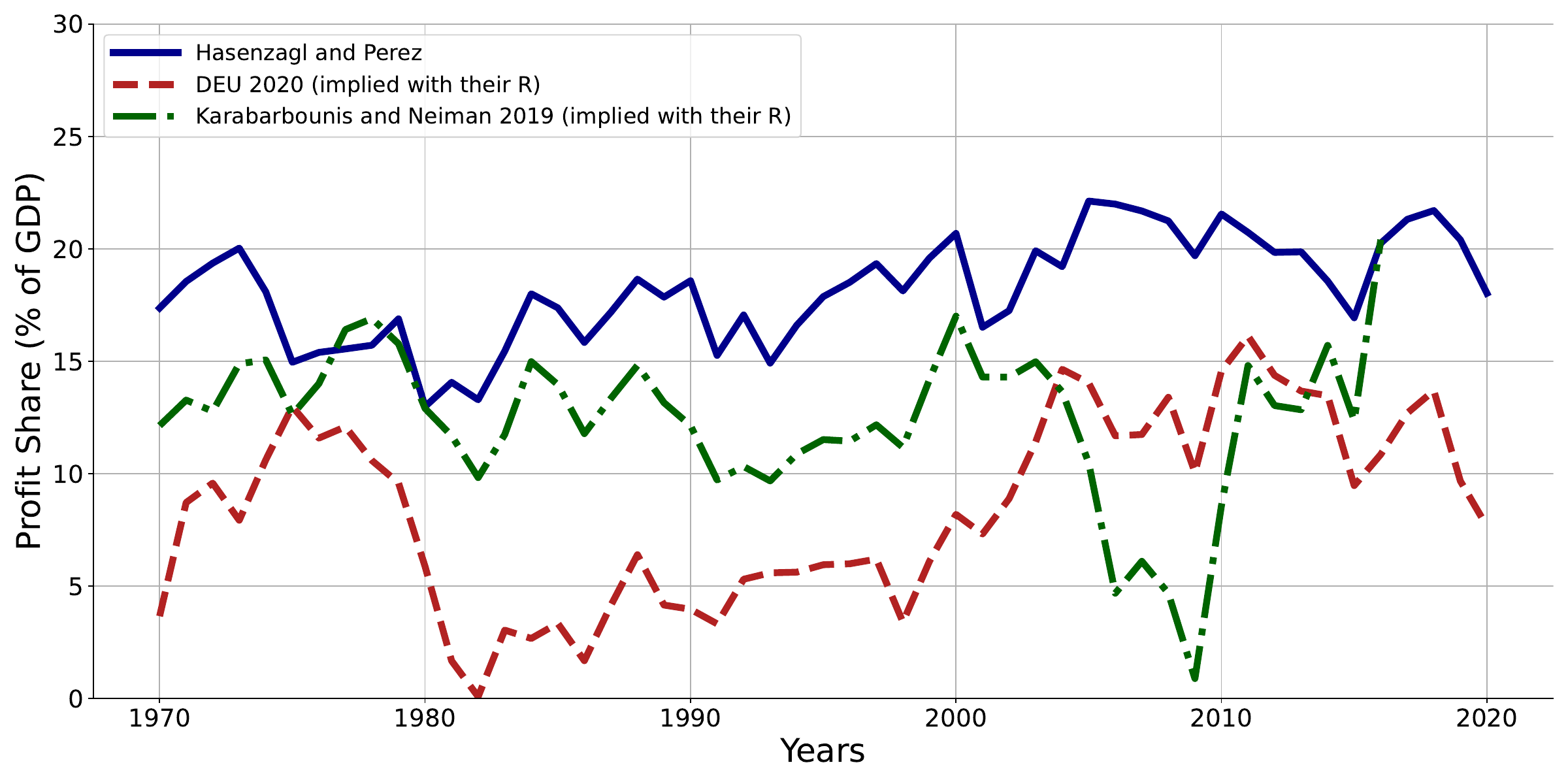}
    \label{fig:shares_usercosts}
    \begin{minipage}{1\textwidth}
        \scriptsize{\textbf{Figure Notes}. Each profit share series is constructed following the microeconomic approach and data from US Compustat. Each of these series is calculated using a different user cost of capital; those presented in Figure \ref{fig:usercosts}. Only the Hasenzagl and Perez series allows for heterogeneity in user costs of capital.  }  
    \end{minipage}
\end{figure}

Figures \ref{fig:usercosts} and \ref{fig:shares_usercosts} together shed light on the controversy surrounding the profit share: what is the most appropriate user cost of capital to use? While we believe that our aggregate user cost of capital displays a reasonable level and presents important advantages over other user cost measures (e.g., it captures risk premia, financial frictions, adjustment costs, and heterogeneity in financing opportunities across producers,) we acknowledge that any measure of the user cost of capital is subject to controversy.\par   

\subsection{Understanding the Rise in the Aggregate Markup}\label{sec:markup_decomp}

Our empirical results so far have shown that the rise in markups has been counteracted by rising fixed costs and changing technologies. But, has the increase in markups been homogeneous across firms? Or does the rise in the aggregate markup mask important micro-level heterogeneity? In this section, we tackle these questions, and also try to understand the compositional forces behind the rise in the aggregate markup. Did the aggregate markup increase because economic activity shifted towards firms with higher markups? Or because incumbent firms increased their markups over time? Or is it because new firms charge relatively higher markups? \par 

Figure \ref{fig:markup_kden} provides kernel densities of unweighted firm-level markups at different points in time. Clearly, the distribution of markups has become more spread over time, signaling an increase in variance, as first reported by \cite{de2020rise}. The flattening of the markup distribution is mostly driven by a fatter right tail. \par 
\begin{figure}[h!]
    \centering
    \caption{Kernel Densities of Unweighted Firm-level Markups.}
    \includegraphics[scale=0.48]{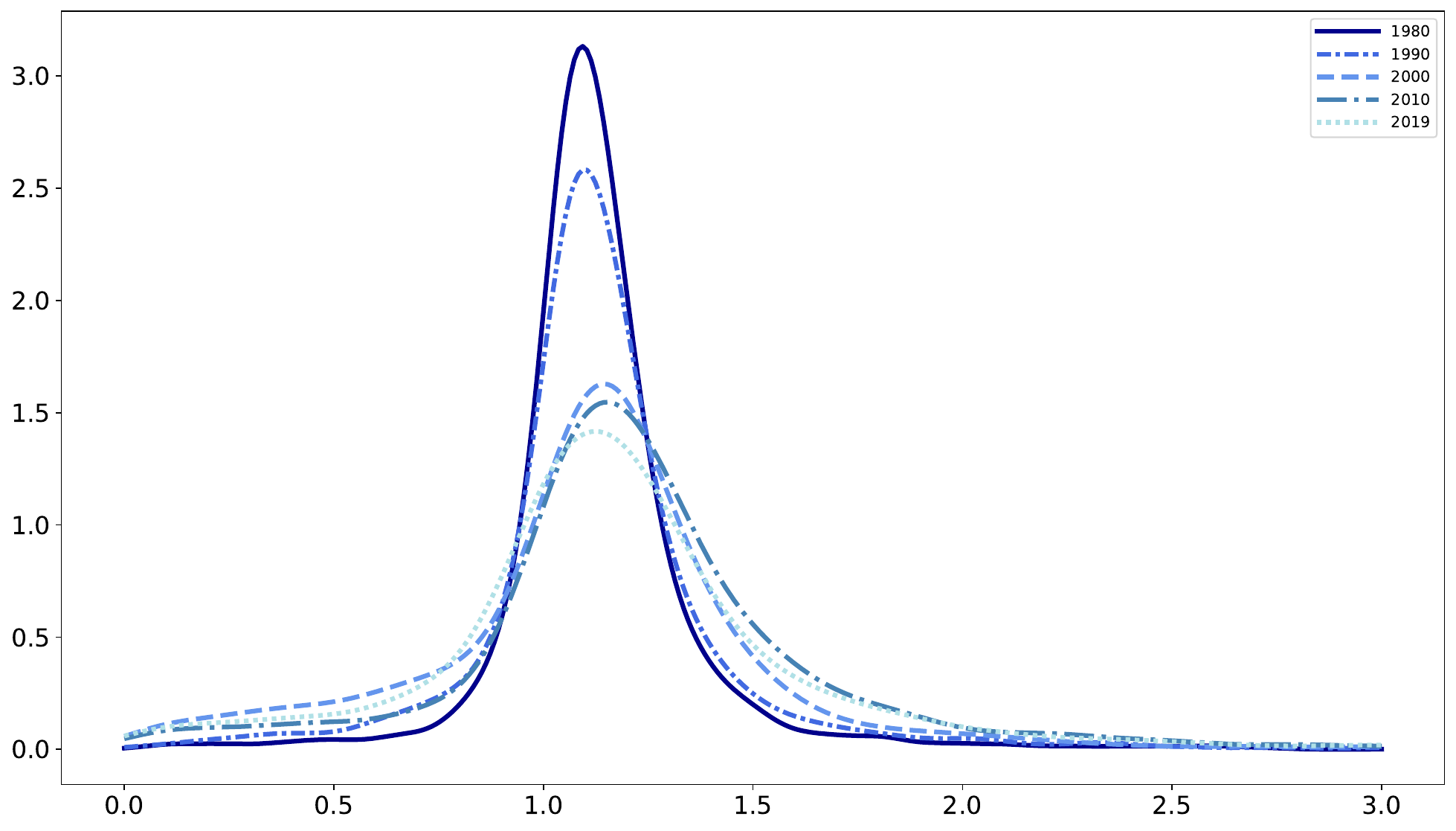}
    \label{fig:markup_kden}
    \begin{minipage}{1\textwidth}
        \scriptsize{\textbf{Figure Notes}. Firm-level markups are calculated using data from US Compustat. Both the costs of goods sold and selling, general and administrative expenses, are treated as variable costs.  }  
    \end{minipage}
\end{figure}

Figure \ref{fig:markup_pctl} plots the evolution of different percentiles of the distribution of firm-level markups. This figure reveals that the the aggregate markup has increased mostly because firms with markups at the median, and especially above the median, have sharply increased their markups since 1970. This fact is also known from the work of \cite{de2020rise} and others. What it is less known is that firms at the 25th percentile and below have registered markups below unity. This fact is quite interesting because it reveals that a large number of firms has been selling at prices below marginal cost, and thus making economic losses. It is particularly striking how firms at the 10th percentile and below have priced increasingly below marginal cost starting in 1980. Firms can price below marginal cost for several reasons. They could be trying to liquidate inventories before exiting the market, or they could be engaging in predatory pricing to drive competitors out of business, just to name two popular reasons. \par  
\begin{figure}[h!]
    \centering
    \caption{Percentiles of the Markup Distribution.}
    \includegraphics[scale=0.48]{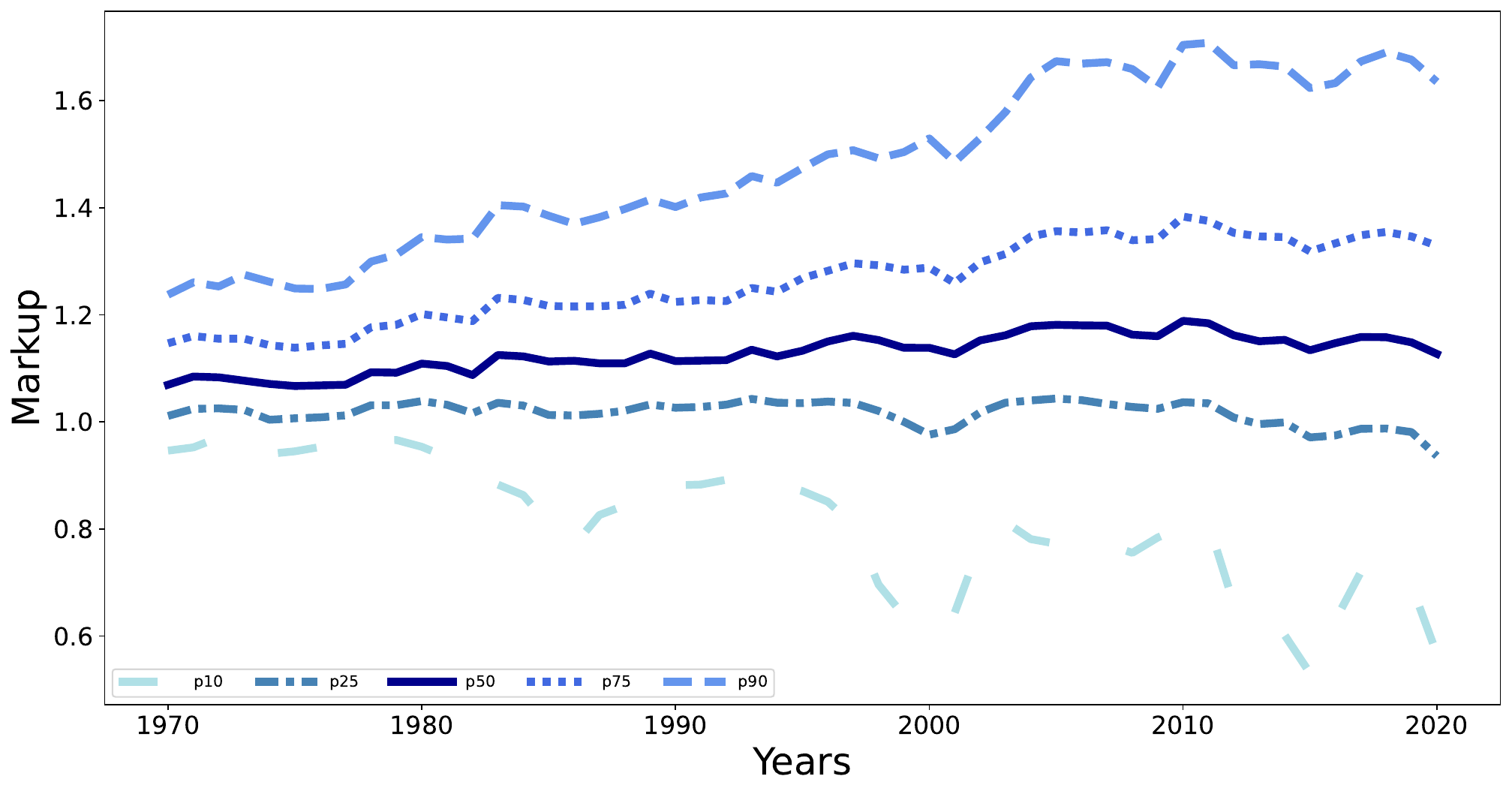}
    \label{fig:markup_pctl}
    \begin{minipage}{1\textwidth}
        \scriptsize{\textbf{Figure Notes}. Firm-level markups are calculated using data from US Compustat. Both the costs of goods sold and selling, general and administrative expenses, are treated as variable costs.  }  
    \end{minipage}
\end{figure}

The increasing heterogeneity in markups across firms, in tandem with rising fixed costs, presents additional challenges for antitrust policy. In the absence of heterogeneity, designing and enforcing antitrust law is relatively easy since policies are naturally targeted. In the presence of heterogeneity and with scarce resources, designing antitrust policies at the level of the firm is generally not feasible. In these situations, increasing heterogeneity makes harder for policymakers to implement policies that combat excessive monopoly power without compromising survival of some firms. 

\par 

Next, we study to what extent the rise in the aggregate markup is due to markup increases within incumbent firms, reallocation of economic activity toward firms with higher markups, and the appearance of firms that charge higher markups than those currently in place. To do so, we provide a statistical decomposition for the harmonic sales-weighted markup, as formalized in the following proposition. \par 

\begin{prop}[Aggregate Markup Decomposition]\label{prop:markup_decomp}
The aggregate markup---the harmonic sales-weighted markup $\overline\mu_{hsw}$---can be decomposed according to
    \begin{align}\label{eq:markup_decomp}
        \Delta\overline\mu_{\text{hsw}} &= \underbrace{- \frac{\sum_{i\in\mathcal C_t}\overline\omega_i\Delta\mu_i^{-1}}{\overline\mu_{\text{hsw}, t}\times \overline\mu_{\text{hsw}, t-1}}}_{\text{within}} \underbrace{- \frac{\sum_{i\in\mathcal C_t}\Delta\omega_i\overline{\mu_i^{-1}}}{\overline\mu_{\text{hsw}, t}\times \overline\mu_{\text{hsw}, t-1}}}_{\text{between}} \nonumber \\ 
        &\qquad \qquad \underbrace{- \frac{\sum_{i\in\mathcal E_t}\omega_{it}\left(\mu_{it}^{-1} - \overline\mu_{hsw}^*\right)}{\overline\mu_{\text{hsw}, t}\times \overline\mu_{\text{hsw}, t-1}} + \frac{\sum_{i\in X_{t-1}}\omega_{it-1}\left(\mu_{it-1}^{-1} - \overline\mu_{hsw}^*\right)}{\overline\mu_{\text{hsw}, t}\times \overline\mu_{\text{hsw}, t-1}}}_{\text{net entry}}
    \end{align}
where $i$ and $t$ index producers and time, respectively, $\mathcal C_t := \big\{i\in\mathcal I_t \ \wedge \ i\in\mathcal I_{t-1}\big\}$ is the set of incumbents, $\mathcal E_t := \big\{i\in\mathcal I_t \ \wedge \ i\notin\mathcal I_{t-1}\big\}$ is the set of entrants, $\mathcal X_{t-1} := \big\{i\in\mathcal I_{t-1} \ \wedge \ i\notin\mathcal I_{t}\big\}$ is the set of exiting firms, $\mathcal I_t = \mathcal C_t \cup \mathcal E_t$, $\mathcal I_{t-1} = \mathcal C_t \cup \mathcal X_{t-1}$, $\overline\mu_{hsw,t}$ is the harmonic sales-weighted markup in period $t$, $\Delta X := X_t - X_{t-1}$, $\overline X:= \frac{1}{2}\big(X_t + X_{t-1}\big)$, and $\overline\mu_{hsw}^* = \frac{1}{2}\big(\overline\mu_{\text{hsw}, t} + \overline\mu_{\text{hsw}, t-1}\big)$. 
\end{prop}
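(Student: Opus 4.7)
The plan is to begin from the definition $\overline\mu_{\text{hsw},t}^{-1} = \sum_{i\in\mathcal I_t}\omega_{it}\mu_{it}^{-1}$ and to work first with the arithmetic aggregate $S_t := \sum_{i\in\mathcal I_t}\omega_{it}\mu_{it}^{-1}$ before reverting to $\overline\mu_{\text{hsw}}$ at the end. The key algebraic bridge is $a-b = -ab(a^{-1}-b^{-1})$, applied with $a = \overline\mu_{\text{hsw},t}$ and $b = \overline\mu_{\text{hsw},t-1}$, which delivers
\[
\Delta\overline\mu_{\text{hsw}} \;=\; -\,\overline\mu_{\text{hsw},t}\,\overline\mu_{\text{hsw},t-1}\,\Delta S.
\]
This reduces the problem to decomposing $\Delta S$, which is linear in sales weights and inverse markups and therefore amenable to a standard Baily--Hulten--Haltiwanger-style split.

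Next, I would partition the producer sets by entry status: $\mathcal I_t = \mathcal C_t \cup \mathcal E_t$ and $\mathcal I_{t-1} = \mathcal C_t \cup \mathcal X_{t-1}$. Splitting the sums accordingly gives
\[
\Delta S \;=\; \sum_{i\in\mathcal C_t}\Delta(\omega_i\mu_i^{-1}) \;+\; \sum_{i\in\mathcal E_t}\omega_{it}\mu_{it}^{-1} \;-\; \sum_{i\in\mathcal X_{t-1}}\omega_{it-1}\mu_{it-1}^{-1}.
\]
For the continuing firms I would apply the Bennet/Bortkiewicz product-difference identity $\Delta(xy) = \bar y\,\Delta x + \bar x\,\Delta y$ with $x = \omega$ and $y = \mu^{-1}$. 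This produces exactly the within term $\overline{\omega}_i\Delta\mu_i^{-1}$ (reallocation-free change in markups) and the between term $\Delta\omega_i\,\overline{\mu_i^{-1}}$ (reweighting at constant markups) that appear in the statement.

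For the entry/exit block, I would center each term on the period-average aggregate inverse markup $\overline\mu_{\text{hsw}}^*$, writing $\omega_{it}\mu_{it}^{-1} = \omega_{it}(\mu_{it}^{-1}-\overline\mu_{\text{hsw}}^*) + \overline\mu_{\text{hsw}}^*\,\omega_{it}$ for entrants and symmetrically for exiters. This isolates the deviation-from-average pieces shown in Proposition~\ref{prop:markup_decomp}. The leftover residual is $\overline\mu_{\text{hsw}}^*\bigl(\sum_{\mathcal E_t}\omega_{it}-\sum_{\mathcal X_{t-1}}\omega_{it-1}\bigr)$, which, using the weight-conservation identity $\sum_{\mathcal E_t}\omega_{it}-\sum_{\mathcal X_{t-1}}\omega_{it-1} = -\sum_{\mathcal C_t}\Delta\omega_i$ (valid because sales shares sum to unity in every period), is designed to cancel against a matching $\overline\mu_{\text{hsw}}^*$-piece generated by the between term. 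Multiplying $\Delta S$ through by $-\overline\mu_{\text{hsw},t}\overline\mu_{\text{hsw},t-1}$ then delivers the stated formula.

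The main obstacle is bookkeeping in this final step: verifying that the centering residuals in the entry/exit block \emph{exactly} cancel against the residual produced in the between block. The cancellation rides on sales shares summing to one in each period; if one instead aggregated with Domar weights (which sum to the time-varying input-output multiplier $\chi$), an additional term proportional to $\Delta\chi$ would survive and the decomposition would have to be augmented. A one-firm, two-period sanity check is useful at the outset to pin down the sign conventions and to confirm where the factor $\overline\mu_{\text{hsw},t}\overline\mu_{\text{hsw},t-1}$ arising from the inversion step must land.
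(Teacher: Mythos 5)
Your route is the same as the paper's: invert the harmonic mean to reduce the problem to the arithmetic aggregate $S_t=\sum_{i\in\mathcal I_t}\omega_{it}\mu_{it}^{-1}$, split by entry status, and then carry out the algebra that the paper compresses into ``the result follows from algebraic manipulations.'' Your Bennet split of the continuing-firm block and your use of $\sum_{\mathcal E_t}\omega_{it}-\sum_{\mathcal X_{t-1}}\omega_{it-1}=-\sum_{\mathcal C_t}\Delta\omega_i$ are exactly the missing manipulations, and your prefactor $\Delta\overline\mu_{\text{hsw}}=-\overline\mu_{\text{hsw},t}\,\overline\mu_{\text{hsw},t-1}\,\Delta S$ is the correct one (the paper's proof places the product of harmonic means in the denominator, which is the reciprocal of what the inversion identity actually delivers).

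The one step that does not close as you describe it is the final cancellation. Centering the entry and exit sums on $\overline\mu^*_{hsw}$ leaves the residual $\overline\mu^*_{hsw}\left(\sum_{\mathcal E_t}\omega_{it}-\sum_{\mathcal X_{t-1}}\omega_{it-1}\right)=-\overline\mu^*_{hsw}\sum_{\mathcal C_t}\Delta\omega_i$, and you assert this cancels against ``a matching $\overline\mu^*_{hsw}$-piece generated by the between term.'' But the between term in the statement is the \emph{uncentered} $\sum_{\mathcal C_t}\Delta\omega_i\,\overline{\mu_i^{-1}}$; it generates no such piece, so the residual survives and is generically nonzero. It can only be absorbed by folding it into the between term, which then becomes $\sum_{\mathcal C_t}\Delta\omega_i\left(\overline{\mu_i^{-1}}-\overline\mu^*_{hsw}\right)$ --- the standard centered between term of Foster--Haltiwanger--Krizan-type decompositions --- or by invoking the knife-edge condition that entrants' and exiters' total sales shares coincide. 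Relatedly, $\overline\mu^*_{hsw}$ as defined is an average of markup \emph{levels}, yet it is subtracted from \emph{inverse} markups; the natural reference point for the centering is $\tfrac{1}{2}\left(\overline\mu_{\text{hsw},t}^{-1}+\overline\mu_{\text{hsw},t-1}^{-1}\right)$. You correctly identified this bookkeeping as the main obstacle; it is, and as written your argument does not resolve it.
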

\begin{proof}
    See Appendix \ref{app:markup_decomp}.
\end{proof}

The results of this statistical decomposition are depicted in Figure \ref{fig:markup_decomp}.\footnote{Since the last year of our sample is 2020, 2019 is the last year we use for the markup decomposition. This is because we identify exiting firms in year $t$ using information from years $t$ and $t+1$. Hence, we cannot identify exiting firms in 2020.} 
Here, we decompose the change in the harmonic sales-weighted markup for 1980--1981, 1980--1982, and so on. As mentioned before, the aggregate markup in the United States has increased from around 1.10 in 1970 to 1.24 in 2019; that is, from 10\% of price over marginal cost to 24\%. Firm entry and exit accounts for about 50\% of that increase (i.e., 7\% percentage points,) and the remaining part is accounted for in equal parts by changes in markups within firms and reallocation of economic activity between firms.\par 
\begin{figure}[h!]
    \centering
    \caption{Cumulative Decomposition of the Rise in the Aggregate Markup, 1970--2019.}
    \includegraphics[scale=0.4]{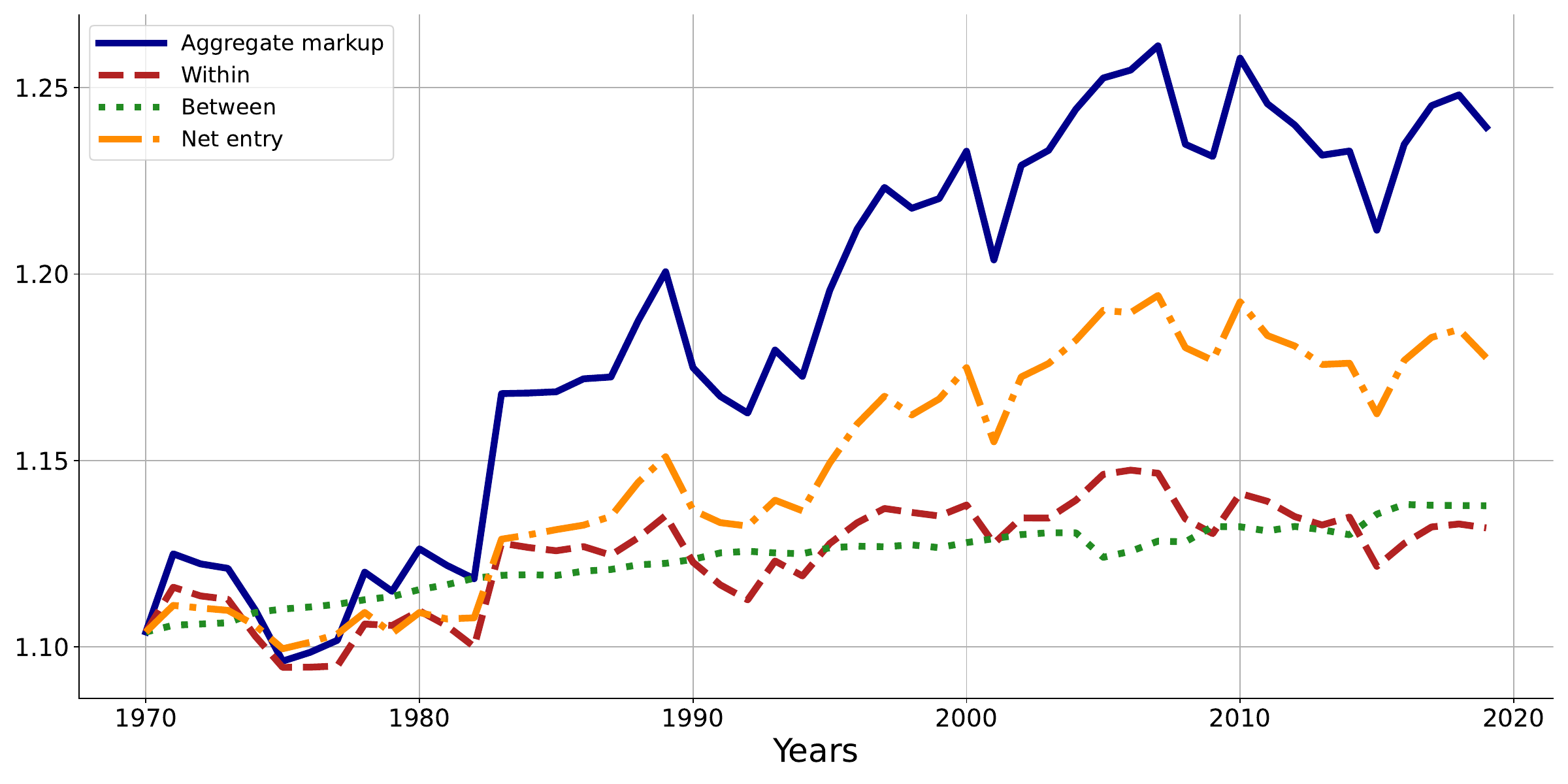}
    \label{fig:markup_decomp}
    \vspace{4pt}
    \begin{minipage}{1\textwidth}
        \scriptsize{\textbf{Figure Notes}. The aggregate markup is the harmonic, sales-weighted markup. The within term captures the aggregate effect of changes in markups within firms, the between term captures how the reallocation of economic activity between firms has contributed to the change in the aggregate markup, and the net-entry component captures the effect of entry and exit. }  
    \end{minipage}
\end{figure}

The fact that the aggregate markup has increased mostly because of net entry is consistent with our narrative that rising fixed costs increase entry barriers, discouraging firm entry and encouraging firm exit. Importantly, we note that the increasing importance of net entry in the markup decomposition is not informative about the number of firms that enter or exit. This term should be interpreted as new entrants having to charge relatively high markups in order to operate without losses, whereas exiting firms are those that charge relatively lower markups.\par 

\subsection{Industry-Level Heterogeneity}
Finally, we investigate the existing heterogeneity in markups, scale elasticities, returns to scale, and profitability across US industries. This exercise serves two purposes. First, it allows us to learn more about the profitability, the markups, and the technologies of particular industries. Second, it provides a sanity check for the plausibility of our empirical results, at least qualitatively. \par 

Table \ref{tab:sectoral_estimates} reports our estimates of aggregate markups and scale elasticities at the level of 2-digit NAICS industries for 2019.\footnote{We have purposely chosen not to report markup estimates in 2020 because of Covid-19.} Markups are harmonic sales-weighted averages of firm-level markups, and scale elasticities are the sum of revenue elasticities. \par 
\begin{table}[h!]
    \centering
    \caption{Markups and Scale Elasticity Estimates from US Compustat, 2019.}
    \begin{tabular}{l c | c c | c c}
    \toprule 
    \multicolumn{2}{l}{}  & \multicolumn{2}{c}{\textsc{Markups}} & \multicolumn{2}{c}{\textsc{Scale Elasticity}} \\
    \textbf{Sector} & \textbf{NAICS code} & \textbf{Mean} & \textbf{Std. dev.} & \textbf{Mean} & \textbf{Std. dev.}  \\
    \midrule 
     Agriculture \& Fishing    & 11     & 1.09 & 0.20 & 1.01 & $-$ \\
     Mining \& Quarrying       & 21     & 1.26 & 0.62 & 1.11 & $-$ \\
     Utilities                 & 22     & 1.25 & 1.14 & 1.07 & $-$ \\
     Construction              & 23     & 1.11 & 0.25 & 1.06 & $-$ \\
     Manufacturing             & 31--33 & 1.34 & 0.47 & 1.09 & $-$ \\
     Wholesale                 & 42     & 0.98 & 0.48 & 1.00 & $-$ \\
     Retail                    & 44--45 & 1.07 & 0.23 & 1.00 & $-$ \\
     Transportation            & 48--49 & 0.98 & 0.66 & 1.00 & $-$ \\
     IT                        & 51     & 1.51 & 0.49 & 1.08 & $-$ \\
     Financial Services        & 52     & 1.19 & 1.07 & 1.05 & $-$ \\
     Real Estate               & 53     & 1.19 & 1.08 & 1.03 & $-$ \\
     Professional Services     & 54     & 1.17 & 0.38 & 1.06 & $-$ \\
     Administration            & 56     & 1.23 & 0.27 & 1.07 & $-$ \\
     Education                 & 61     & 1.16 & 0.58 & 1.01 & $-$ \\
     Health Care               & 62     & 1.17 & 0.38 & 1.11 & $-$ \\
     Arts \& Entertainment     & 71     & 1.18 & 0.36 & 1.04 & $-$ \\
     Hospitality               & 72     & 1.14 & 0.25 & 0.99 & $-$ \\
     Other                     & 81     & 1.15 & 0.36 & 1.03 & $-$ \\
    \bottomrule 
    \end{tabular}
    \label{tab:sectoral_estimates}
    \vspace{4pt}
    \begin{minipage}{1\textwidth}
        \scriptsize{\textbf{Table Notes}. Markups are harmonic sales-weighted markups. Scale elasticities are the sum of revenue elasticities. Since scale elasticities are estimated at the industry level, standard deviations cannot be reported.}  
    \end{minipage}
\end{table}
Since scale elasticities capture implicit fixed costs---those associated with production factors---we expect industries with relatively large minimal input requirements to have relatively high scale elasticities. This is indeed the case for industries like mining, utilities, manufacturing, and IT, all of which are industries with relatively large capital requirements. Similarly, we expect industries with larger scale elasticities to charge higher markups since larger markups are required to operate without loss. In line with our expectations, these numbers indicate that industries with higher scale elasticities charge higher markups (e.g., mining, utilities, manufacturing, and IT.) \par 

Figure \ref{fig:industry_est} offers a scatter plot of industry-level markups and returns to scale, where larger dots reflect higher industry profit shares. Recall that the difference between scale elasticities and returns to scale is that returns to scale take into account explicit fixed costs while scale elasticities do not. Explicit fixed costs include R\&D expenditures toward future productivity or next-period's stock of intangible capital, as well as regulatory compliance costs and the like. Here again we see a positive correlation between returns to scale and industry-level markups: firms in industries with larger overall fixed costs have to charge higher markups to recoup those costs. Figure \ref{fig:industry_est} also shows that industries with higher markups and returns to scale generally have higher profit shares. While our results suggest that industries with higher fixed costs charge markups way above and beyond what is necessary to cover those costs, these results should be interpreted with care. It is possible that these industries charge higher markups today to cover fixed costs that were incurred in the past. \par 
\begin{figure}[h!]
    \centering
    \caption{Industry-level Markups, Returns to Scale, and Profitability in the US, 2019.}
    \includegraphics[scale=0.46]{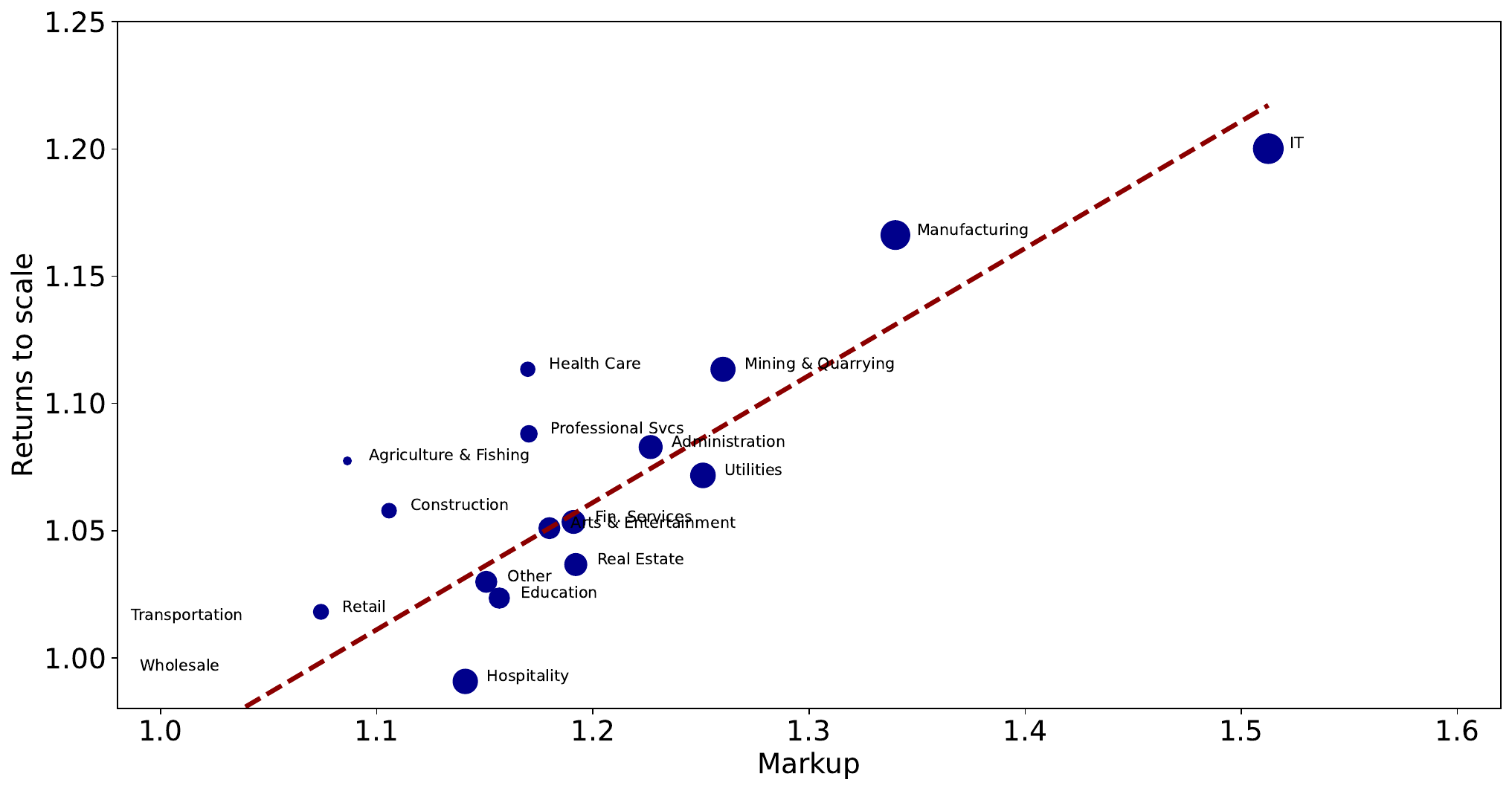}
    \label{fig:industry_est}
    \begin{minipage}{1\textwidth}
        \scriptsize{\textbf{Figure Notes}. Markups are harmonic sales-weighted markups. Returns to scale are sales-weighted scale elasticities adjusted for fixed costs. Larger dots are associated with higher profit shares.}  
    \end{minipage}
\end{figure}

In Appendices \ref{app:elast_het}--\ref{app:markup_het} we report time series of input elasticities, returns to scale, and markups for each industry. \par 

\section{Conclusion}\label{sec:conclusion}
Income shares are important summary statistics for understanding many macroeconomic phenomena, including the global decline in the labor share, the stability of the Kaldor facts, and economic inequality, to name just a few. One such statistic that has historically caught the eye of both economists and policymakers is the profit share. \par 

In this paper, we show that the economy-wide profit share can be constructed by weighting individual profit rates using Domar weights (i.e., producer sales over GDP.) We refer to the profit share so constructed as the \textit{micro-aggregated profit share} and note that no modeling assumptions---only fundamental accounting principles---are needed to establish this result. This procedure for calculating the aggregate profit share brings micro data and macro analysis closer, potentially allowing researchers to leverage interesting features of micro-level datasets for gaining further insights. \par 

Since profits or profit rates are rarely observable---and even when they are, they reflect accounting- rather than economic profits---economic theory is often needed to compute economic measures of profits. Imposing minimal assumptions on producer behavior and technology, we provide a generic expression for a producer's profit rate---defined as profits over sales---in terms of monopoly and monopsony terms. Our formula is a generalization of standard formulas in the literature because it allows for explicit fixed costs and relaxes the assumption of price-taking behavior in factor markets. Apart from standard regularity conditions on production functions, our result only requires producers to minimize costs. \par 

The main theoretical contribution of this paper is to show that the profit share can be expressed in terms of several indicators of aggregate market power---the aggregate markup and an aggregate monopsony term---and a sufficient statistic for production networks that captures double marginalization (i.e., how profits propagate from downstream sellers to upstream suppliers.) Our theorem, which can be applied at any desired level of aggregation, is useful for (i) understanding the determinants of the profit share; (ii) quantifying different sources of profits---that is, monopoly \textit{vis-à-vis} monopsony; (iii) assessing the plausibility of micro-level estimates of markups, markdowns, and returns to scale; (iv) using certain indicators of aggregate market power (e.g., the harmonic sales-weighted markup, sales-weighted markdowns, sales-weighted returns to scale); and (v) calibrating economic models with monopolistic and monopsonistic wedges. \par 

We use the aforementioned theoretical results to guide our empirical analysis and reassess the extent of market power and profitability in the United States from 1970 to 2020 using firm-level data from US Compustat. Our empirical results are the following. First, we find that an economically meaningful notion of aggregate markup---the harmonic, sales-weighted markup---has increased from 10\% of price over marginal cost in 1970 to 23\% in 2020. Second, we find that the rise in the aggregate markup is mostly explained by the appearance of firms that charge higher markups. Reallocation of economic activity towards firms with higher markups and markup increases within firms each explain about 25\% of the rise in the aggregate markup. Third, we find that aggregate returns to scale have risen from 1.00 in 1970 to 1.13 in 2020, reflecting both rising fixed costs and changes in technology. Fourth, despite of the rise in market power, we find that the aggregate profit share in the United States has been constant at 18\% of GDP over the past five decades. We reconcile the increase in market power with the constancy of the aggregate profit share by decomposing the profit share into monopoly rents and fixed costs, both as shares of GDP, and showing that the increase in monopoly rents has been entirely offset by rising fixed costs and changing technology. \par 

Our empirical findings have subtle policy implications. In particular, they suggest that a rise in market power does not necessarily call for stricter enforcement of antitrust law. If firms use their increase in market power to cover rising fixed costs, aggregate profits as a share of GDP remain stable. Although stricter enforcement of antitrust law could lower monopoly rents, it could also cause firms that cannot cover the rising fixed costs to exit the market. Fewer firms means less competition, which translates into higher market power for surviving firms. Thus, stricter enforcement of antitrust law can paradoxically lead to higher market power.\par

Our current empirical analysis is not complete and presents several limitations. We see our estimates of the profit share as providing an upper bound for the profit share since certain forms of capital are not correctly measured, and Compustat data are not representative of the US population of firms. In future versions of the paper, we plan to use data from the US Census of Manufactures to provide a decomposition of this industry's profit share into monopoly and monopsony rents, as well as to assess the representativeness assumption of Compustat firms for this particular sector.  \par 


\newpage 
\bibliography{ref}
\addcontentsline{toc}{section}{References}

\newpage 
\appendix 
\appendixpage
\begin{center}
    \large Thomas Hasenzagl \hspace{50pt} Luis Pérez \\ 
    \vspace{5pt}
    \today 
\end{center}
\listofappendices 
\numberwithin{equation}{section} 
\newpage 

\section{Omitted Proofs}\label{app:proofs}

\subsection{Proof of Lemma \ref{lemma:vaprofitshare}}
\begin{lemma}[The Micro--Aggregated Profit Share]\label{lemma:vaprofitshare}
The aggregate profit share can be constructed from micro-level data by aggregating individual producers' profit rates, defined as profits over value added, using valued-added weights.
\end{lemma}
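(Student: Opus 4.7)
The plan is to mirror, step for step, the accounting argument used to establish Lemma \ref{lemma:microprofitshare}, but replacing sales by value added throughout. For each producer $i \in \mathcal{I}$, let $va_i$ denote value added and define the value-added-based profit rate as $s^{va}_{\pi_i} := \pi_i / va_i$. By the national accounts identity $\text{GDP} = \sum_{i \in \mathcal{I}} va_i$, the weights $va_i/\text{GDP}$ form a proper convex-combination scheme: they are non-negative and sum to unity, so no input-output multiplier appears.

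The core of the proof is then a one-line manipulation. Starting from the macro definition $\Lambda_\Pi^{\text{Macro}} = \Pi/\text{GDP} = \sum_{i\in\mathcal{I}} \pi_i / \text{GDP}$, I would substitute $\pi_i = s^{va}_{\pi_i} \, va_i$ inside the sum and factor out the weights to obtain
\begin{equation*}
    \Lambda_\Pi^{\text{Macro}} = \sum_{i\in\mathcal{I}} \frac{va_i}{\text{GDP}} \, s^{va}_{\pi_i} \equiv \Lambda_\Pi^{\text{Micro, VA}},
\end{equation*}
which is exactly the claimed aggregation formula.

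There is no real obstacle: the argument uses only the basic accounting identities that aggregate profits equal the sum of individual profits and that aggregate value added equals the sum of individual value-added contributions. No assumption on technology, behavior, or market structure is needed. The one point worth highlighting in the accompanying discussion is the contrast with Lemma \ref{lemma:microprofitshare}: Domar weights (sales over GDP) sum to the input-output multiplier $\chi \geq 1$ and implicitly absorb the effects of vertical integration, whereas value-added weights sum exactly to one and therefore carry no information about the production network. This makes value-added weights arguably more intuitive, but they are often less convenient in applied work because micro datasets such as Compustat typically report sales directly while value added must be imputed.
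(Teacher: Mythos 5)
Your proof is correct and is essentially identical to the paper's: both write $\Lambda_\Pi^{\text{Macro}} = \sum_i \pi_i/\text{GDP}$ and insert $\pi_i = (\pi_i/\text{VA}_i)\,\text{VA}_i$ to factor out value-added weights. The surrounding remarks on weights summing to one versus the Domar case are accurate and consistent with the paper's own discussion.
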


\begin{proof}
Let $\Pi$ denote aggregate profits. For each producer $i\in\mathcal I$, $\pi_i$ denotes profits, $p_iy_i$ are sales, and $s_{\pi_i}:= \pi_i /(\text{va}_i)$ is $i$'s profit rate, where $\text{VA}_i$ is the value added generated by producer $i$---that is, sales minus material costs. Then, we have
\begin{align*}
    \Lambda_\Pi^{\text{Macro}} &= \frac{\Pi}{\text{GDP}} \\
    &= \frac{\sum_{i\in\mathcal I}\pi_i}{\text{GDP}} \\
    &= \sum_{i\in\mathcal I} \frac{\text{VA}_i}{\text{GDP}}\times \frac{\pi_i}{\text{VA}_i}\equiv \Lambda_\Pi^{\text{Micro, VA}}.
\end{align*}
\end{proof}

\subsection{Proof of Proposition \ref{prop:indprofitrate}}\label{app:prop1}
\noindent \textbf{Proposition \ref{prop:indprofitrate}} (Profit Rates, Monopoly, and Monopsony). \textit{Under the assumptions of cost-minimizing behavior, the existence of a continuously differentiable and quasiconcave production function, and monopsony power in factor markets, a producer's profit rate, defined as profits over sales, can be written as}
    \begin{align*}
        s_{\pi} = 1 - \frac{\text{RS}}{\mu} = 1 - \frac{\text{SE}^{\text{adj}}}{\mu} + \frac{\mathcal M}{\mu}, \tag{\ref{eq:indprofitrate}}
    \end{align*}
\textit{where $\text{RS} = \text{SE}^{\text{adj}} - \mathcal M$ are the returns to scale, $\text{SE}^{\text{adj}}$ is the scale elasticity of the production function adjusted for fixed costs, $\mathcal M$ is a monopsony term capturing market power in factor markets, and $\mu$ is the markup of price over marginal cost.} \par 

\textit{The scale elasticity adjusted for fixed costs is defined as
}
    \begin{align*}
        \text{SE}^{\text{adj}} := \text{SE}\times\left(\frac{\text{TC}}{\text{TC} - \text{FC}}\right),
    \end{align*}
\textit{where $\text{SE}$ is the scale elasticity, $\text{TC}$ are total costs, and $\text{FC}$ are explicit fixed costs.} \par

\textit{The monopsony term is given by}
    \begin{align*}
        \mathcal M := \left(\frac{\text{TC}}{\text{TC} - \text{FC}}\right)\sum_{j\in \mathcal F}\theta_{j}(1-\nu_{j}), \tag{\ref{eq:monopsony_term}}
    \end{align*}
\textit{where $\theta_{j}\equiv \partial F/\partial x_{j} \times x_{j} / y$ is the elasticity of output with respect to input $j$, and $\nu_{j}$ is the markdown on factor $j\in\mathcal F$, where the markdown is defined as the ratio of input $j$'s rental rate to its marginal revenue product; that is, $\nu_{j}:= w_j(x_{j})/\text{MRP}_{j}$. }

\begin{proof}
The cost minimization problem of a producer may be written as
    \begin{align*}
        \min_{\bm{x}_j\geq\bm{0}}\quad & \text{TC} \equiv \sum_{j\in\mathcal N}w_j(x_{j})x_{j} + \text{FC} \\
        \text{s.t. }\quad & y = F\left(\{x_{j}\}_{j\in N}; A\right) \geq \overline y,
    \end{align*}
where $\text{TC}$ is the total cost function, $w_j(x_{j})$ is the rental rate of input $j$, which depends on the quantity demanded by producer of input $j$, $x_{j}$, $\text{FC}$ are fixed costs, $y$ is gross output, $F: \mathbb R_+^N \to \mathbb R_+$ is a continuously differentiable and quasiconcave production technology which satisfies Inada conditions and is parametrized by productivity parameter $A$, and $\overline y$ is the minimum required amount of output. $\mathcal N$ denotes the set of total inputs, which can be either materials or production factors, and $\mathcal F\subseteq \mathcal N$ denotes the set of factors. \par 

The Lagrangian of this programming problem is
    \begin{align*}
        \mathcal L\big(\bm x; \lambda\big) = \sum_j w_j(x_{j})x_{j} + \text{FC} + \lambda\left\{\overline y - F\big(\bm x; A\big)\right\}, 
    \end{align*}
where $\lambda$, the Lagrange multiplier, is the shadow value of relaxing the output constraint. \par 

A generic first-order condition for an interior demand reads as
    \begin{align*}
        w_j(x_{j}) + w_j'(x_{j})x_{j} - \lambda \frac{\partial F}{\partial x_{j}} = 0,
    \end{align*}
which implies
    \begin{align*}
        \left(1 + \frac{w_j'(x_{j}) x_{j}}{w_j(x_{j})}\right) w_j(x_{j})x_{j} = \lambda \frac{\partial F}{\partial x_j}\frac{x_{j}}{y}y,
    \end{align*}
where $\varepsilon_{Sj}^{-1} \equiv \frac{w_j'(x_{j}) x_{j}}{w_j(x_{j})}$ is the producer's (perceived) inverse elasticity of supply of input $j$.\par 

Using standard duality arguments, we can establish that $1 + \varepsilon_{Sj}^{-1}$ is the inverse markdown on input $j$---that is, the marginal revenue product of the producer with respect to $j$ divided by the rental rate of input $j$. We denote the markdown on $j$ by $\nu_{j}$, and note that $\nu_{j}\in(0,1]$. By the Envelope Theorem, it follows that the Lagrange multiplier $\lambda$ is the marginal cost. Thus, we can write:
    \begin{align*}
        w_j(x_{j})x_{j} = \text{mc}\times y\theta_{j}\nu_{j},
    \end{align*}
where $\theta_{j}\equiv \partial F/\partial x_{j}\times x_{j}/y$ is the output elasticity of input $j$.\par 

Summing over all inputs $j\in\mathcal N$, 
    \begin{align*}
        \sum_{j\in\mathcal N} w_j(x_{j})x_{j} &= \text{mc}\times y\left(\sum_{j\in\mathcal N}\theta_{j}\nu_{j}\right) \\
        &= \text{mc}\times y\left(\text{SE} - \sum_{j\in\mathcal F}\theta_{j}\{1 - \nu_{j}\}\right), 
    \end{align*}
where $\text{SE}$ is the scale elasticity of the production function and the second equality follows by imposing the restriction that there is monopsony power only in factor markets, i.e., $\nu_j=1$ for $j\notin\mathcal F$. \par 

Using the definition of the markup, we can write the producer's profit rate $s_{\pi}$, defined as profits $\pi$ over sales $py$ as
    \begin{align*}
        s_{\pi} &= 1 - \frac{\text{SE}}{\mu}\left(\frac{\text{TC}}{\text{TC} - \text{FC}}\right) + \frac{1}{\mu}\left(\frac{\text{TC}}{\text{TC} - \text{FC}}\right)\sum_{j\in\mathcal F}\theta_{j}(1-\nu_{j}).
    \end{align*}
Next, we show that returns to scale, defined as the ratio of average- to marginal cost, are given by $\text{RS} = \text{SE}^{\text{adj}} - \mathcal M$. Recall that 
    \begin{align*}
        \sum_{j\in\mathcal N} w_j(x_{j})x_{j} 
        &= \text{mc}\times y\left(\text{SE} - \sum_{j\in\mathcal F}\theta_{j}\{1 - \nu_{j}\}\right).
    \end{align*}
Doing simple algebraic manipulations, 
    \begin{align*}
        \frac{\text{AC}}{\text{mc}}\frac{\text{TC}- \text{FC}}{\text{AC}\times y} = \text{SE} - \sum_{j\in\mathcal F}\theta_{j}\{1 - \nu_{j}\},
    \end{align*}
where $\text{AC} := \text{TC} / y$ is the average cost. \par 

The last expression implies 
    \begin{align*}
        \text{RS} = \text{SE}\left(\frac{\text{TC}}{\text{TC}-\text{FC}}\right) - \left(\frac{\text{TC}}{\text{TC}-\text{FC}}\right)\sum_{j\in\mathcal F}\theta_{j}\{1 - \nu_{j}\},
    \end{align*}
which establishes the result. 
\end{proof}

\subsection{Proof of Proposition \ref{prop:markup_decomp}} \label{app:markup_decomp}
\noindent \textbf{Proposition \ref{prop:markup_decomp}}. \textit{The aggregate markup---the harmonic sales-weighted markup $\overline\mu_{hsw}$---can be decomposed according to}
    \begin{align}\label{eq:markup_decomp}
        \Delta\overline\mu_{\text{hsw}} &= \underbrace{- \frac{\sum_{i\in\mathcal C_t}\overline\omega_i\Delta\mu_i^{-1}}{\overline\mu_{\text{hsw}, t}\times \overline\mu_{\text{hsw}, t-1}}}_{\text{within}} \underbrace{- \frac{\sum_{i\in\mathcal C_t}\Delta\omega_i\overline{\mu_i^{-1}}}{\overline\mu_{\text{hsw}, t}\times \overline\mu_{\text{hsw}, t-1}}}_{\text{between}} \nonumber \\ 
        &\qquad \qquad \underbrace{- \frac{\sum_{i\in\mathcal E_t}\omega_{it}\left(\mu_{it}^{-1} - \overline\mu_{hsw}^*\right)}{\overline\mu_{\text{hsw}, t}\times \overline\mu_{\text{hsw}, t-1}} + \frac{\sum_{i\in X_{t-1}}\omega_{it-1}\left(\mu_{it-1}^{-1} - \overline\mu_{hsw}^*\right)}{\overline\mu_{\text{hsw}, t}\times \overline\mu_{\text{hsw}, t-1}}}_{\text{net entry}} \nonumber
    \end{align}
\textit{where $i$ and $t$ index producers and time, respectively, $\mathcal C_t := \big\{i\in\mathcal I_t \ \wedge \ i\in\mathcal I_{t-1}\big\}$ is the set of incumbents, $\mathcal E_t := \big\{i\in\mathcal I_t \ \wedge \ i\notin\mathcal I_{t-1}\big\}$ is the set of entrants, $\mathcal X_{t-1} := \big\{i\in\mathcal I_{t-1} \ \wedge \ i\notin\mathcal I_{t}\big\}$ is the set of exiting firms, $\mathcal I_t = \mathcal C_t \cup \mathcal E_t$, $\mathcal I_{t-1} = \mathcal C_t \cup \mathcal X_{t-1}$, $\overline\mu_{hsw,t}$ is the harmonic sales-weighted markup in period $t$, $\Delta X := X_t - X_{t-1}$, $\overline X:= \frac{1}{2}\big(X_t + X_{t-1}\big)$, and $\overline\mu_{hsw}^* = \frac{1}{2}\big(\overline\mu_{\text{hsw}, t} + \overline\mu_{\text{hsw}, t-1}\big)$.}

\begin{proof}
    \begin{align*}
        \Delta\overline\mu_{\text{hsw}} &= \overline\mu_{\text{hsw},t} - \overline\mu_{\text{hsw},t-1} \\
        &= \left(\sum_{i\in\mathcal I_t}\omega_{it}\mu_{it}^{-1}\right)^{-1} - \left(\sum_{i\in\mathcal I_{t-1}}\omega_{i,t-1}\mu_{i,t-1}^{-1}\right)^{-1} \tag{by definition} \\ 
        &= \frac{\sum_{i\in\mathcal I_{t-1}}\omega_{i,t-1}\mu_{i,t-1}^{-1} - \sum_{i\in\mathcal I_t}\omega_{it}\mu_{it}^{-1}}{\overline\mu_{\text{hsw},t} \times \overline\mu_{\text{hsw},t-1}} \tag{taking common denominator}\\ 
        &= - \frac{1}{\overline\mu_{\text{hsw},t} \times \overline\mu_{\text{hsw},t-1}} \left\{\sum_{i\in\mathcal I_t}\omega_{it}\mu_{it}^{-1} - \sum_{i\in\mathcal I_{t-1}}\omega_{i,t-1}\mu_{i,t-1}^{-1}\right\} \tag{rearranging} 
    \end{align*} 
Noting that
    \begin{align*}
        \sum_{i\in\mathcal I_t} \omega_{it}\mu_{it}^{-1} &= \sum_{i\in\mathcal C_t}\omega_{it}\mu_{it}^{-1} + \sum_{i\in\mathcal E_t}\omega_{it}\mu_{it}^{-1}
    \end{align*}
and 
    \begin{align*}
        \sum_{i\in\mathcal I_{t-1}} \omega_{it-1}\mu_{it-1}^{-1} &= \sum_{i\in\mathcal C_t}\omega_{it-1}\mu_{it-1}^{-1} + \sum_{i\in\mathcal X_{t-1}}\omega_{it-1}\mu_{it-1}^{-1},
    \end{align*}
we can write
    \begin{align*}
        \Delta\overline\mu_{\text{hsw}} &= - \frac{1}{\overline\mu_{\text{hsw},t} \times \overline\mu_{\text{hsw},t-1}} \left\{\sum_{i\in\mathcal C_t}\omega_{it}\mu_{it}^{-1} - \sum_{i\in\mathcal C_t}\omega_{it-1}\mu_{it-1}^{-1} + \sum_{i\in\mathcal E_t}\omega_{it}\mu_{it}^{-1} - \sum_{i\in\mathcal X_{t-1}} \omega_{it-1}\mu_{it-1}^{-1} \right\}.
    \end{align*}
The result follows from algebraic manipulations to this expression.  
\end{proof}
\newpage

\subsection{Identification Proof for the Aggregate User Cost of Capital}\label{app:Rident_proof}
\begin{prop}
    Suppose labor compensation, the capital stock, and value added are observable at the aggregate level. Further suppose that representative micro-level data on sales, gross output, variable inputs, and capital stocks are available. Then, under the assumptions of cost-minimizing behavior, homogeneous, continuously differentiable, and quasiconcave production functions, and no monopsony power for one variable input, we can identify the aggregate user cost of capital, $R$.
\end{prop}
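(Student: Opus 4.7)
The plan is to combine the macro accounting identity $\text{GDP} = WL + RK + \Pi$ with the micro-aggregated profit share of Lemma \ref{lemma:microprofitshare} and the profit-rate expression of Proposition \ref{prop:indprofitrate}. Because $\text{GDP}$, $WL$, and $K$ are observable at the aggregate level, pinning down $R$ reduces to pinning down aggregate profits $\Pi$ from the micro side, so that $R = (\text{GDP} - WL - \Pi)/K$.

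First, I would use the no-monopsony assumption on one variable input, which delivers Hall's formula $\mu_i = \theta_{ij}/\alpha_{ij}$; after estimating the relevant input elasticity (and, as a by-product, all other input elasticities, and hence $\text{RS}_i$) via standard production-function methods on the representative micro panel, markups are identified producer by producer. The monopsony wedge on inputs other than the competitive one can then be recovered without knowledge of individual markdowns by appealing to the cost-side identity derived in the proof of Proposition \ref{prop:indprofitrate}, namely $\sum_j w_j x_{ij} = \text{mc}_i y_i \sum_j \theta_{ij}\nu_{ij}$: since variable-input expenditures are observable and $\text{mc}_i = p_i/\mu_i$ is identified, the composite $\sum_j \theta_{ij}\nu_{ij}$---and therefore $\mathcal{M}_i$---is pinned down. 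Proposition \ref{prop:indprofitrate} then delivers $s_{\pi_i}$ for every producer, and Domar-weighting via Lemma \ref{lemma:microprofitshare} gives $\Lambda_\Pi = \Pi/\text{GDP}$.

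The main obstacle is the potential circularity introduced by capital payments: $R_iK_i$ itself enters the cost side, while $R$ is exactly what we want to identify. I would close this loop by invoking the firm's first-order condition for capital, which under the same no-monopsony assumption applied to capital yields $R_i = (\theta_{iK}/\mu_i)(p_iy_i/K_i)$ firm by firm, so aggregate capital payments $RK = \sum_i (\theta_{iK}/\mu_i)\,p_iy_i$ can be read directly from the micro data and $R$ then follows by dividing by observed $K$. Consistency between this micro construction and the macro residual $1 - \Lambda_L - \Lambda_K$ is guaranteed by Lemma \ref{lemma:microprofitshare}, so the two routes produce the same identified value of $R$.
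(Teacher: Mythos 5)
Your argument is essentially the paper's own: identify profit rates at the micro level via Proposition \ref{prop:indprofitrate} without needing the rental rate, Domar-aggregate them through Lemma \ref{lemma:microprofitshare} to obtain $\Lambda_\Pi$, and then back out $R = (\text{GDP}/K)(1-\Lambda_L-\Lambda_\Pi)$ from the macro accounting identity. The only differences are presentational---you recover the composite $\sum_j \theta_{ij}\nu_{ij}$ in one step from the cost identity rather than estimating markdowns input by input \emph{\`a la} Yeh et al., and you make explicit the capital-payment circularity and its resolution via the capital first-order condition, which the paper's proof leaves implicit when it asserts that individual user costs are not required.
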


\begin{proof}
    By Lemma \ref{lemma:microprofitshare}, the aggregate profit share can be computed as Domar-weighted profit rates. By Proposition \ref{prop:indprofitrate}, economic profit rates are given by equation (\ref{eq:indprofitrate}). Markups, markdowns, elasticities, and returns to scale for each producer, $\{\mu_i, \bm \nu_i, \bm\theta_i, \text{RS}_i\}_i$, can be estimated using the production function approach and the control function approach explained in Section \ref{sec:data}. Importantly, notice that individual user costs of capital are not required in this estimation procedure. Markups are computed in a first step, using the first-order condition for the variable input no subject to monopsony, and markdowns for all the remaining inputs are estimated in a second step using markup estimates and first-order conditions. This is basically the procedure of \cite{yeh2022monopsony}.  \par 

    Using the macro approach, we can obtain the aggregate user cost of capital as
        \begin{align*}
            R = \frac{\text{GDP}}{K}\left(1 - \Lambda_L - \Lambda_\Pi\right), 
        \end{align*}
    where $\text{GDP}$ is aggregate value added, $K$ is the aggregate capital stock, $\Lambda_L$ is labor compensation over value, and $\Lambda_\Pi$ is the Domar-weighted profit share, computed using Lemma \ref{lemma:microprofitshare} and Proposition \ref{prop:indprofitrate}.
\end{proof}

\newpage 
\subsection{Proof of Theorem \ref{thm:profshare_decomp}}\label{app:thm1}

\noindent \textbf{Theorem \ref{thm:profshare_decomp}}. \textit{With cost-minimizer producers, continuously differentiable and quasiconcave production functions, fixed costs, and market power in factor- and output markets, the profit share can be expressed as}
    \begin{align}
        \Lambda_\Pi &= \chi\left(1 - \mathbb{E}_{\omega}\left[\frac{\text{SE}^{\text{adj}}}{\mu}\right] + \mathbb{E}_{\omega}\left[\frac{\mathcal M}{\mu}\right]\right) \tag{\ref{thm:agg_prof_general1}} \\
            &= \chi\left(1 - \frac{\overline{SE}^{\text{adj}}}{\overline\mu_{hsw}} + \frac{\overline{\mathcal M}}{\overline\mu_{hsw}} - \text{Cov}_\omega\left[\text{SE}^{\text{adj}}, \frac{1}{\mu}\right] + \text{Cov}_\omega\left[\mathcal M, \frac{1}{\mu}\right]\right), \tag{\ref{thm:agg_prof_general2}}
    \end{align}
\textit{where $\chi =\sum_{k\in\mathcal I} \frac{p_ky_k}{\text{GDP}}$ is the input-output multiplier, $\mathbb{E}_{\omega}\left[\cdot\right]$ denotes a sales-weighted average where $\omega$ indexes sales weights, 
$\text{SE}^{\text{adj}}$ is the scale elasticity adjusted for fixed costs given by (\ref{eq:RSadj}), $\mu$ denotes the markup, and $\mathcal M$ is a monopsony term given by (\ref{eq:monopsony_term}).}\par 

\textit{The term $\overline X$ is the sales-weighted average of $X$, $\overline\mu_{hsw}$ is the harmonic sales-weighted markup, and $\text{Cov}_\omega(X,Y)$ is the sales-weighted covariance of variables $X$ and $Y$.}

\begin{proof}
By Lemma \ref{lemma:microprofitshare}, the aggregate profit share can be computed as
        \begin{align*}
            \Lambda_\Pi = \sum_{i\in\mathcal I}\frac{p_iy_i}{\text{GDP}} \times s_{\pi_i}.
        \end{align*}
Proposition \ref{prop:indprofitrate} can be used to establish 
    \begin{align*}
        s_{\pi_i} &= 1 - \frac{\text{SE}_i^{\text{adj}}}{\mu_i} + \frac{\mathcal M_i}{\mu_i}.
    \end{align*}
Hence, 
    \begin{align*}
        \Lambda_\Pi &= \sum_{i\in\mathcal I}\frac{p_iy_i}{\text{GDP}}\left(1 - \frac{\text{SE}_i^{\text{adj}}}{\mu_i} + \frac{\mathcal M_i}{\mu_i}\right) \\
        &= \underbrace{\left(\sum_{k\in\mathcal I} \frac{p_ky_k}{\text{GDP}}\right)}_{\equiv \chi}\sum_{i\in\mathcal I}\underbrace{\frac{p_iy_i}{\sum_k p_ky_k}}_{\substack{\equiv\omega_i}}\left(1 - \frac{\text{SE}_i^{\text{adj}}}{\mu_i} + \frac{\mathcal M_i}{\mu_i}\right) \\
        &= \chi\left(1 - \mathbb E_\omega\left[\frac{\text{SE}^{\text{adj}}}{\mu}\right] + \mathbb E_\omega\left[\frac{\mathcal M}{\mu}\right]\right) \\
        &= \chi\left(1 - \frac{\overline{\text{SE}^{\text{adj}}}}{\overline\mu_{hsw}} + \frac{\overline{\mathcal M}}{\overline\mu_{hsw}} - \text{Cov}_\omega\left[\text{SE}^{\text{adj}}, \frac{1}{\mu}\right] + \text{Cov}_\omega\left[\mathcal M, \frac{1}{\mu}\right]\right)
    \end{align*}
\end{proof}

\newpage
\section{Example Linking Increasing Returns and Fixed Costs}\label{app:FC_example}
We provide a simple example to support our claim that increasing returns to scale can be thought of as fixed costs arising from employing factors of production. \par 

Suppose there is a firm with production technology
    \begin{equation}\label{eq:FC_ex_prod}
        y = \begin{cases} A(\ell-\overline\ell)^\alpha, & \ell>\overline\ell \\ 0, & \text{otherwise} \end{cases},
    \end{equation}
where $y$ is output, $A\in\mathbb{R}_{++}$ is a productivity shifter, $\ell\in\mathbb{R}_+$ is labor, $\overline\ell\in\mathbb{R}_{++}$ is a minimum requirement on labor to produce positive output, and $\alpha\in\mathbb{R}_{++}$ is a parameter governing the returns to scale in production.\footnote{An example of a firm with production function (\ref{eq:FC_ex_prod}) could be a pharmaceutical company that needs to hire a minimum amount of labor hours $\bar\ell$ (say, hours provided by attorneys) in order to obtain approval by the US Food and Drug Administration (FDA).} \par 

Assuming the production function is continuously differentiable and that there is no monopsony power, we can write the derivative of output with respect to labor $\ell$ as
    \begin{align}\label{eq:derivative_FC_example}
        \frac{\mathrm d y}{\mathrm d\ell} = \alpha A(\ell-\overline\ell)^{\alpha-1}.
    \end{align}
Multiplying both sides of (\ref{eq:derivative_FC_example}) by $(\ell-\overline\ell)$ we get
    \begin{align*}
        \frac{\mathrm d y}{\mathrm d\ell}(\ell-\overline\ell) = \alpha A(\ell-\overline\ell)^{\alpha} = \alpha y.
    \end{align*}
Rearranging this expression and doing simple algebraic manipulations yields
    \begin{align*}
        \frac{\mathrm d y}{\mathrm d\ell}\frac{\ell}{y} &= \alpha + \frac{\mathrm d y}{\mathrm d\ell}\frac{\overline\ell}{y} \noindent \\
                                                        &= \alpha + \frac{\mathrm d y}{\mathrm d\ell}\frac{\ell}{y}\times\frac{\overline\ell}{\ell},
    \end{align*}
which implies
    \begin{align*}
        \frac{\mathrm d y}{\mathrm d\ell}\frac{\ell}{y} &= \alpha\frac{\ell}{\ell-\overline\ell} > \alpha.
    \end{align*}
Thus, if $\alpha = 1$ and $\ell > \overline \ell > 0$, there are increasing returns to scale because of fixed costs. \par 

Next, we argue that the returns to scale, $\text{RS}$, that we would estimate for this production function are
    \begin{equation*}
        \text{RS} = \alpha \frac{\ell}{\ell-\overline\ell}.
    \end{equation*}
To see this, suppose we have cross-sectional data on $\{y, (\ell+\overline\ell)\}_i$, where $y$ is output, $(\ell+\overline\ell)$ is total labor, and $i$ indexes firms. Then, consider the following regression
    \begin{align*}
        \log y_i = \beta\log(\ell+\overline\ell)_i + \tilde \omega_{i},
    \end{align*}
where $\beta$, the returns to scale, is the parameter of interest, and $\tilde\omega$ is white noise. \par 

Note that 
    \begin{align*}
        \beta &= \frac{\mathrm d\log y}{\mathrm d\log(\ell + \overline\ell)} \\
              &= \frac{\mathrm d\log y}{\mathrm d\log \ell} \\
              &= \frac{\mathrm d y}{\mathrm d \ell} \frac{\ell}{y}.
    \end{align*}
Given production function (\ref{eq:FC_ex_prod}), it follows that
    \begin{align*}
        \beta &= \alpha \frac{\ell}{\ell-\overline\ell},
    \end{align*}
which is the parameter we would recover from a regression. \par 

From this simple example, we argue that increasing returns to scale may capture increasing fixed costs. We use this logic to interpret the returns to scale we estimate in our empirical analysis. \par


\newpage 
\section{The DEU--Basu Controversy}\label{app:basu_deu}
\cite{basu2019price} was the first to suggest that \cite{de2020rise}'s markup estimates had implausible macroeconomic implications. Through an informal argument, he noted that a markup of 1.61 in 2016 and returns to scale of 1.05 imply a profit share of about 70\% of value added in the United States once we recognize that the ratio of sales to value added is approximately two. He argued that such a profit share was implausible since the labor share, which can be obtained from NIPA tables, was around 62\% of GDP. \cite{barkai2020declining} made \citeauthor{basu2019price}'s argument more formally by explicitly using expression 
    \begin{align}\label{eq:basu_deu}
        \Lambda_\Pi = \chi\left(1 - \frac{\text{RS}}{\mu}\right),
    \end{align}
where $\Lambda_\Pi$ is the profit share, $\chi$ is the ratio of sales to aggregate value added, $\text{RS}$ are returns to scale, and $\mu$ is the markup. \par 

\citeauthor*{de2020rise}'s response to \citeauthor{basu2019price} (and indirectly to \citeauthor{barkai2020declining}) was to say that such an argument could not be made because it incorrectly relied on a representative-firm assumption and imposed no fixed costs. We contribute to this discussion by providing an exact aggregation result that clarifies where the arguments of \citeauthor{basu2019price} and \citeauthor{barkai2020declining} fail and why the aggregate sales-weighted markup of \citeauthor*{de2020rise} cannot be used to infer the aggregate profit share.\par 

Corollary \ref{corr:nomonop} provides the following expression for the profit share under the assumption of no monopsony:
    \begin{align*}
        \Lambda_\Pi = \chi\left(1 - \frac{\overline{\text{RS}}}{\overline\mu_{hsw}} - \text{Cov}_\omega\left[\text{RS}, \frac{1}{\mu}\right]\right), \tag{\ref{eq:profshare_nomonop}}
    \end{align*}
where $\overline{\text{RS}} = \overline{\text{SE}}^{\text{adj}}$ are sales-weighted returns to scale, which are equal to the sales-weighted scale elasticity adjusted for fixed costs, $\overline\mu_{hsw}$ is the harmonic sales-weighted markup, and $\text{Cov}_\omega(\cdot,\cdot)$ is the sales-weighted covariance operator. \par 

Inspection of equation (\ref{eq:profshare_nomonop}) reveals several problems with \citeauthor{basu2019price} and \citeauthor{barkai2020declining}'s arguments. First, as noted by \citeauthor*{de2020rise}, if there is heterogeneity in markups and returns to scale across producers, aggregation is nonlinear as by Jensen's inequality. That explains why the covariance term appears in equation (\ref{eq:profshare_nomonop}). Second, our aggregation result demands the harmonic sales-weighted markup and sales-weighted returns to scale, not the sales-weighted markup. If there is dispersion in markups, the sales-weighted markup is larger than the harmonic sales-weighted markup, and the discrepancy between the two markup notions depends on the variance of the cross-sectional distribution of markups. Thus, using the sales-weighted markup in equation (\ref{eq:profshare_nomonop}) would generally lead to overstating the aggregate profit share. 
While \citeauthor{basu2019price} and \citeauthor{barkai2020declining} used what was available to them---that is, the estimates provided by \citeauthor*{de2020rise}---the sales-weighted markup is not an appropriate measure of aggregate markup to draw conclusions about the aggregate profit share; instead, the harmonic sales-weighted markup is. A third problem with \citeauthor{basu2019price} and \citeauthor{barkai2020declining}'s arguments in inferring the profit share from \citeauthor{de2020rise}'s estimates is to not account for fixed costs. What they use as returns to scale are not returns to scale, but rather the scale elasticity of the production function. In the absence of monopsony power, returns to scale are given by the scale elasticity times a fixed-cost adjustment factor as our formula shows. The fixed-cost adjustment factor is important because \citeauthor{de2020rise} treat SG\&A as an explicit fixed cost. Once we aggregate correctly and account for all fixed costs, we get equation (\ref{eq:profshare_nomonop}), which is well suited to assess the validity of micro estimates of markups and returns to scale under the assumption of no monopsony.\footnote{Equation (\ref{eq:profshare_nomonop}) nests \citeauthor{basu2019price}'s and \citeauthor{barkai2020declining}'s expression (\ref{eq:basu_deu}) as the special case in which there are no explicit fixed costs and no heterogeneity in markups.} \par 

In Figure \ref{fig:profshare_Basu_comparison}, we map micro-level estimates of markups and returns to scale to the aggregate profit share using equation (\ref{eq:profshare_nomonop}). To relate to the discussion between \citeauthor{basu2019price} and \citeauthor*{de2020rise}, we use the production function parameters that replicate \cite{de2020rise}'s results. That is, we use the cost of goods sold (COGS) as a variable input, treat selling, general, and administrative expenses (SG\&A) as fixed costs, and only consider physical capital when defining the capital stock. The Hasenzagl--Perez series maps micro-level estimates of markups and returns to scale to an aggregate profit share that is constant at around 15\% of GDP. The other three series map firm-level estimates of markups to the aggregate profit share using flawed aggregation schemes and ignoring relevant features of the data such as fixed costs.\par 

\begin{figure}[h!]
    \centering
    \caption{Mapping Micro-Level Estimates of Markups to the US Profit Share.}
    \includegraphics[scale=0.35]{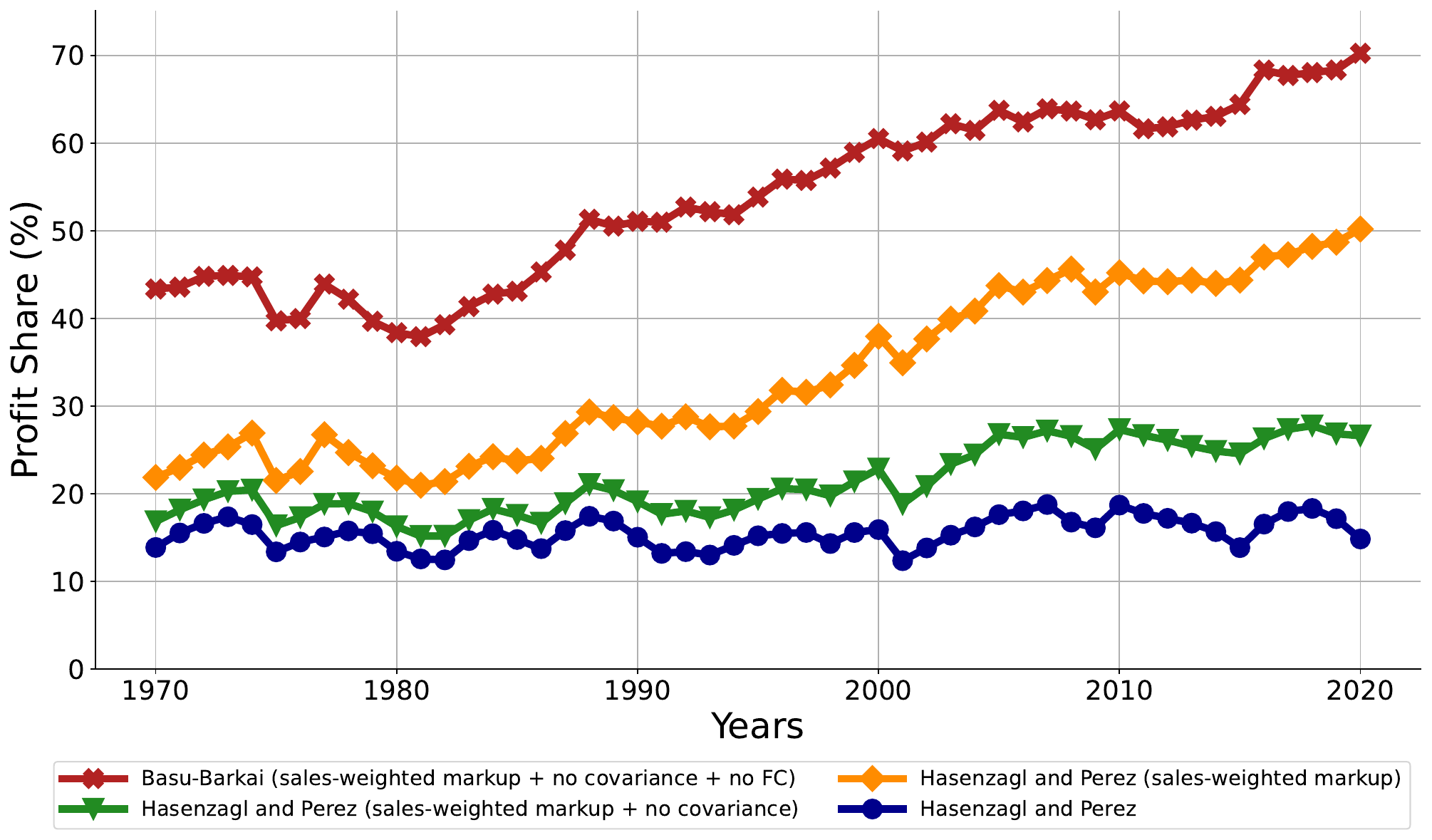}
    \label{fig:profshare_Basu_comparison}
    \vspace{7pt}
\end{figure}

\newpage 
\section{Data Appendix}\label{app:data}
Table \ref{tab:sumstats_compustat} provides key summary statistics for the variables we use in our estimation procedure. Although the relevant years of analysis are 1970--2020, we provide summary statistics for 1966--2020 given that information from 1966 onward is used to estimate input elasticities, as noted in Section \ref{sec:PFE}. \par 
\begin{table}[h!]
    \centering
    \caption{Summary Statistics from US Compustat, 1966--2020.}
    \begin{tabular}{l | c c c c c}
    \toprule 
    \textbf{Variable} & \textbf{Mean} & \textbf{Median} & \textbf{Min} & \textbf{Max} & \textbf{Std. dev.} \\
    \midrule 
    SALE     & 2261.51 & 157.67 & 0 & 489529.70 & 11930.17 \\
    COGS     & 1544.21 & 92.69  & 0 & 428240.60 & 9118.36  \\
    SG\&A    & 332.07  & 28.22  & 0 & 98210.41  & 1619.11  \\
    R\&D     & 51.64   & 0      & 0 & 37567.35  & 436.57   \\
    PPEGT    & 1658.22 & 54.94  & 0 & 542207.80 & 10979.76 \\
    K\_INT   & 1013.56 & 45.12  & 0 & 334349.90 & 6091.17  \\
    CAPX     & 161.80  & 6.58   & 0 & 50038.86  & 1042.36   \\
    \cdashline{1-6}
    \multicolumn{1}{l}{Observations}  & \multicolumn{5}{l}{261,945} \\
    \bottomrule 
    \end{tabular}
    \label{tab:sumstats_compustat}
\end{table}

\newpage 
\section{Biases Associated with the Use of Revenue Elasticities}\label{app:revenue_bias}
In this section, we discuss how the use of revenue- rather than output data affects elasticity estimates and measures of market power and profitability. More specifically, we show that although using revenue-based elasticities (instead of output ones) may introduce downward biases in markups, our estimates of profit rates and profit shares are unaffected by this source of bias.\par 

\subsection{Markups}
As first noted by \cite{klette1996inconsistency}, the use of deflated revenue data as a proxy for real output introduces an omitted-variable bias in the estimated scale elasticities when there is price dispersion in prices within the industry for which scale elasticities are estimated. This bias is caused by the correlation of prices with inputs and generally results in a downward bias in the scale elasticities of the production function and, consequently, in markups. More recently, \cite{bond2021some} and \cite{de2022hitchhiker} have provided a more general discussion on the direction of this bias and raised additional concerns about using revenue data to estimate output elasticities. \par 

The major criticism put forth by \citeauthor{bond2021some} is that when the revenue elasticity for a flexible input is used in place of the output elasticity, the estimated markup of a firm that maximizes current-period profits is equal to one and thus not informative of the true markup. The key to this argument is that firms maximize static profits. In more general environments in which firms maximize the discounted sum of profits, static profit maximization need not apply, although firms may still minimize costs statically.\footnote{Several papers in the repeated-games literature have shown that with with low-enough discount factors and strategic considerations, it is possible to sustain equilibrium outcomes that look very different from those that arise under static profit maximization \citep[e.g., see][]{green1984noncooperative, abreu1986extremal}. } Imposing only cost-minimizing behavior, our next proposition shows, similarly to \citeauthor{klette1996inconsistency} and \citeauthor{bond2021some}, that revenue-based markups are downward biased when firms face downward-sloping demand curves. \par 

\begin{prop}\label{prop:biased_markup}
    Revenue-based markups $\mu^R$ understate true markups $\mu$ in situations of monopolistic competition; that is, $\mu^R\leq\mu$.
\end{prop}

\begin{proof}
    Cost minimization implies that the true markup can be written as $\mu = \theta_\ell \alpha_\ell^{-1}$, where $\theta_\ell$ is the output elasticity of variable input $\ell$, and $\alpha_\ell^{-1}$ is its inverse revenue share. The revenue-based markup is given by
        \begin{align*}
            \mu^R = \frac{\theta^R_\ell}{\alpha_\ell},
        \end{align*}
    where $\theta^R_\ell:= \frac{\partial R}{\partial \ell}\frac{\ell}{R}$ is the revenue-based elasticity of input $\ell$, and $R=py$ is revenue. \par 

    Using the chain rule it follows that $\mu^R = \theta^R_y \mu$, where $\theta^R_y:=\frac{\mathrm d R}{\mathrm d y}\frac{y}{R}$ is the revenue elasticity of output. \par 
    
    Under perfect competition, firms have no ability to influence prices, so $\theta^R_y = 1$ and $\mu^R = \mu$. When firms face downward-sloping demand curves and have the ability to influence prices, it is easy to show that
        \begin{align*}
            \theta^R_y = \left(\frac{\epsilon-1}{\epsilon}\right) < 1,
        \end{align*}
    since the firm's inverse (perceived) elasticity of demand is $\epsilon < \infty$. Hence, $\mu^R\leq\mu$.    
\end{proof}

Proposition \ref{prop:biased_markup} establishes that revenue-based markups understate true markups when firms face downward-sloping demand curves. The magnitude of this bias depends on the elasticity of the demand curve. This bias vanishes in the limiting case of a perfectly-elastic demand curve ($\epsilon\to\infty$). \par 

\subsection{Profit Rates and Profit Share}
Our next proposition shows that even when revenue-based markups understate true markups, revenue-based profit rates, which rely on revenue-based markups, are unaffected by this source of bias. \par 

\begin{prop}\label{prop:unbiased_profitrates}   
    Revenue-based and output-based profit rates are equal; that is, $s_\pi^R = s_\pi$. 
\end{prop}

\begin{proof}
    By proposition \ref{prop:indprofitrate}, the (output-based) profit rate of a monopolistic producer with no market power in factor markets is given by
        \begin{align*}
            s_\pi = 1 - \frac{\text{RS}}{\mu},
        \end{align*}
    where $\text{RS} = \text{SE}\left(\frac{\text{TC}}{\text{TC}-\text{FC}}\right)$ are the returns to scale which are given by the scale elasticity $\text{SE} = \sum_j\theta_j$ times a fixed-cost adjustment factor, and $\mu$ is the output-based markup. \par 

    The revenue-based elasticity of each input can be shown to be $\theta_j^R = \theta^R_y \times\theta_j$, where $\theta^R_y$ is the revenue elasticity of output and $\theta_j$ is the output elasticity of input $j$. Hence, both the returns to scale and the revenue-based markup are biased by the same factor. When computing the profit rate, these two biases cancel. 
\end{proof}

Propositions \ref{prop:biased_markup} and \ref{prop:unbiased_profitrates} together imply that although our estimates of markups may be downward biased, our profit rates estimates, and consequently the profit share, are unaffected by the use of revenue elasticities. \par  

\subsection{Other Sources of Biases}
The literature has identified additional sources of bias in markup estimates against which our results are not immune. \par 

\cite{de2022hitchhiker} show that using a Cobb-Douglas production function to estimate revenue-based elasticities, as we do, generally leads to overestimating the variance of the markup distribution. \par 

\citeauthor{bond2021some} make two additional criticisms that apply even to the estimation of output elasticities. The first is that if the variable input used to estimate the markup is distorted, the estimated markup will reflect not only market power in output markets but also other frictions (e.g., adjustment costs.) A friction of particular interest is monopsony power. If a producer has monopsony power in the variable input used to construct the markup, then its markup captures both market power in output- and input markets, as noted by \cite{morlacco2019market}, \cite{brooks2021exploitation}, and \cite{yeh2022monopsony}. The second criticism set forth by \citeauthor{bond2021some} is that if markups are estimated using a variable input that firms use for both the production of output and influencing demand (e.g., marketing expenses,) then they will be downward biased. \par 

\cite{doraszelski2020inconsistency} argue that the first-stage of the \cite{ackerberg2015identification}'s estimation procedure employed requires researchers to observe markups while markup estimation is in itself the goal of the estimation procedure. \cite{de2022hitchhiker} argue that not controlling for markups in the first-stage of this estimation procedure does not seem to matter much in practice. \par

\newpage 
\section{Additional Results}\label{app:results}

\subsection{Replication of \cite*{de2020rise}}\label{app:deu_rep}
In this section, we show that computing and aggregating markups as \cite{de2020rise} do, we almost perfectly replicate their results. The replication of markups is shown in Figure \ref{fig:deu_rep}.\footnote{We find very similar results when we estimate time-varying input elasticities which are stable and oscillate around 0.84--0.86.} Small quantitative discrepancies result from updates in US Compustat and minor differences in sample selection, data cleaning, and input elasticities.  \par 
\begin{figure}[h!]
    \centering
    \caption{Replication of \citeauthor{de2020rise}'s Markup Series, 1970--2020.}
    \includegraphics[scale=0.37]{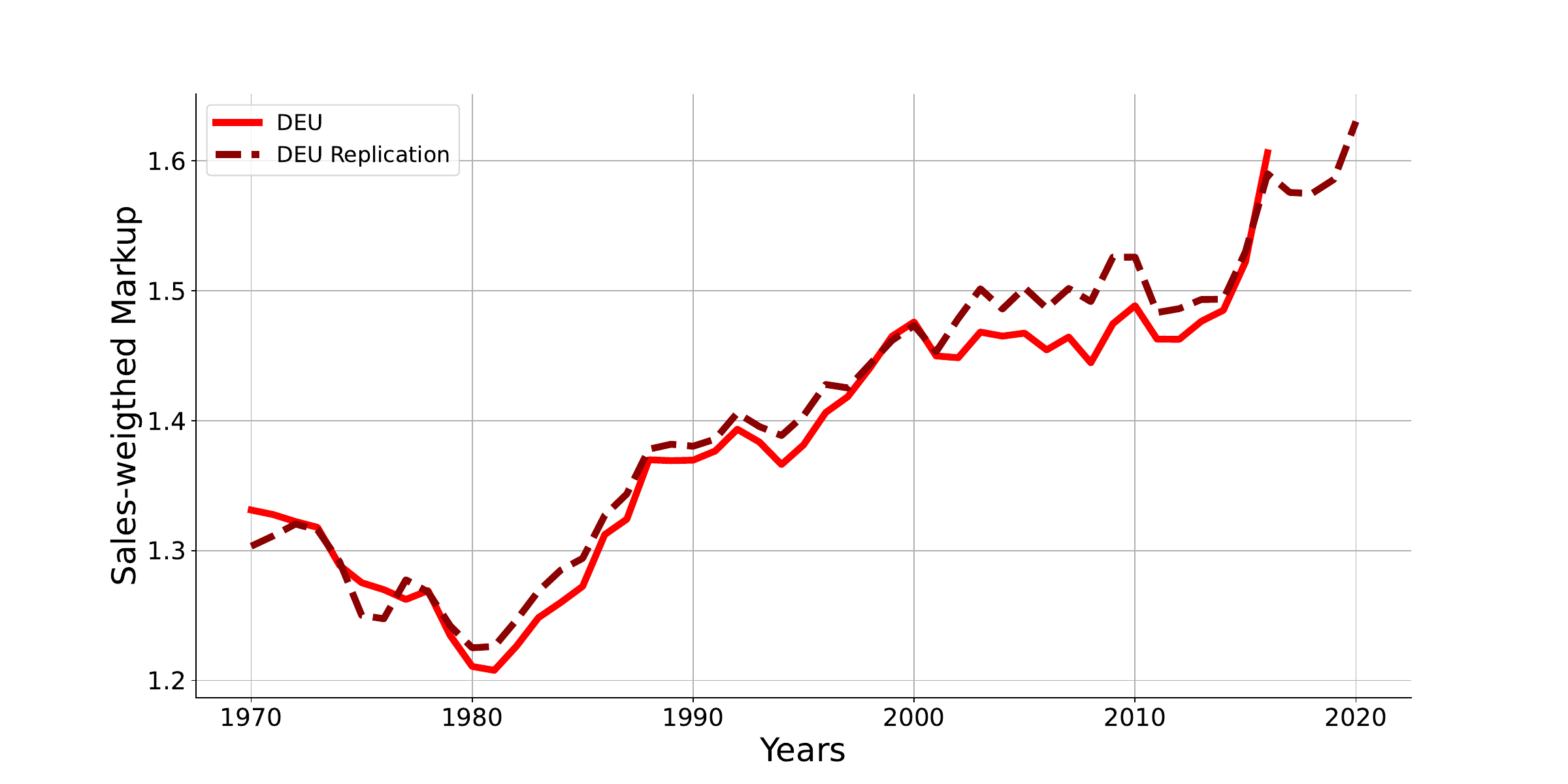}
    \label{fig:deu_rep}
    \begin{minipage}{1\textwidth}
        \scriptsize{\textbf{Table Notes}. Firm-level data comes from US Compustat. Markups are computed using the costs of goods sold (COGS) as variable costs, time-varying elasticities, and aggregating markups using sales weights. }  
    \end{minipage}
\end{figure}

Figure \ref{fig:DEU_mu_differences} shows four different aggregate markup series: the one from \citeauthor*{de2020rise} (DEU replication) and three other aggregate markup series which rely on the aggregation scheme implied by our theorem. The difference between cross-marked and diamond-marked series illustrates that more than half of the differences between \citeauthor{de2020rise}'s markup series and ours arises due to different aggregation schemes. We use the harmonic sales-weighted markup rather than the sales-weighted markup. The difference between cross-marked and triangle-marked series indicates in light of the above that the remaining difference between \citeauthor{de2020rise}'s markup series and ours is due to a different categorization of variable inputs. We treat the cost of goods sold (COGS) and selling, general, and administrative expenses (SG\&A) as variable inputs, while \cite{de2020rise} categorize only COGS as variable inputs. In theory, the markup could be recovered from any variable input so it should not matter whether one uses COGS or OPEX ($=$ COGS + SG\&A) to estimate markups. In practice, however, the categorization of variable inputs matters for markup estimation. We follow \cite{traina2018aggregate} and most of the subsequent literature and treat both COGS and SG\&A as variable inputs. Finally, we note that the inclusion of intangible capital does not affect markup estimates, and that the discrepancies between triangle-marked and dot-marked lines reflect measurement error. This is to be expected since markup estimates do not depend on the elasticity of output with respect to capital. \par 
\begin{figure}
    \centering
    \includegraphics[scale=0.37]{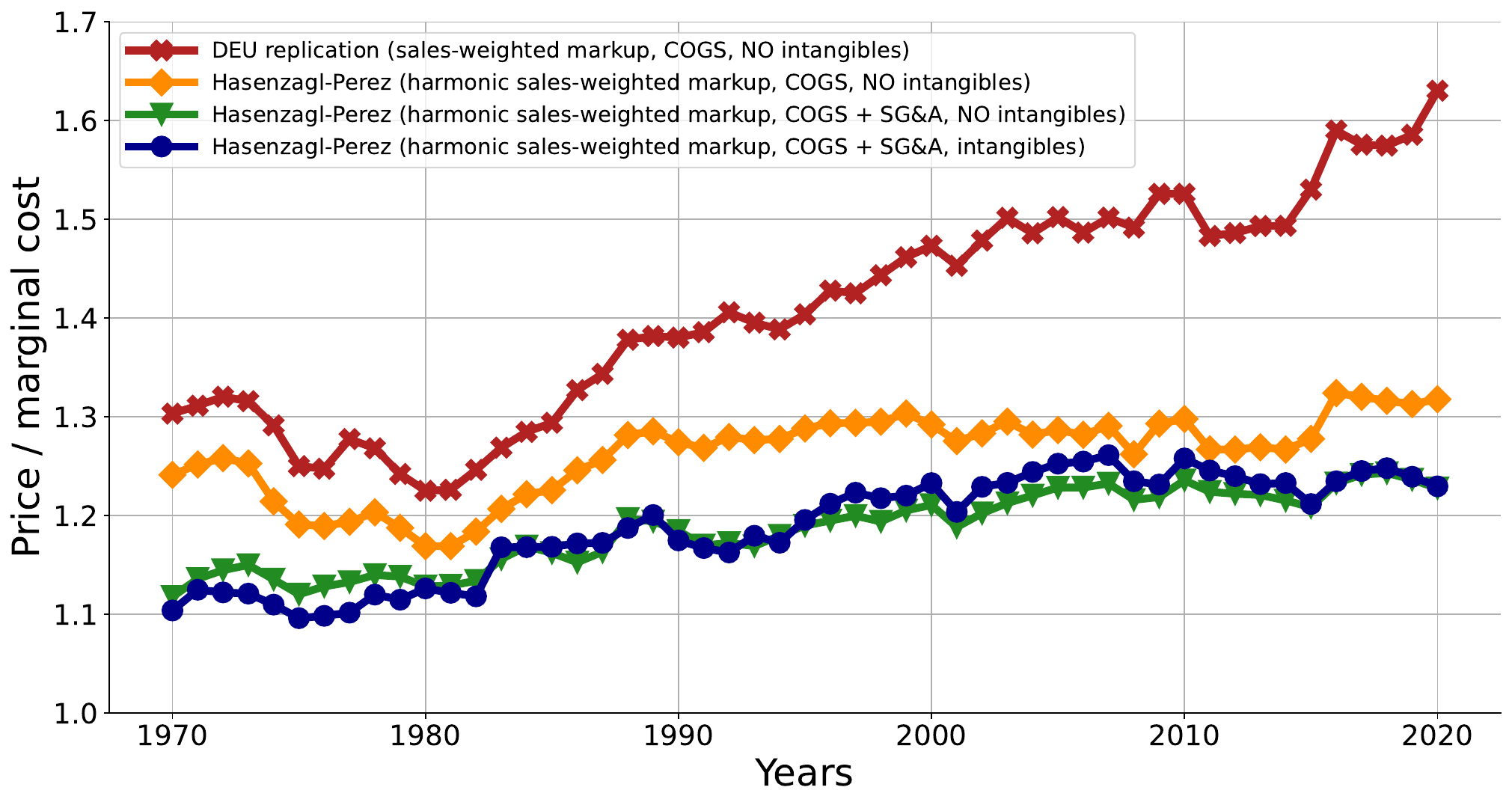}
    \caption{Differences Between \cite{de2020rise}'s Markup Series and Ours.}
    \label{fig:DEU_mu_differences}
\end{figure}

Figure \ref{fig:deu_rep_profit} shows the replication of \cite{de2020rise}'s sales-weighted average profit rate. Again, we almost perfectly replicate their series. This is not surprising in the light of the almost perfect replication of markups and input elasticity estimates, since we we also use their exogenous user cost of capital.\par 
\begin{figure}[h!]
    \centering
    \caption{Replication of \cite{de2020rise} Sales-weighted Profit Rate, 1970--2020.}
    \vspace{-10pt}
    \includegraphics[scale=0.37]{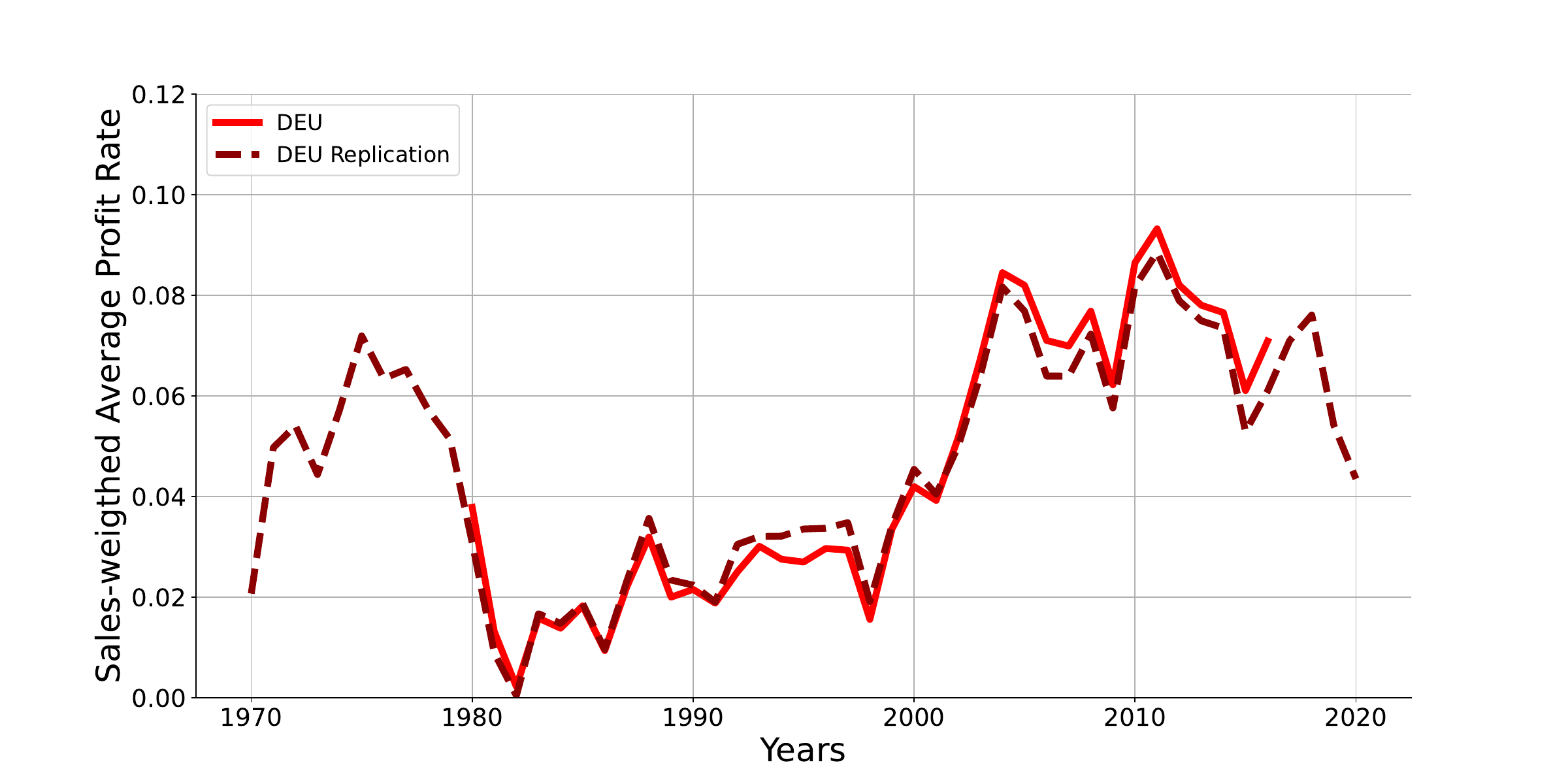}
    \label{fig:deu_rep_profit}
    \begin{minipage}{1\textwidth}
        \scriptsize{\textbf{Table Notes}. Firm-level data comes from US Compustat. Markups are computed using the costs of goods sold (COGS) as variable costs, time-varying elasticities, and aggregating markups using sales weights. Profit rates for each producer are computes as one minus the time-varying input elasticity of the variable input over the markup, minus fixed costs divded by sales.}  
    \end{minipage}
\end{figure}

\clearpage 
\newpage 
\subsection{The Input--Output Multiplier in the United States}\label{app:IO_mult}
Figure \ref{fig:IO_mult} plots the time series for the input-output multiplier, which is the sum of Domar weights or, in other words, the ratio of economy-wide sales to GDP. Clearly from this figure, the IO multiplier for the United is roughly stable at around 1.8 and mildly procyclical. This implies that, absent changes on markups, markdowns and returns to scale during recessions, the aggregate profit share should decline during recessions. \par 
\begin{figure}[h!]
    \centering
    \caption{The US Input--Output Multiplier, 1970--2020.}
    \vspace{-10pt}
    \includegraphics[scale=0.42]{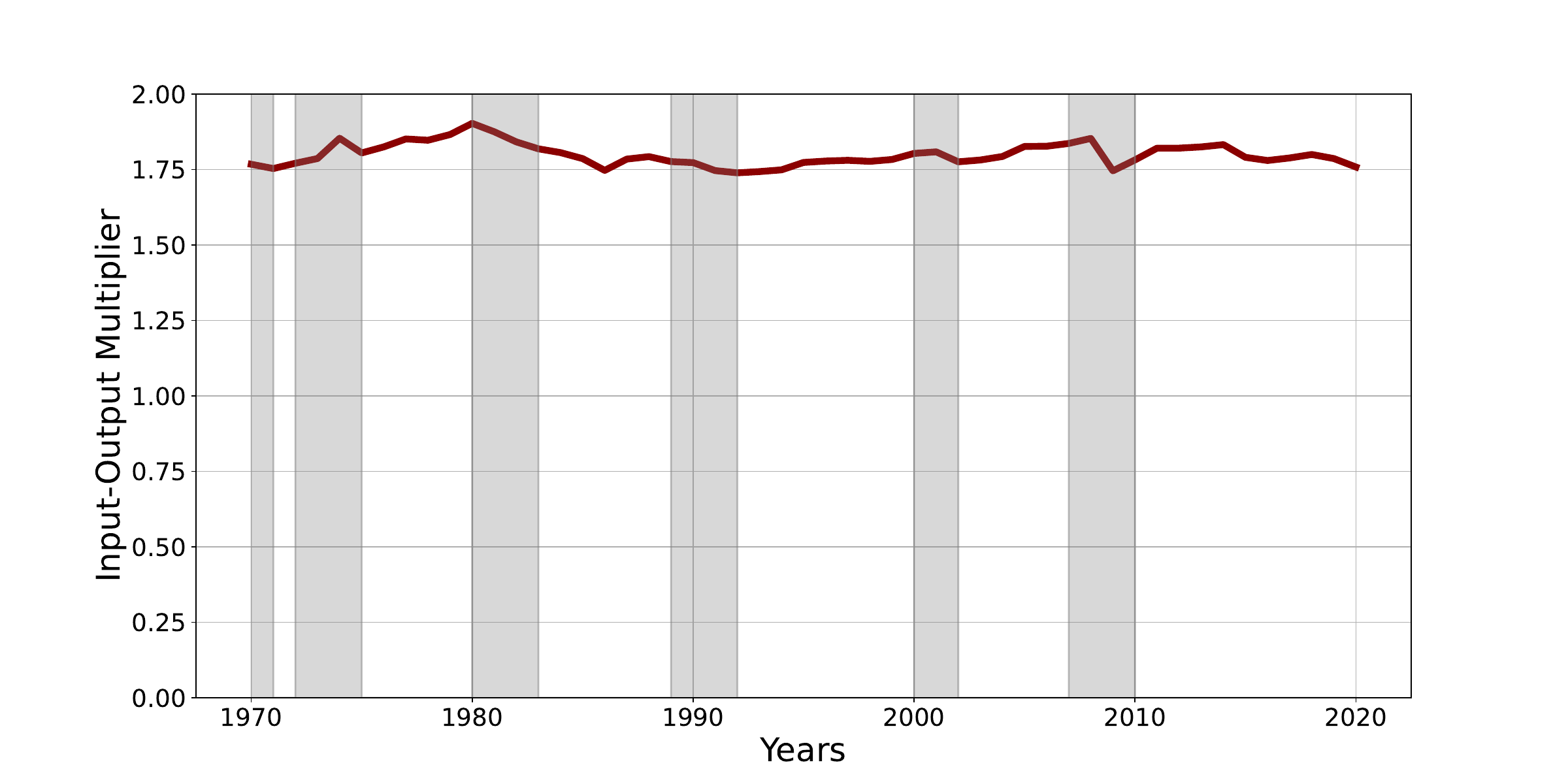}
    \label{fig:IO_mult}
\end{figure}

\newpage 
\subsection{Heterogeneity in Variable-Input Elasticities}\label{app:elast_het}
\begin{figure}[h!]
    \centering
    \caption{Variable Input Elasticities Across Industries, 1970--2020.}
    \vspace{-10pt}
    \includegraphics[scale=0.4]{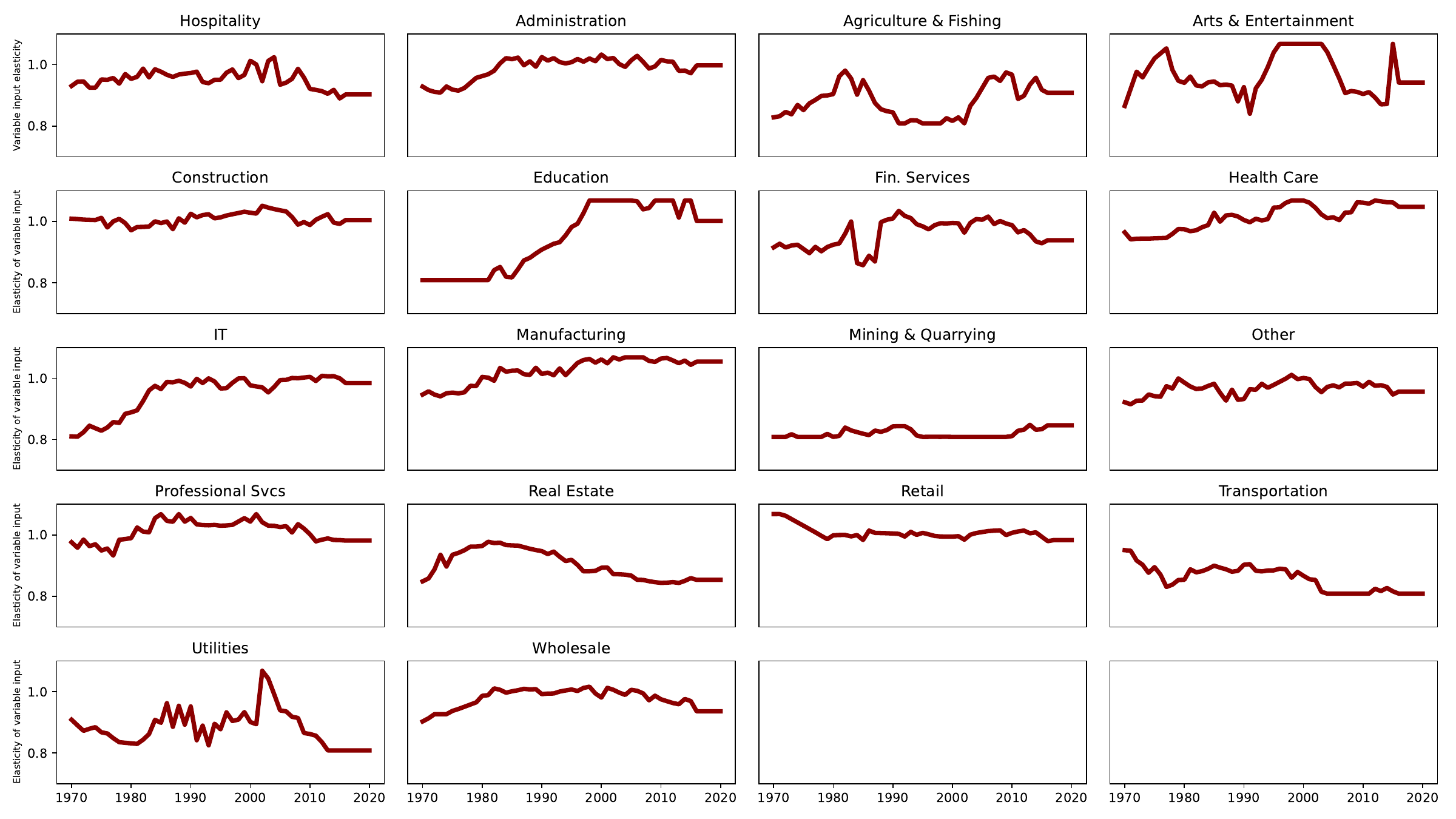}
    \label{fig:Elast_int}
\end{figure}

\subsection{Heterogeneity in Returns to Scale}\label{app:RS_het}
\begin{figure}[h!]
    \centering
    \caption{Returns to Scale Across Industries, 1970--2019.}
    \vspace{-10pt}
    \includegraphics[scale=0.4]{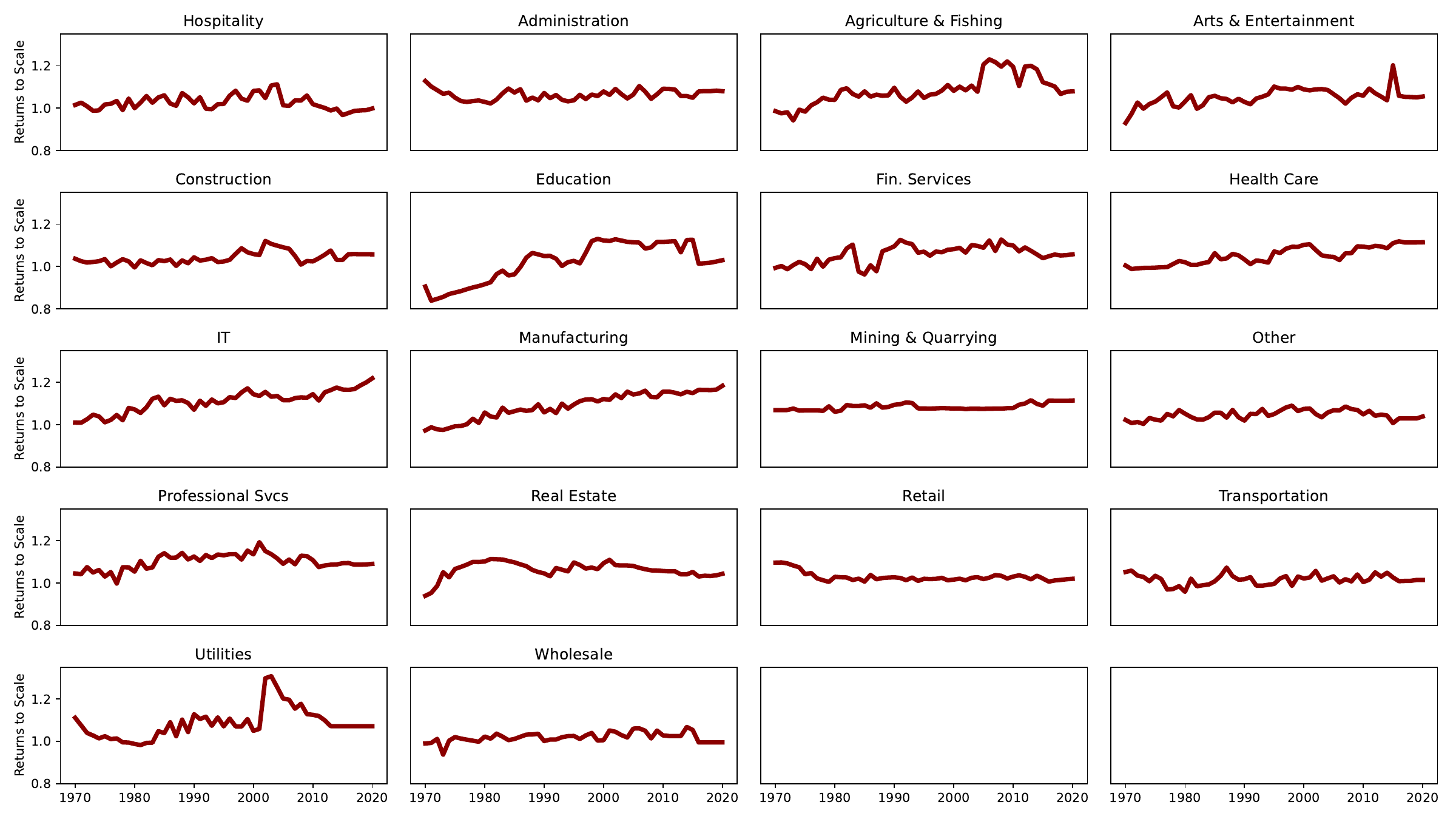}
    \label{fig:RS_ind}
\end{figure}

\newpage 
\subsection{Heterogeneity in Markups}\label{app:markup_het}
Figure \ref{fig:markup_het} provides time series of industry-level markups. As the different panels show, industry-level markups have remained stable for a handful of industries---namely, construction, health care, mining and quarrying, professional services, transportation, arts and entertainment, and wholesale- and retail trade. Other industries such as hospitality, administration, agriculture and fishing, education, financial services, IT, manufacturing, real estate and utilities have experienced large increases in markups over the past fifty years.\par 
\begin{figure}[h!]
    \centering
    \caption{Industry-level Markups, 1970--2020.}
    \includegraphics[scale=0.4]{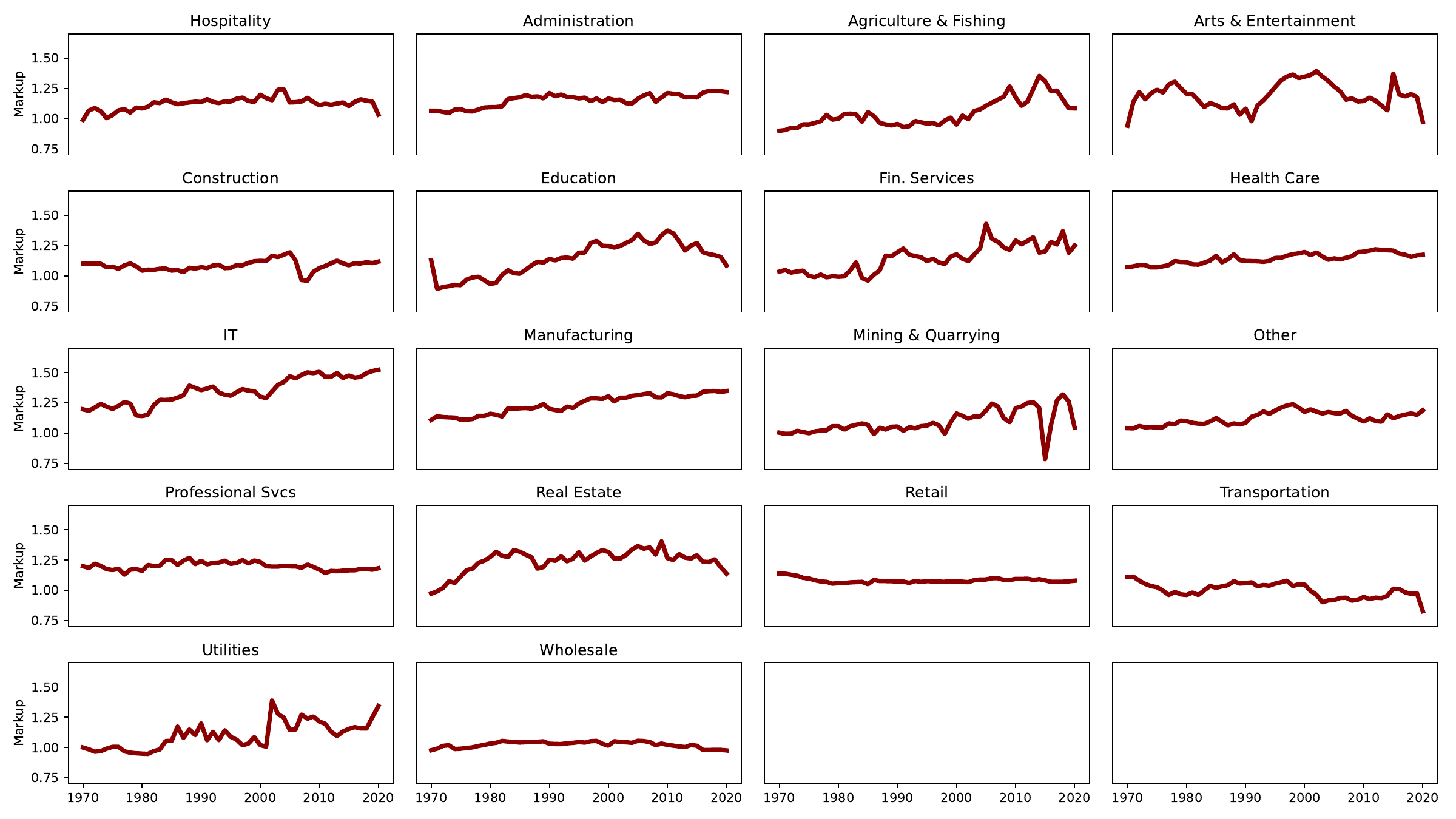}
    \label{fig:markup_het}
\end{figure}

\newpage 
\clearpage
\subsection{Concentration Measures }\label{app:concentration}

In this section, we report Herfindahl-Hirschman Index (HHI) measures of concentration, both at the national and the industry level, for the United States. We compute the HHI as the sum of squared sales shares. That is, 
    \begin{align*}
        \text{HHI}_t &= \sum_{i\in\mathcal I} \omega_{it}^2, \tag{National HHI} \\ 
        \text{HHI}_{jt} &= \sum_{i\in\mathcal I_j} \omega_{it}^2, \tag{Industry-$j$ HHI}
    \end{align*}
where $\omega_{it}$ is the sales share of producer $i$ at time $t$ (i.e., the sales of producer $i$ over total sales), $\mathcal I$ is the set of all firms, and $\mathcal I_j$ is the set of all firms in industry $j$. Thus, total sales in the national HHI is the sum of sales for all firms in the country, whereas total sales in industry $j$'s HHI is the sum of sales of all firms in that particular industry. \par  

The national HHI for the United States, reported in Figure \ref{fig:national_HHI}, indicates that market concentration has decreased from 1970 to 2020. More specifically, concentration has declined from 0.007 in 1970 to 0.004 in 2000, and then increased until 0.005 in 2020. To make sense of these numbers, we provide the interpretation of the HHI suggested by \cite{smith2022evolution}. Perhaps surprisingly, the probability that two dollars, chosen at random, were spent on the same firm was higher in 1970 than in 2020. \par 

\begin{figure}[h!]
    \centering
    \caption{National HHI in the United States, 1970--2020.}
    \vspace{-10pt}
    \includegraphics[scale=0.42]{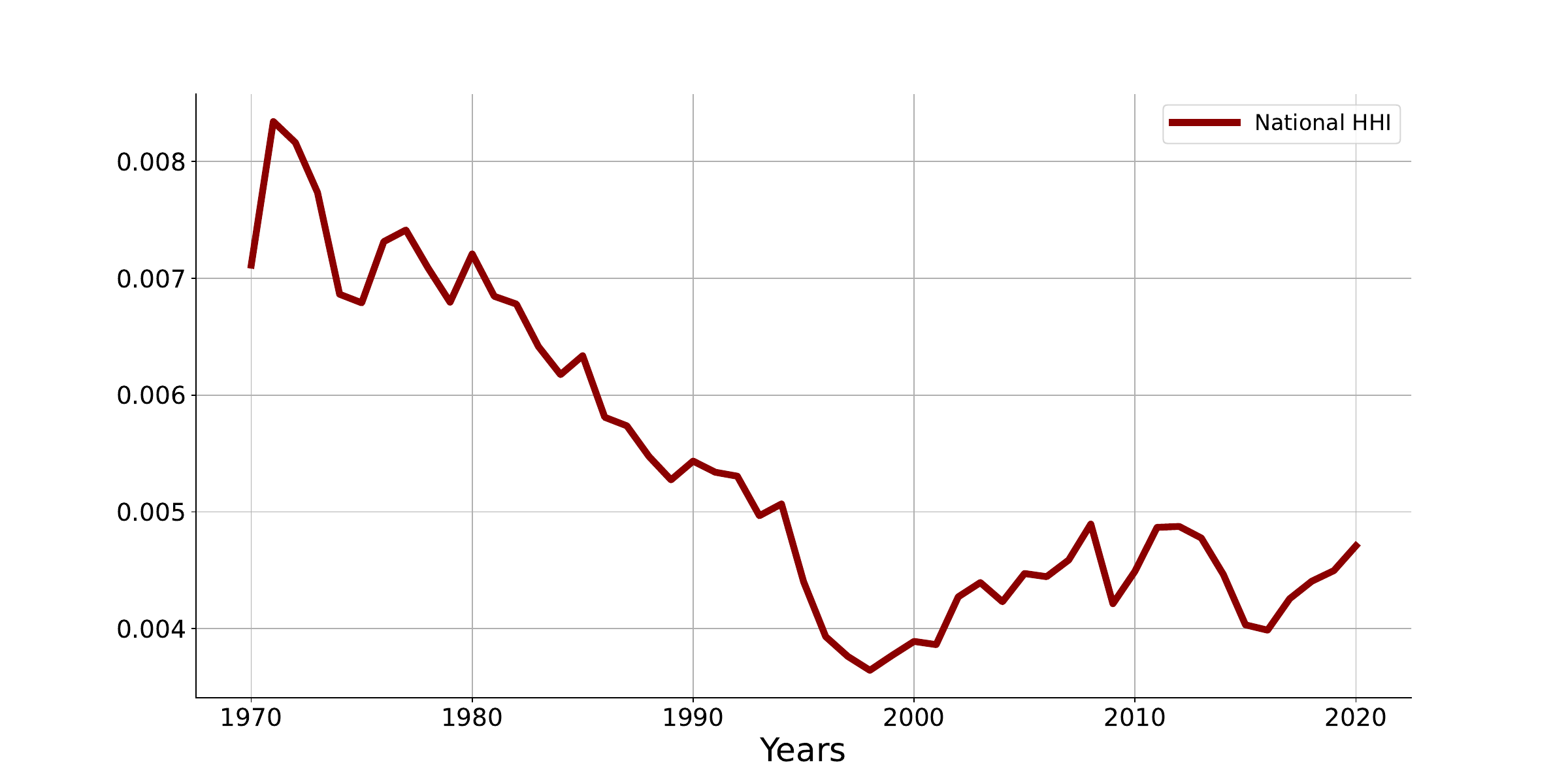}
    \label{fig:national_HHI}
    \begin{minipage}{1\textwidth}
        \scriptsize{\textbf{Figure Notes}. The Herfindahl-Hirschman Index (HHI) is the sum of squared sales weights. Data comes from US Compustat.}  
    \end{minipage}
\end{figure}
\newpage 

Figure \ref{fig:ind_HHI} shows the HHI for different US industries. While market concentration has increased in some industries, it has declined in the majority of industries: mining, manufacturing, transportation, IT, real estate, professional services, administration, education, arts \& entertainment, and hospitality. Consistently with \cite{smith2022evolution}, we find that market concentration, as measured by sales, has sharply increased since the 1980s in the retail sector, which appears to be the exception to the norm. \par 
\begin{figure}[h!]
    \centering
    \caption{Industry-level HHI in the United States, 1970--2020}
    \vspace{-10pt}
    \includegraphics[scale=0.40]{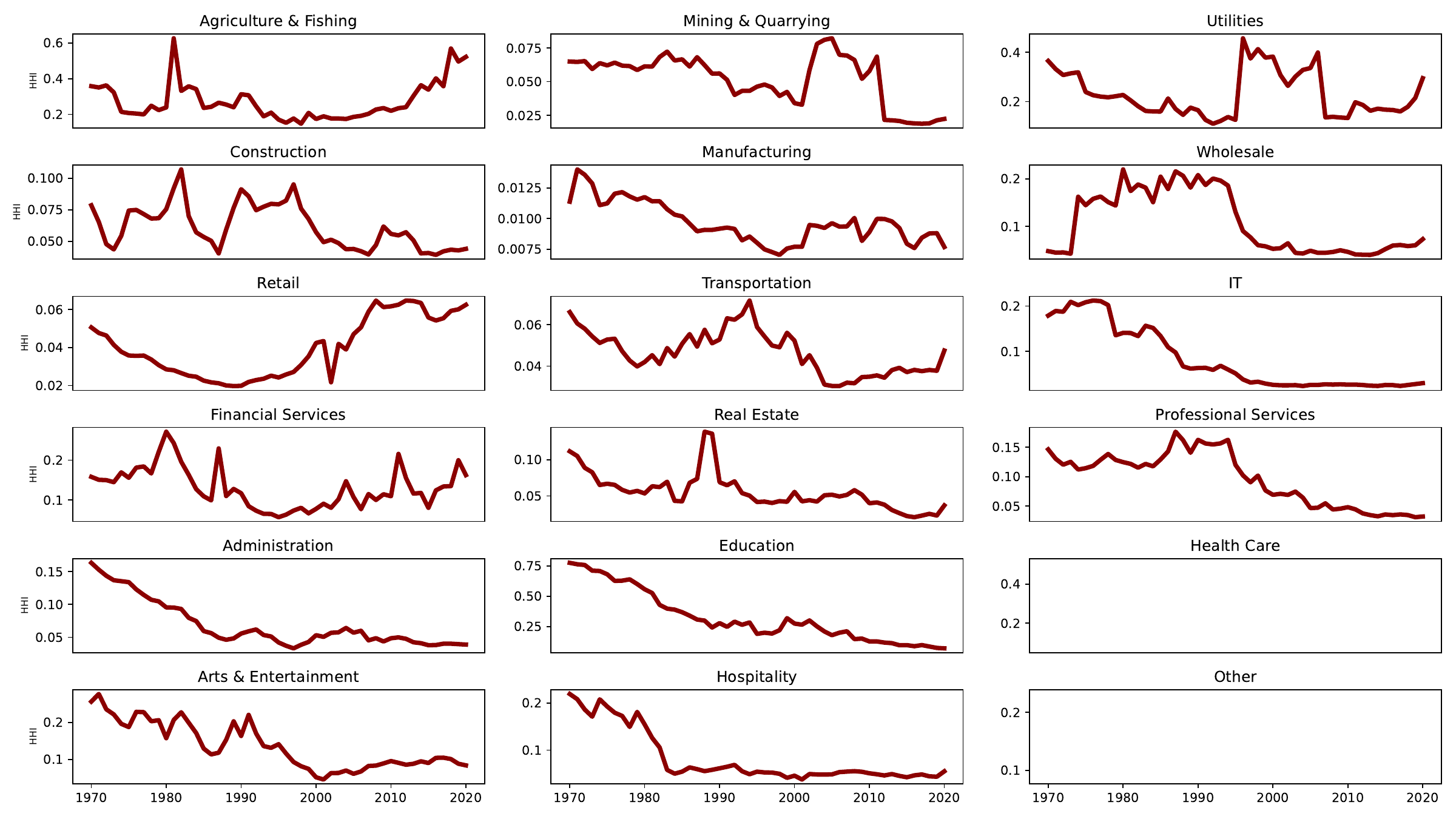}
    \label{fig:ind_HHI}
\end{figure}

\newpage 
\subsection{Income Shares}\label{app:income_shares}
In this section, we discuss the implications of our profit share estimates for the capital share. We compute the labor share from the National Income and Product Income Accounts following \cite{karabarbounis2023perspectives}. That is, we compute the labor share as compensation of employees divided by GDP minus taxes plus subsidies less proprietor's income.\footnote{Implicit in this calculation is the assumption that proprietor's labor share is the same as that of employees.} Under the representativeness assumption of Compustat firms, we can compute the capital share as one minus the sum of the labor share and the micro-aggregated profit share. The results are displayed in Figure \ref{fig:income_shares}. \par 
\begin{figure}[h!]
    \centering
    \caption{Income Shares in the United States, 1970--2020.}
    \vspace{-10pt}
    \includegraphics[scale=0.8]{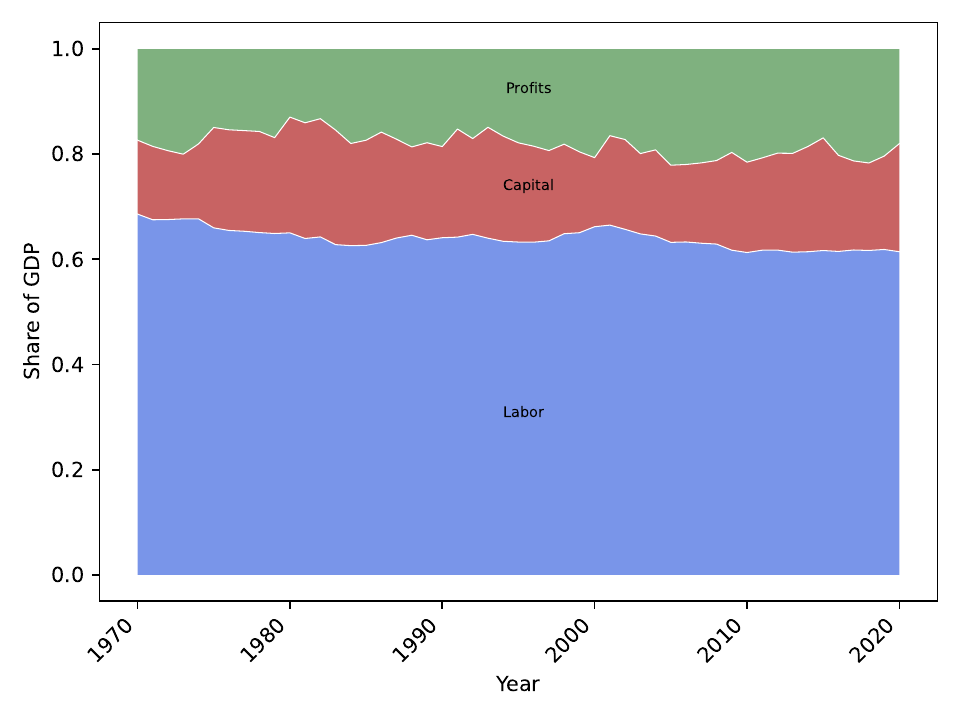}
    \label{fig:income_shares}
\end{figure}

The labor share has declined from approximately 69\% of GDP in 1970 to 62\% in 2020; that is, a decline of 7 percentage points, which mostly occurred from 2000 onwards. The 7 percentage-points decline in the labor share was absorbed by capital and profit shares, which have fluctuated substantially over the period. Because of the large volatility in capital- and profit shares, it is not clear how to allocate the decline in the labor share. Given the constancy of the profit share at around 18\% of GDP that we report for the entire sample period, we view the decline in the labor share as (mostly) driven by an increase in the capital share.\footnote{\cite{rognlie2016deciphering} argues that the capital share increased because of the housing sector.} Overall, despite the large volatility, income shares appear to be relatively stable over time, contrary to what others have suggested, and in line with the famous Kaldor facts. \par 

\newpage
\subsection{Robustness: No Intangible Capital}\label{app:no_itan}

\begin{figure}[h!]
    \centering
    \caption{Aggregate Markup, Aggregate Returns to Scale, and its Covariance}
    \includegraphics[scale=0.35]{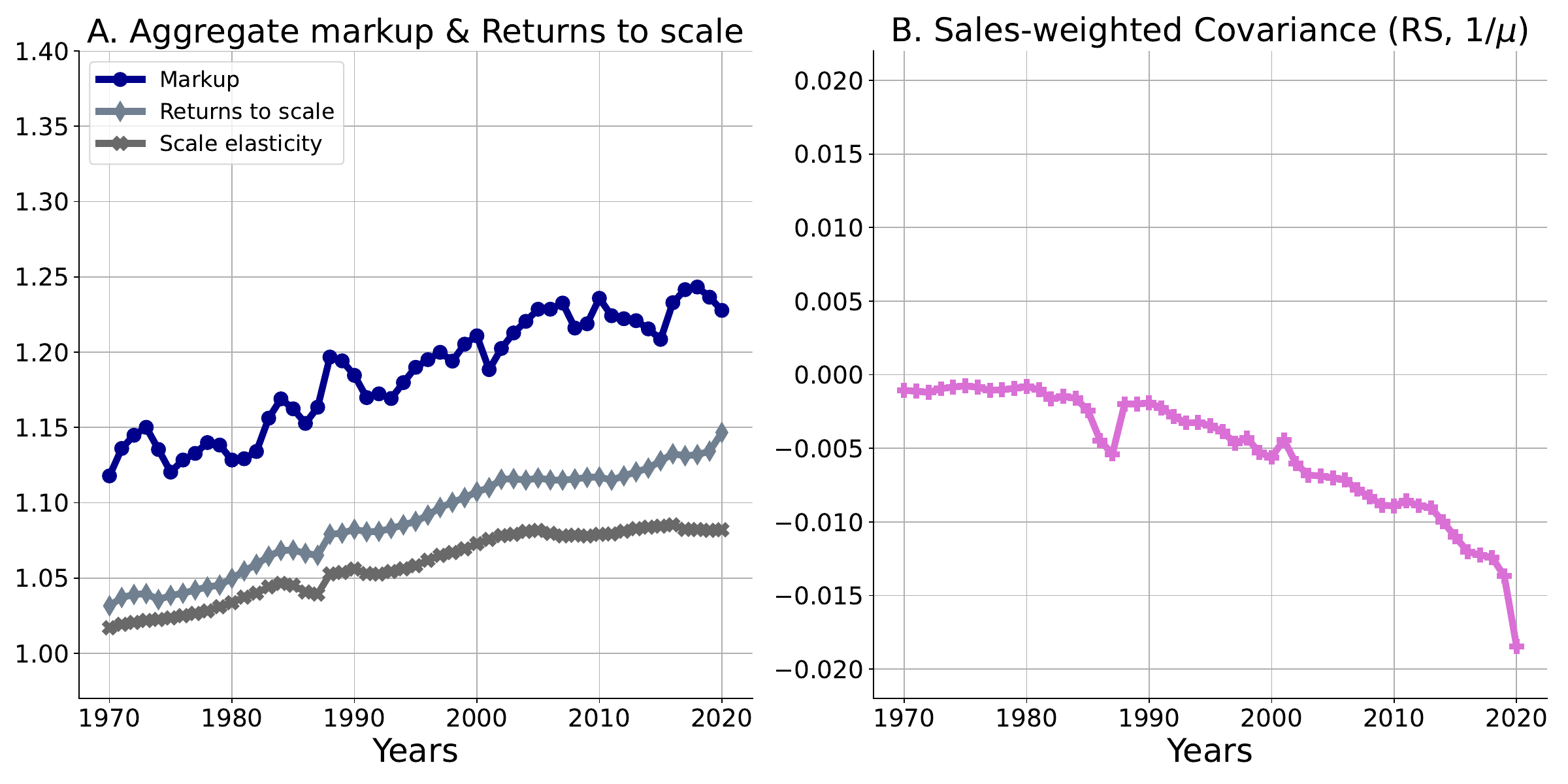}
    \label{fig:MP_nointan}
    \begin{minipage}{1\textwidth}
        \scriptsize{\textbf{Figure Notes}. Firm-level data comes from US Compustat, 1970--2020. In contrast to our benchmark results, where the capital stock is a composite of physical- and intangible capital, here we only consider the stock of physical capital. }  
    \end{minipage}
\end{figure}

\begin{figure}[h!]
    \centering
    \caption{The Micro--Aggregated Profit Share in the United States, 1970--2020.}
    \includegraphics[scale=0.35]{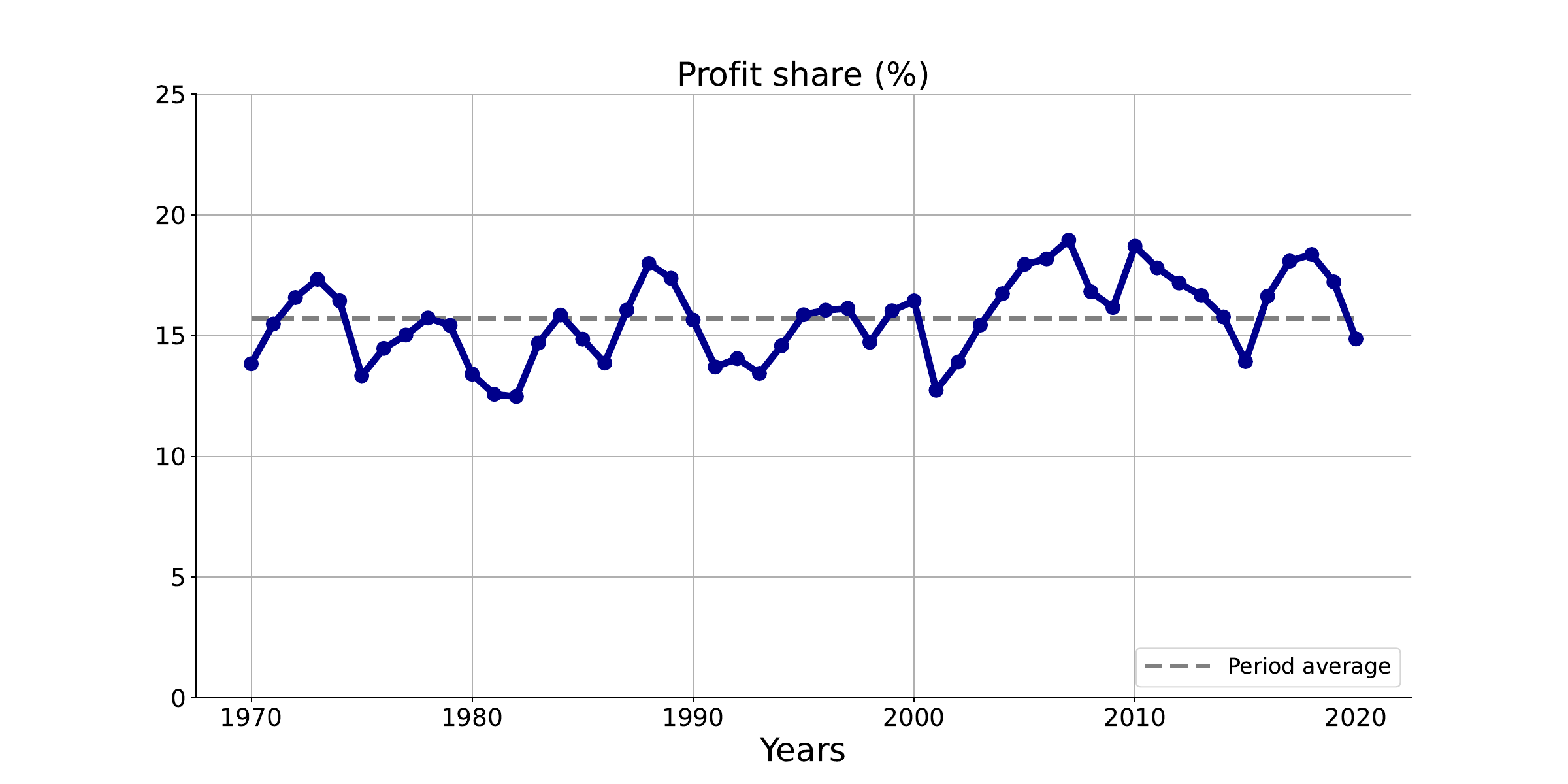}
    \label{fig:profshare_nointan}
    \begin{minipage}{1\textwidth}
        \scriptsize{\textbf{Figure Notes}. Firm-level data comes from US Compustat, 1970--2020. In contrast to our benchmark results, where the capital stock is a composite of physical- and intangible capital, here we only consider the stock of physical capital. }  
    \end{minipage}
\end{figure}

\end{document}